\newcommand{\tts}{\tt \small}
\newcommand{\Diff}{${\mathcal R}$}
\newcommand{\If}{\leftarrow}
\newcommand{\eq}{\!=\!}
\newcommand{\vars}{\mathit{vars}}
\newcommand{\repl}{\! \Rightarrow \!}
\newcommand{\Embedded}
{\mbox{\hspace*{.5mm}\raisebox{-1.1mm}{$\sim$}\hspace*{-2.5mm}\raisebox{.2mm}{\large{$\triangleleft$}}\hspace*{.9mm}}}
\newcommand{\Down}{\raisebox{-.5mm}{\rule{0mm}{1mm}}}
\title{Satisfiability of Constrained Horn Clauses on Algebraic Data Types:
\\A Transformation-based Approach\,\thanks{This work has been
partially supported by the National Group of Computing Science (GNCS-INdAM).
This paper is an improved, extended version of a paper that
the authors have presented
at IJCAR 2020~\cite{De&20a} and also at the 35th Italian Conference on
Computational Logic (CILC 2020). }}
\author{
Emanuele De Angelis\,\inst{1}\,\orcidID{0000-0002-7319-8439}\and
Fabio~Fioravanti\,\inst{2}\,\orcidID{0000-0002-1268-7829} \and
Alberto~Pettorossi\,\inst{3,1}\,\orcidID{0000-0001-7858-4032}\and
Maurizio~Proietti\,\inst{1}\,\orcidID{0000-0003-3835-4931}
}
\institute{IASI-CNR, Rome, Italy,
	\email{\{emanuele.deangelis,maurizio.proietti\}@iasi.cnr.it}\\
	\and DEC, University of Chieti-Pescara, Italy,
	\email{fabio.fioravanti@unich.it}\\
	\and \!\!DICII,\,University\,of\,Rome\,Tor\,Vergata,\,Rome,\,Italy,
	\email{\!alberto\!.\!pettorossi@uniroma2\!.\!it}
}
\begin{document}
\maketitle

\thispagestyle{firststyle}

\begin{abstract}

We address the problem of checking %
the satisfiability of 
Constrained Horn Clauses (CHCs) defined on %
Algebraic Data Types (ADTs), 
such as lists and trees.
We propose a new technique for transforming CHCs defined on ADTs into CHCs 
where the arguments of the predicates have %
only basic types, such as integers and booleans.
Thus, our technique avoids, during satisfiability checking,
the explicit 
use of proof rules based on induction over the ADTs.
The main extension over previous techniques for ADT removal is 
a {new transformation rule,}
called {\it differential replacement}, which allows us to introduce
auxiliary predicates, whose definitions correspond to lemmas that are used
when making inductive proofs.
We present an algorithm that performs the automatic removal of ADTs
by applying the {new}
rule, together with the traditional folding/unfolding rules.
We prove that, under suitable hypotheses, the set of the transformed 
clauses is satisfiable if and only if
so is the set of the original clauses.
By an experimental evaluation, 
we show that the use of the {new} rule 
significantly improves the effectiveness of ADT removal.
We also show
that our approach is competitive
with respect to tools that extend CHC solvers
with the use of inductive rules.

\end{abstract}
\vspace*{-8mm}
\section{Introduction}
\label{sec:Intro}
\label{intro}

{\it Constrained Horn Clauses} (CHCs) constitute a fragment of
the first order predicate calculus, where the Horn clause syntax %
is extended 
by allowing {\it constraints} on specific domains
to occur in clause premises. CHCs have gained popularity as 
a logical formalism %
well suited for automatic verification
of programs~\cite{Bj&15}. %
Indeed, 
many verification problems can be reduced to the satisfiability 
problem for CHCs.

Satisfiability of CHCs is a particular case of
{\it Satisfiability Modulo Theories} (SMT), understood here as the general 
problem of determining the satisfiability of (possibly quantified) 
first order formulas where the interpretation of some
function and predicate symbols is defined in
a given constraint theory (also called {\it background theory})~\cite{BaT18}.
Recent advances in the field have led to the development of a number of
very powerful SMT {\it solvers}
(and, in particular, CHC solvers), which
aim at solving satisfiability problems with respect to a large
variety of constraint theories.
Among the SMT solvers, we would like to mention
CVC4~\cite{CVC4}, MathSAT~\cite{MaS13}, and Z3~\cite{DeB08},
and among solvers with specialized engines for CHCs,
we recall Eldarica~\cite{HoR18}, HSF~\cite{Gr&12}, RAHFT~\cite{Ka&16}, VeriMAP~\cite{De&14b}, and Z3-SPACER~\cite{Ko&13}.

{Even if SMT algorithms for unrestricted first order formulas 
suffer from incompleteness limitations due to general undecidability results,} 
most of the above mentioned tools work well {in practice} {when acting}
on constraint theories, such as 
Booleans, Uninterpreted Function Symbols, Linear Integer or Real Arithmetic, Bit Vectors, and
Arrays.
However, when formulas contain {universally quantified variables}
{ranging over} inductively defined {\it algebraic data types} (ADTs),
such as lists and trees, then 
{the SMT/CHC solvers often show poor results, as they do not incorporate  
induction principles for %
the ADT in use.}

To {mitigate} this difficulty, some SMT/CHC solvers 
have been enhanced by incorporating suitable %
induction principles~\cite{ReK15,Un&17,Ya&19},
similarly to what has been done in automated theorem provers~\cite{Bun01}.
The most creative step which is needed when extending SMT solving 
with induction,
is the generation of the auxiliary
lemmas that are required for proving the main conjecture.

An alternative approach, proposed in the context of CHCs~\cite{De&18a}, 
consists in transforming a given set of clauses 
into a new set: (i) where all ADT terms are removed 
{(without introducing new function symbols)},
and (ii)~whose satisfiability implies the satisfiability 
of the original set of clauses.
This approach has the advantage of separating the concern of dealing
with ADTs (which is considered
at transformation time) from the concern of dealing with
simpler, non-inductive constraint theories (which is 
{considered}
at solving time), thus
avoiding the complex interaction between inductive reasoning and 
constraint solving.
It has been shown~\cite{De&18a} that the transformational approach 
compares well with induction-based solvers %
if lemmas are not needed in the proofs.
However, in some satisfiability problems, if suitable lemmas are 
not provided, the transformation fails to remove the ADT terms.

The main contributions of this paper are as follows.

\noindent \hangindent=6mm 
(1) We extend the transformational approach by proposing a new rule, called 
{\it differential replacement}, based on the introduction of appropriate %
{\it difference predicates},
which play a role similar to that of lemmas in inductive proofs.
We prove that the combined use of the fold/unfold transformation rules~\cite{EtG96}
and the differential replacement rule is {\it sound}, that is,
if the transformed set of clauses is satisfiable, then 
the original set of clauses is satisfiable. 
We also study some sufficient conditions that guarantee that the use of those rules is {{\it sound and complete}} %
in the sense that the transformed set of clauses 
is satisfiable if and only if the original set of clauses is satisfiable.
	
\noindent \hangindent=6mm 
(2) We develop a transformation algorithm that removes ADTs from CHCs 
by applying the fold/unfold and the differential replacement
rules in a fully automatic way. 
	
\noindent \hangindent=6mm 
(3)~{Due to the undecidability of the satisfiability problem for CHCs, in general 
our technique for ADT removal may not terminate. Thus, we evaluate  
its effectiveness from an experimental
point of view and, in particular, we discuss the results obtained by the 
implementation of our technique in a tool, called {\sc AdtRem}}. 
We consider a set of CHC satisfiability problems on ADTs
taken from various benchmarks which are used for evaluating inductive theorem provers.
The experiments show that {\sc AdtRem} is competitive
with respect to Reynolds and Kuncak's tool that augments the CVC4 solver with  inductive reasoning~\cite{ReK15}.

\smallskip

\noindent
The paper is structured as follows. 
In Section~\ref{sec:IntroExample} we briefly present an introductory, 
motivating example.
In Section~\ref{sec:CHCs} we recall some
basic notions about CHCs.
In Section~\ref{sec:TransfRules} we introduce the rules used
in our transformation technique and, in particular, the novel 
differential replacement rule, and we show the soundness of the 
rules we consider. 
In Section~\ref{sec:Strategy} we present a transformation algorithm,
called~\Diff,
that uses the transformation rules for removing ADTs from sets of CHCs.
In Section~\ref{sec:Completeness} we show that, under suitable hypotheses, the 
transformation rules we use are also complete.
In Section~\ref{sec:Experiments} we illustrate the {\sc AdtRem} tool
and we present the experimental results we have obtained.
Finally, in Section~\ref{sec:RelConcl} we discuss the related work and
make a few concluding remarks.

\section{A Motivating Example}
\label{sec:IntroExample}
Let us consider the following functional program {\it Reverse},
which we write using the OCaml syntax~\cite{Le&17}:

\vspace*{-1.5mm}
{\small
\begin{verbatim}
  type list = Nil | Cons of int * list;;
  let rec append l ys = match l with
    | Nil -> ys          | Cons(x,xs) -> Cons(x,(append xs ys));;
  let rec snoc l y = match l with
    | Nil -> Cons(y,Nil) | Cons(x,xs) -> Cons(x,snoc xs y);;
  let rec reverse l = match l with
    | Nil -> Nil         | Cons(x,xs) -> snoc (reverse xs) x;;
  let rec len l = match l with
    | Nil -> 0           | Cons(x,xs) -> 1 + len xs;;
\end{verbatim}
}

\vspace*{-1.5mm}

\noindent
The  functions {\tts append}, {\tts reverse}, and {\tts len}
compute list concatenation, list reversal,
and list length, respectively.
The function {\tts snoc}, given a list~{\tts l} and an element~{\tts y},
returns the list obtained by inserting~{\tts y} at the end of {\tts l}.

Suppose we want to prove the following property concerning those functions:~

\vspace{.5mm}

{\small
$\mathtt{\forall}$
{\tt xs,ys.\ len\,(reverse\,(append\ xs\ ys))\ =\ (len\ xs)\,+\,(len\ ys)}
\hfill Property $(1)$\hspace*{1mm}
}

\vspace{.5mm}

\noindent
This property follows from the facts that: (i)~by appending a
list {\tts xs} of length~{\tts n0} and a list~{\tts ys} of length~{\tts n1},
we get a list of length {\tts n0\,+\,n1}, and (ii)~by reversing a list,
we get a list with the same length. In the program {\it Reverse} we have
assumed that the elements of the lists are integers, but Property~(1)
holds %
independently of the type of the elements of
the lists.
Inductive theorem provers construct a proof of
Property~(1) by induction on the structure of the list {\tts{l}}, by assuming
the knowledge of the following lemma:

\vspace{.5mm}
{\small
	$\mathtt{\forall}$
	{\tt x,xs. len (snoc xs x) = (len xs) + 1}
	\hfill Lemma $(2)$\hspace*{1mm}
}

\vspace{.5mm}

\noindent
which states that, given a list {\tts xs} of length {\tts n} and
an element {\tts x}, {\tts snoc xs x} returns a list of length {\tts n+1}.

The approach we follow in this paper avoids both the explicit
use of induction principles and the %
knowledge of {\it ad hoc} lemmas.
First, we consider the translation of Property~($1$) into a set of
constrained Horn clauses,
where, for every function~$f$ defined by a given functional program,
the atom $f(X,Y)$ is the translation of
`$f(X)$ evaluates to $Y$ in the {\em call-by-value}
semantics'~\cite{De&18a,Un&17}.
(Automated translation techniques have also been proposed for imperative languages 
with functions~\cite{De&17b,Gr&12}.)
We get
set {\it RevCls} of the following clauses\footnote{In the examples,
	we use Prolog syntax for writing clauses, instead of the more verbose
	SMT-LIB syntax. The predicates {\tt =\textbackslash=} (different from), {\tt =} (equal to), {\tt <} (less-than),
	{\tt >=}~(greater-than-or-equal-to)
	denote constraints between integers.
}:

\vspace{-1mm}

{\small
	\begin{verbatim}
	1. false :- N2=\=N0+N1, append(Xs,Ys,Zs), reverse(Zs,Rs),
	            len(Xs,N0), len(Ys,N1), len(Rs,N2).
	2. append([],Ys,Ys).
	3. append([X|Xs],Ys,[X|Zs]) :- append(Xs,Ys,Zs).
	4. reverse([],[]).
	5. reverse([X|Xs],Rs) :- reverse(Xs,Ts), snoc(Ts,X,Rs).
	6. snoc([],Y,[Y]).
	7. snoc([X|Xs],Y,[X|Zs]) :-  snoc(Xs,Y,Zs).
	8. len([],N) :- N=0.
	9. len([X|Xs],N1) :- N1=N0+1, len(Xs,N0).
	\end{verbatim}
}

\vspace{-1mm}

\noindent
{\it RevCls} is satisfiable if and only if
Property~$(1)$ holds.
However, state-of-the-art CHC solvers, such as Eldarica or Z3-SPACER,
fail to prove the satisfiability of the set {\it RevCls} of clauses,
because those solvers do not incorporate any induction principle on lists.

To overcome this difficulty, we may apply the transformational
approach based on the fold/unfold rules~\cite{De&18a}, whose objective is to
transform a given set of clauses into a new set without occurrences of list variables. 
Then, the satisfiability of the derived set of clauses can be checked by
using CHC solvers based on the theory of Linear Integer Arithmetic ({\textit{LIA}}) only.
The soundness of the transformation rules ensures that the satisfiability
of the transformed clauses
implies the satisfiability of the original ones.

In the transformational approach the fold/unfold  rules are applied 
according to a given algorithm and their ability of eliminating ADTs (and lists, 
in particular) very much depends on the  algorithm used.

The {\em Elimination Algorithm}, proposed in previous work~\cite{De&18a}, 
allows the removal of ADT variables in many
non-trivial examples.
However, that algorithm is not successful in our case here 
because it is not able to discover  auxiliary properties,
such as Lemma~$(2)$ in our case, which are often needed during the transformation.
A similar limitation also applies to some tools that extend SMT solvers
with induction~\cite{ReK15,Un&17,Ya&19}. Indeed,
 those tools sometimes fail to discover 
the suitable lemmas (such as Lemma~$(2)$) that are needed for 
the inductive proofs.

The new ADT removal algorithm $\mathcal R$,
which we present in  this
paper (see Section~\ref{sec:Strategy}), extends the Elimination Algorithm by providing a technique for the automatic invention of
predicates that correspond to the suitable lemmas needed for
eliminating ADTs from sets of CHCs.
Indeed, when applied to the set {\it RevCls} of clauses, Algorithm~$\mathcal R$ introduces three new predicates
defined, respectively, by the following clauses:

\vspace{-1mm}

{\small
\begin{verbatim}
D1. new1(N0,N1,N2) :- append(Xs,Ys,Zs), reverse(Zs,Rs), len(Xs,N0),
                      len(Ys,N1), len(Rs,N2).
D2. new2(N1,N2) :- reverse(Zs,Rs), len(Zs,N1), len(Rs,N2).
D3. diff(X,N2,N21) :- snoc(Rs,X,R1s), len(R1s,N21), len(Rs,N2).
\end{verbatim}
}

\vspace{-1mm}

\noindent
Predicate {\tts new1} is defined by taking the conjunction of the atoms
occurring in the body of clause~{\tts 1}.
Predicate {\tts new2} is defined from the body of a clause
derived by unfolding clause~{\tts D1} with respect to the atom
{\tts append} (using clause~{\tts 2}).
The definition of predicate {\tts diff} is based on a more complex
mechanism as it is derived by matching clause~{\tts D1} against
the following clause {\tts D1$^{\textstyle *}$} 
obtained by unfolding clause~{\tts D1} %
with respect to the atoms {\tts append} (using clause~{\tts 3}), 
{\tts reverse}, and~{\tts len}:

\vspace{1mm}
\noindent
\makebox[42mm][l]{{\tts D1$^{\textstyle *}$\!\!. new1(N01,N1,N21) :- }}{\tts N01=N0+1, append(Xs,Ys,Zs), reverse(Zs,Rs),}

\noindent
\hspace{42mm}{\tts len(Xs,N0), len(Ys,N1), snoc(Rs,X,R1s),}

\noindent
\hspace{42mm}{\tts len(R1s,N21).}

\vspace{1mm}
\noindent
The definition of the predicate {\tts diff}, given by clause~{\tts D3}, comes from the mismatch 
between the bodies of  
clauses~{\tts D1} and~{\tts D1$^{\textstyle *}$} and,
for this reason, that predicate is called a {\em difference predicate}.
Indeed, the body of clause~{\tts D3} is made out of: (i)~the atoms
{\tts snoc(Rs,X,R1s)} and {\tts len(R1s,N21)},
which occur in clause~{\tts D1$^{\textstyle *}$} and do not occur in clause~{\tts D1},
and (ii)~the atom {\tts len(Rs,N2)},
which occurs in clause~{\tts D1} and does not occur in clause~{\tts D1$^{\textstyle *}$}.

In Section~\ref{sec:Strategy}, we will provide the formal definition of
Algorithm $\mathcal R$. We will also revisit the {\it Reverse} example
and we will give a detailed account on how the definitions of the 
predicates {\tts new1},\! {\tts new2},
and {\tts diff} can be introduced in a fully automatic way.

The transformation of {\it RevCls} together with the additional clauses
{\tts D1, D2,} and {\tts D3}, is now done according to
a routine application of the fold/unfold rules.
Indeed, we get the following final set \!\textit{TransfRevCls} of clauses
without list arguments (the numbering of the clauses refers to
the detailed trasformation shown in Section~\ref{sec:Strategy}):

\vspace{-1mm}

{\small
	\begin{verbatim}
	10. false :- N2=\=N0+N1, new1(N0,N1,N2).
	15. new1(N0,N1,N2) :- N0=0, new2(N1,N2).
	17. new1(N0,N1,N2) :- N0=N+1, new1(N,N1,M), diff(X,M,N2).
	18. new2(M,N) :- M=0, N=0.
	19. new2(M1,N1) :- M1=M+1, new2(M,N), diff(X,N,N1).
	20. diff(X,N0,N1) :- N0=0, N1=1.
	21. diff(X,N0,N1) :- N0=N+1, N1=M+1, diff(X,N,M).
	\end{verbatim}
}

\vspace{-1mm}

The Eldarica CHC solver proves the satisfiability of \textit{TransfRevCls} by
computing the following  {\textit{LIA}} model,
which we write in a Prolog-like syntax as a set of constrained facts:

\vspace{-2mm}

{\small
	\begin{verbatim}
	new1(N0,N1,N2) :- N2=N0+N1, N0>=0, N1>=0, N2>=0.
	new2(M,N) :- M=N, M>=0, N>=0.
	diff(X,N2,N21) :- N21=N2+1, N2>=0.
	\end{verbatim}
}

\vspace*{-2mm}

Note that, if in clause~{\tts D3}  we replace the atom
{\tts diff(N2,X,N21)} by its model computed by Eldarica, namely the constraint `{\tts N21=N2+1, N2>=0}',
we get the following formula:

\smallskip

\noindent
{\small
	$\mathtt{\forall}\!$ {\tt Rs,X,R1s,N21\!,N2.\,snoc(Rs,X,R1s)\!,\,len(R1s\!,N21)\!,\,len(Rs\!,N2)} $\mathtt{\rightarrow}$ {\tt N21=N2+1,N2>=0}
}

\smallskip

\noindent
which is equivalent to  Lemma~$(2)$.
Thus, in this case, the introduction of the difference predicate
performed by Algorithm~$\mathcal R$, can be
viewed as a way of automatically introducing the
lemma needed for constructing the inductive proof of 
Property (1).

\section{Constrained Horn Clauses}
\label{sec:CHCs}
In this section we recall some basic notions about CHCs.
Let \textit{LIA} be the theory of linear integer arithmetic and
\textit{Bool} be the theory of boolean values.
A {\em constraint\/} is a quantifier-free formula of $\textit{LIA}\cup\textit{Bool\/}$.
Let~$\mathcal C$ denote the set of all constraints.
Let~$\mathcal L$ be a typed first order language with equality~\cite{End72}
which includes the language of $\textit{LIA}\cup\textit{Bool\/}$.
Let $\textit{Pred}$ be a set of predicate
symbols in $\mathcal L$ not occurring in the language of $\textit{LIA}\cup\textit{Bool\/}$.

The integer and boolean types are said to be %
{\it basic types}.
Here, for reasons of simplicity, we do not consider other basic types,
such as real numbers, arrays, and bit-vectors, which are
usually supported by SMT solvers~\cite{CVC4,DeB08,HoR18}.
The \mbox{non-basic} types are collectively called
{\it algebraic data types} (ADTs), which %
are specified  by suitable data-type declarations such as the
{\small{\tt declare-datatypes}} declarations adopted by
SMT solvers.

An {\it atom} is a formula of the form $p(t_{1},\ldots,t_{m})$,
where~$p$ is a typed predicate symbol in $\textit{Pred}$, and
$t_{1},\ldots,t_{m}$ are typed terms constructed out of individual
variables, individual constants,  and function symbols.
A~{\it constrained Horn clause}  (or a CHC, or simply, a {\it clause}) is
an implication of the form
$H\leftarrow c, B$ (for clauses we use the logic programming notation, where
comma denotes conjunction). The conclusion (or {\it head\/}) $H$
is either an atom or \textit{false},
the premise (or {\it body\/}) is the conjunction of
a constraint  $c\!\in\!\mathcal{C}$, and a (possibly empty) conjunction~$B$ of atoms.
If the head $H$ of a clause is an atom of the form $p(t_{1},\ldots,t_{n})$, the predicate $p$
is said to be a {\it head predicate}.
A clause whose head is an atom is called a {\it definite clause},
and a clause whose head is {\it false} is called a {\it  goal\/}.

We assume  that, for every atom $A$ occurring in a clause,
(i)~each term of basic type occurring in $A$ is a variable, and
(ii)~no variable of basic type occurs in $A$ more than once.
For instance, the atom {\small{\tt p(X,[Y\,|\,T])}} may occur in a clause,
while by Condition~(i), the atoms {\small{\tt p(3,[Y\,|\,T])}}
and {\small{\tt p(X,[Y+Z\,|\,T])}} may not.
Conditions~(i) and (ii) on atoms can always be enforced
at the expense of introducing new variables subject to constraints in the body of the clause.
These conditions ensure that, when applying the  unfolding rule (see Section~\ref{sec:TransfRules}),
the unification of terms of basic type
can be delegated to constraint solving.

We assume that all variables in a clause are universally quantified in front, and thus
we can freely rename them.
Clause $C$ is said to be a {\it variant}
of clause $D$ if $C$ can be obtained from $D$ by renaming variables
and rearranging the order of the atoms in its body.
Given a term~$t$, by ${\it vars}(t)$ we denote the set of all
variables occurring in $t$.
Similarly, for the set of all variables occurring in a formula or a set of formulas.
Given a formula $\varphi$ in ${\mathcal L}$, we denote by
$\forall (\varphi)$ its {\it universal closure}.

Let~$\mathbb D$ be the usual interpretation for the symbols in
$\textit{LIA}\cup\textit{Bool\/}$, and let a~\mbox{\it $\mathbb D$-interpretation} be an interpretation of~$\mathcal L$
that  agrees with
$\mathbb D$, for all symbols occurring in~$\textit{LIA}
\cup\textit{Bool\/}$.
A $\mathbb D$-model of a clause~$C$ is
a $\mathbb D$-interpretation that makes $C$ true.
A $\mathbb D$-model of a set $P$ of clauses is a
$\mathbb D$-model of every clause in~$P$. The reference to the
interpretation $\mathbb D$ will be omitted when it is irrelevant or
understood from the context.

A set $P$ of CHCs is said to be {\it $\mathbb D$-satisfiable}
(or {\it satisfiable}, for short) if
it has a \mbox{$\mathbb D$-model,} and it is said to be
{\em $\mathbb D$-unsatisfiable} (or {\it unsatisfiable},
for short), otherwise.
Given two {${\mathbb D}$-interpretations $\mathbb I$ and $\mathbb J,$
we say that $\mathbb I$ is {\em included} in $\mathbb J$
if for all ground atoms~$A$, $\mathbb I\models A$ implies $\mathbb J\models A$.}
Every set $P$ of definite  clauses is {satisfiable} and
has a {\it least}
(with respect {to inclusion}) ${\mathbb D}$-model,
denoted $M(P)$, which is equal to the set of all ground atoms that are true in
all ${\mathbb D}$-models of~$P$~\cite{JaM94}.
If $P$ is any set of constrained Horn clauses and
$Q$ is the set of the goals in~$P$, then we define
$\textit{Definite}(P)$ to be the set $P \setminus Q$.
It is the case that~$P$ is satisfiable
if and only if $M(\textit{Definite}(P))\models Q$.
We will often use a variable as an argument of a predicate
to actually denote a tuple of variables. For instance,
we will write $p(X,Y)$, instead of
$p(X_{1},\ldots,X_{m},Y_{1},\ldots,Y_{n})$,
whenever the %
values of \mbox{$m~(\geq\! 0)$} and \mbox{$n~(\geq\! 0)$} are not relevant.
Whenever the order of the variables is not relevant,
we will feel free to identify %
tuples of distinct variables
with finite sets.

We will also extend to finite tuples the operations and relations
which are usually defined on  sets.
Given two tuples $X$ and~$Y$ of distinct variables, %
(i)~their {\it union} $X\cup Y$ is obtained
by concatenating them and removing all duplicated occurrences of variables,
(ii)~their {\it intersection} $X\cap Y$ is obtained
by removing from~$X$ the variables which do not occur in~$Y$,
(iii)~their {\it difference} $X \!\setminus \!Y$ is obtained
by removing from $X$ the variables which occur in $Y$, and
(iv)~$X\!\subseteq\! Y$ holds if every variables of $X$ occurs in $Y$.
For all $m\!\geq\!0$, equality of $m$-tuples of terms is defined as follows:
$(u_{1},\ldots,\!u_{m}) = (v_{1},\!\ldots,\!v_{m})$
iff $\bigwedge_{i=1}^{m} (u_{i}\!=\!v_{i})$. The empty tuple~$()$ is identified with the empty set $\emptyset$.

By $A(X,Y)$, where $X$ and $Y$ are disjoint tuples of {distinct}
variables, we denote an atom $A$ such that $\mathit{vars}(A) = X\cup Y$.
Given the atom $A(X,Y)$, if $Y$ is the $k$-tuple $(Y_{1},\ldots,Y_{k})$
of distinct variables, and $Z$ is the $k$-tuple $(Z_{1},\ldots,Z_{k})$
of distinct variables not occurring in $X$, then
by $A(X,Z)$ we denote the atom obtained by replacing in $A(X,Y)$ the variable
$Y_{i}$ by $Z_{i}$, for $i\!=\!1,\ldots,k$.
The atom $A(X,Y)$
is said to be {\em functional from the input variables} $X$ {\em to the output variables} $Y$ {\em with respect to the set $P$ of definite clauses} if
\smallskip

$(\mathit{Funct})$ ~~$M(P) \models \forall X, Y, Z.\ A(X,Y) \wedge A(X,Z) ~\rightarrow~ Y\!=\!Z$

\smallskip
\noindent
The atom $A(X,Y)$
is said to be {\em total from the input variables} $X$ {\em to the
output variables}~$Y$ {\em with respect to the set $P$ of definite clauses} if

\smallskip

$(\mathit{Total})$  ~~$M(P) \models \forall X \exists Y.\ A(X,Y)$

\smallskip
\noindent
If $A(X,Y)$ is a total, functional atom from $X$ to $Y$,
we will also write $A(X;Y)$.
For instance, with respect to the definite
clauses {\tts 2}--{\tts 7} shown in
Section~\ref{sec:IntroExample}, we have that:
(i)~{\tts append(Xs,Ys,Zs)} is a total, functional atom from
the pair {\tts (Xs,Ys)} of input variables
to the output variable {\tts Zs}, and (ii)~{\tts reverse(Zs,Rs)} is a total,
functional atom from {the input variable}~{\tts Zs} to {the output variable}~{\tts Rs}.

When referring to the notions of functionality and totality,
we will feel free not to mention the set $P$ of definite clauses,
if it is understood from the context.

Note that, in our application to program verification,
the initial set of clauses is obtained
by translating a terminating functional program into CHCs,
and hence the functionality Property (\textit{Funct}) and
the totality Property  (\textit{Total})
hold by construction for any
atom $A(X,Y)$ that translates a function defined by that program.

\indent
We can extend the functionality and totality notions
from atoms to conjunctions of atoms as follows.
Let $F(X,Y)$ denote a conjunction $A_1, \ldots, A_n$
of~$n\,(\geq\!1)$ atoms, where $X$ and $Y$ are disjoint tuples
of distinct variables such that \mbox{$\mathit{vars}(\{A_1, \ldots,A_n\})
\!=\! X \cup Y\!$.}
Then, $F(X,Y)$ is said to be functional from~$X$ to $Y$ if Property~$(\mathit{Funct})$
holds for $F(X,Y)$ and $F(X,Z)$, instead of
$A(X,Y)$ and $A(X,Z)$, respectively. Similarly,
$F(X,Y)$ is said to be total from $X$ to $Y$ if Property~$(\mathit{Total})$ holds
for $F(X,Y)$, instead of  $A(X,Y)$.
If $F(X,Y)$ is a total, functional conjunction from $X$ to~$Y$,
we will also write $F(X;Y)$.

Now, let us consider a conjunction $F$ of $n\,(\geq\!1)$ total, functional atoms, which
(modulo reordering) is equal to `$A_1(X_1;Y_1), \ldots,A_n(X_n;Y_n)$',
where: (1)~the output (tuples of) variables $Y_{i}$'s are pairwise disjoint, and
(2)~for $i\!=\!1,\ldots,n$, $(\bigcup^i_{j=1} X_j)
\cap Y_i  = \emptyset$.
Then, $F$ is a total, functional conjunction from $X$ to $Y$,
where $Y\!=\!\bigcup^n_{i=1} Y_i$ and
$X\!=\!(\bigcup^n_{i=1} X_i)\!\setminus\!Y$, and hence it  can be denoted by $F(X;Y)$.
For instance,
the conjunction `{\tts append(Xs,\!Ys,Zs),\,reverse(Zs,Rs)}', whose predicates are
defined %
in Section~\ref{sec:IntroExample}, is
a total, functional conjunction from the pair {\tts (Xs,Ys)} of input variables to the pair {\tts (Zs,Rs)}
of output variables.

\section{Transformation Rules for Constrained Horn Clauses}
\label{sec:TransfRules}
In this section we present the rules that we
propose for transforming CHCs, and in particular,
for introducing difference predicates, and
we prove the soundness of those rules.

\subsection{The transformation rules}
\label{subsec:Rules}

First, we introduce the following notion of a {\it stratification} for a set of clauses. 
Let $\mathbb N$  denote the set of the natural numbers.
A \emph{level mapping} is a 
function~$\ell\!:\mathit{Pred}\!\rightarrow\!\mathbb{N}$. For every predicate $p$,
the natural number $\ell(p)$  is said to be the {\it level\/} of~$p$.
Level mappings are extended to atoms by stating that 
the level $\ell(A)$ of an atom $A$ is the 
level of its predicate symbol.
A clause \( H\leftarrow c, A_{1}, \ldots, A_{n}
\) is {\it stratified with respect to the level mapping}~$\ell$ if, 
for \( i\!=\!1,\ldots ,n \),
\(\ell (H)\geq \ell(A_i)\).
A set $P$ of CHCs is {\it stratified~with respect to $\ell$} if all clauses
of $P$ are stratified with respect to~$\ell$.
Clearly, for every set~$P$ of CHCs, there exists a level mapping~$\ell$ such that
$P$ is stratified with respect to $\ell$~\cite{Llo87}. 

A {\it transformation sequence from} $P_{0}$ {\it to} $P_{n}$
is a sequence 
$P_0 \Rightarrow P_1 \Rightarrow \ldots \Rightarrow P_n$ of sets of CHCs 
such that, for $i\!=\!0,\ldots,n\!-\!1,$ $P_{i+1}$ is derived from $P_i$, denoted
$P_{i} \Rightarrow P_{i+1}$, by
applying one of the following \mbox{rules~R1--R7}. We assume that the initial set
$P_0$ is stratified with respect to~a given level mapping~$\ell$.

\medskip

The Definition Rule allows us to introduce new predicate definitions.
\medskip
\hrule
\vspace*{1.5mm}
\noindent
{\bf(R1)~ Definition Rule.}  
Let $D$ be the clause $\textit{newp}(X_1,\ldots,X_k)\leftarrow 
c,A_1,\ldots\!,A_m$, where:
(1)~\textit{newp} is a predicate symbol in $\textit{Pred\/}$ not occurring in 
the sequence $P_0\Rightarrow P_1\Rightarrow\ldots\Rightarrow P_i$ constructed 
so far, \mbox{(2)~$c$ is a constraint,} 
(3)~the predicate symbols of $A_1,\ldots,A_m$
occur in $P_0$, and 
(4)~$(X_1,\ldots,X_k)\subseteq \mathit{vars}(\{c,A_1,\ldots,A_m\})$.

Then, by {\it definition} we get $P_{i+1}= P_i\cup \{D\}$. 
We define the level mapping $\ell$ of \textit{newp} to be equal 
to $\textit{max}\,\{\ell(A_i) \mid i=1,\ldots,m\}$.

\smallskip
\hrule

\medskip

For $j\!=\!0,\ldots, n$, by $\textit{Defs}_j$ 
we denote the set of clauses, 
called {\it definitions}, 
introduced by rule~R1 during the construction of the prefix 
$P_0\Rightarrow P_1\Rightarrow\ldots\Rightarrow P_j$ of the transformation sequence 
$P_0\Rightarrow P_1\Rightarrow\ldots\Rightarrow P_n$. 
Thus, $\textit{Defs}_0\!=\!\emptyset$, and for $j\!=\!0,\ldots, n,$ 
$\textit{Defs}_j\!\subseteq\!\textit{Defs}_{j+1}$.
Note that, by using rules R2--R7, one may replace 
a definition occurring in~$P_h$,
for some $0\!<\!h\!<\!n$, and hence it may happen 
that $\textit{Defs}_{k}\!\not\subseteq\!P_{k}$, 
for some $k$ such that $h\!<\!k\!\leq\!n$.

\medskip

The Unfolding Rule consists in performing a symbolic computation step.

\medskip
\hrule
\vspace*{1.5mm}
\noindent
{\bf (R2)~Unfolding Rule.} 
Let  $C$: $H\leftarrow c,G_L,A,G_R$ be a clause in $P_i$, where $A$ is an atom.
Without loss of generality, we assume that
$\mathit{vars}(C)\cap\mathit{vars}(P_0)=\emptyset$.
Let {\it Cls}: $\{K_{1}\leftarrow c_{1},
B_{1},~\ldots,~K_{m}\leftarrow c_{m}, B_{m}\}$, with $m\!\geq\!0$,
be the set of clauses in $P_0$,
such that: for $j=1,\ldots,m$,
(1)~there exists a most general unifier~$\vartheta_j$ of $A$ and
$K_j$, and {(2)~the conjunction of constraints $(c, c_{j})\vartheta_j$ is satisfiable.}
Let $\mathit{Unf}(C,A,P_0)$ be the set $\{(H\leftarrow  c, {c}_j,G_L, B_j, G_R) 
\vartheta_j \mid  j=1, \ldots, m\}$ of clauses.

Then, by {\it unfolding~$C$ with respect to $A$}, we derive the set 
$\mathit{Unf}(C,A,P_0)$ %
and we get $P_{i+1}= (P_i\setminus\{C\}) \cup \mathit{Unf}(C,A,P_0)$.
\smallskip
\hrule

\medskip

When we apply rule R2, we say that, for $j=1, \ldots, m,$ the atoms in the conjunction 
$B_j \vartheta_j$ are {\it derived} from $A$, and the atoms 
in the conjunction $(G_L, G_R) \vartheta_j$ are {\it inherited} from the corresponding atoms in the body of $C$.

\medskip

The Folding Rule is a special case of an inverse of the Unfolding Rule.
\nopagebreak

\medskip
\hrule
\vspace*{1.5mm}
\noindent
{\bf (R3)~Folding Rule.} 
Let $C$: $H\leftarrow c, G_L,Q,G_R$ be a clause in $P_i$, 
and let
$D$: $K \leftarrow d, B$ be a variant of a clause in $\textit{Defs}_i$.
Suppose that:
(1)~either $H$ is $\mathit{false}$ or \mbox{$\ell(H) \geq \ell(K)$,} and
(2)~there exists a substitution~$\vartheta$ such that~\mbox{$Q\!=\! B\vartheta $} and
$\mathbb D\models \forall(c \rightarrow d\vartheta)$.

Then,\,by \textit{folding \( C\)\,using definition\,\( D\)}, we derive clause 
\(E  \):~\( H\leftarrow c, G_{\!L}, K\vartheta, G_{\!R} \), and we get
\( P_{i+1}= (P_{i}\setminus\{C\})\cup \{E \} \).

\smallskip
\hrule

\medskip

The Clause Deletion Rule removes a clause with an unsatisfiable constraint in its body.

\medskip
\hrule
\vspace*{1.5mm}
\noindent
{\bf (R4)~Clause Deletion Rule.} 
Let  $C$: $H\leftarrow c,G$ be a clause in $P_i$ such that the constraint~$c$ is 
unsatisfiable. 

Then, by {\it clause deletion} we get
$P_{i+1} = P_i \setminus\{C\}$.
\smallskip
\hrule

\medskip

The Functionality Rule rewrites a functional conjunction of atoms by using Property ({\it Funct}).

\medskip
\hrule
\vspace*{1.5mm}
\noindent
{\bf (R5)~Functionality Rule.} 
Let $C$: $H\leftarrow c, G_L,F(X,Y),F(X,Z), G_R$ be a clause in~$P_i$,
where $F(X,Y)$ is a functional conjunction of atoms {from~$X$ to~$Y$} with respect to
$\mathit{Definite}(P_0) \cup \mathit{Defs}_i$.

Then, by \textit{functionality}, from~$C$ we derive~$D$: $H\!\leftarrow c, Y\!\!=\!Z, G_{\!L},F(X,\!Y),G_{\!R}$, and
we get \( P_{i+1}= (P_{i}\setminus\{C\})\cup \{D \} \).
\smallskip
\hrule

\medskip

The Totality Rule rewrites a functional conjunction of atoms by using Property~({\it Total\/}).

\medskip

\hrule
\vspace*{1.5mm}
\noindent
{\bf (R6)~Totality Rule.} 
Let $C$: $H\leftarrow c,  G_L,F(X,Y),G_R$ be a clause in $P_i$
such that $Y \cap \mathit{vars}(H\leftarrow c,  G_L,G_R) = \emptyset$ and
$F(X,Y)$ is a total conjunction of atoms {from~$X$ to~$Y$} with respect to
$\mathit{Definite}(P_0)\cup \mathit{Defs}_i$.

Then, by \textit{totality}, from~$C$ we derive clause~$D$\,: $H\leftarrow c, G_L,G_R$,
and we get \( P_{i+1}= (P_{i}\setminus\{C\})\cup \{D \} \).
\smallskip
\hrule

\medskip

As mentioned above,
the functionality and  totality properties hold by construction,
and we do not need to prove them when applying rules~R5 and~R6.

\medskip

The Differential Replacement Rule replaces a conjunction of atoms 
by a new conjunction together with an atom defining
a relation among the variables of those conjunctions.

\medskip

\hrule
\smallskip
\vspace*{1.5mm}
\noindent
{\bf (R7)~Differential Replacement Rule.} 
Let $C$: $H\leftarrow c,  G_L,F(X;Y),G_R$ be a clause in $P_i$, and
let $D$: $\mathit{diff}(Z) \leftarrow d, F(X;Y), R(V;W)$
be {a variant of} a definition clause in $\mathit{Defs}_i$, such that: 
\noindent
(1)~$F(X;Y)$ and $R(V;W)$ are total, functional conjunctions 
with respect to $\mathit{Definite}(P_0)\cup \mathit{Defs}_i$,
(2)~$W\cap \vars(C)\! =\!\emptyset$, 
(3)~$\mathbb{D}\models\forall (c\!\rightarrow\! d)$, and
(4)~\mbox{$\ell(H)\!>\!\ell(\!\mathit{diff(Z)})$}. 

Then, by {\it differential replacement},
we derive clause~$E$: $H\!\leftarrow \!c, G_{\!L},R(V;\!W),$ $\mathit{diff}(Z), G_{\!R}$,
and we get $P_{i+1}= (P_{i}\setminus\{C\}) \cup \{E \}$. 
\smallskip
\hrule

\medskip

Note that in rule~R7 no assumption is made on the set $Z$ of
 variables, apart from the one deriving from the fact that $D$ is a 
 definition, that is, 
 $Z\! \subseteq\! {\mathit{vars}}(d) \cup X\cup Y \cup V \cup W.$ 

The transformation algorithm~\Diff~for the removal of ADTs,
which we will present in Section~\ref{sec:Strategy}, 
applies a specific instance of rule~R7
(see, in particular, the Diff-Introduce step). 
The general form of rule~R7 that %
we have {now considered,} 
makes it 
easier to prove the Soundness and Completeness Theorems 
(see Theorems~\ref{thm:unsat-preserv} and~\ref{thm:sat-preserv}) we will present below.

\subsection{Soundness of the transformation rules}
\label{subsec:soundness}

Now we will extend to rules \mbox{R1--R7} 
some correctness results that have been proved for the transformation of
(constraint) logic programs~\cite{EtG96,Fi&04a,Sek09,TaS86}. 

\begin{theorem}[Soundness of the Transformation Rules]
\label{thm:unsat-preserv}
Let $P_0 \Rightarrow P_1 \Rightarrow \ldots \Rightarrow P_n$ be 
a transformation sequence
using rules {\rm{R1--R7}}.
Suppose that the following condition holds\,$:$
	
\vspace{.5mm}
\noindent\hangindent=8mm
\makebox[8mm][l]{\rm \,(U)}for $i \!=\! 1,\ldots,n\!-\!1$, if $P_i \Rightarrow P_{i+1}$ by 
folding a clause in $P_i$ using a definition $D\!: H \leftarrow c,B$  
in $\mathit{Defs}_i$, 
then, for some $j\! \in\!\{1,\ldots,i\!-\!1,i\!+\!1,\ldots, n\!-\!1\}$, 
$ P_{j}\Rightarrow P_{j+1}$ by unfolding
$D$ with respect to an atom $A$ such that $\ell(H)=\ell(A)$. 
	
\vspace{.5mm}
\noindent
If	$P_n$ is satisfiable, then $P_0$ is satisfiable.
\end{theorem}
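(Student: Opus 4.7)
The plan is to reduce the soundness statement to a model-construction problem: starting from a $\mathbb{D}$-model witnessing the satisfiability of $P_n$, I will build a $\mathbb{D}$-model witnessing the satisfiability of $P_0$. Since $P$ is satisfiable iff $M(\mathit{Definite}(P)) \models Q$, where $Q$ is the set of goals of $P$, I would work throughout with least models of the definite-clause parts and verify that the goals of $P_0$ remain refuted. The core technical claim is therefore that $M(\mathit{Definite}(P_n))$, after being extended with natural interpretations for the predicates of $\mathit{Defs}_n$ that no longer appear in $P_n$, is a $\mathbb{D}$-model of $\mathit{Definite}(P_0)$ falsifying no goal of $P_0$.

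For transformation steps that use only rules R1--R6, I would invoke and adapt the classical correctness arguments for fold/unfold transformations of constraint logic programs~\cite{EtG96,Fi&04a,Sek09,TaS86}, which establish soundness under conditions equivalent to~(U). The standard technique assigns to each SLD-like proof tree an ordinal weight that strictly decreases under unfolding and is non-increasing under folding when condition~(U) holds; this yields a well-founded induction converting proofs in $P_n$ into proofs in~$P_0$. I would check that the presence of constraints, the typing distinction between basic and ADT types, and the stratification by~$\ell$ leave the scheme intact, and that the soundness of rules R4 (clause deletion), R5 (functionality), and R6 (totality) follows because their side-conditions are verified with respect to $\mathit{Definite}(P_0) \cup \mathit{Defs}_i$ and are inherited by any $\mathbb{D}$-interpretation that agrees with $M(\mathit{Definite}(P_0) \cup \mathit{Defs}_i)$ on the relevant predicates.

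The genuinely new case is the differential replacement rule~R7. Consider a step $P_i \Rightarrow P_{i+1}$ that derives $E\!:\!H \leftarrow c, G_L, R(V;W), \mathit{diff}(Z), G_R$ from $C\!:\!H \leftarrow c, G_L, F(X;Y), G_R$ using a definition $D\!:\!\mathit{diff}(Z) \leftarrow d, F(X;Y), R(V;W)$ in $\mathit{Defs}_i$. Given any ground instance of the body of $C$ that holds in the constructed model, functionality of $F$ pins down $Y$ from $X$, totality of $R$ from $V$ to $W$ supplies values for the fresh variables $W$, and the implication $\mathbb{D} \models \forall(c \rightarrow d)$ together with the definition clause $D$ forces $\mathit{diff}(Z)$ to hold; hence the body of $E$ holds and, since $E$ is satisfied, so is the head $H$. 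This shows that $C$ is satisfied, that is, the differential replacement step preserves satisfaction of the transformed clause.

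The main obstacle is weaving these local arguments into one coherent global proof: both $\mathit{Defs}_j$ and the clauses in it keep evolving, so a definition $D$ that licensed an R7 step at stage $i$ may no longer appear in $P_n$ in its original form, and the validity of the $D$-based reasoning must be maintained throughout the rest of the sequence. I would handle this by induction on the level mapping $\ell$, treating predicates stratum by stratum. At the bottom stratum, only rules R2--R6 can affect clauses and correctness follows from the classical theory. At a higher stratum, the meanings of all strictly lower predicates have already been shown to be preserved, so the totality and functionality of $F$ and $R$, and the definitional interpretation of $\mathit{diff}$, are available when licensing R7 at that stratum. The strict level drop $\ell(H) > \ell(\mathit{diff}(Z))$ imposed by R7, the non-strict drop $\ell(H) \geq \ell(K)$ imposed by R3, and condition~(U), are together precisely what make this stratified induction well-founded.
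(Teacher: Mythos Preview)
Your argument for rule~R7 is essentially the paper's: you show that the body of the original clause~$C$ implies (in the relevant model) the body of the derived clause~$E$, using totality of~$R$ and the definition clause~$D$ for $\mathit{diff}$. The paper packages this as Lemma~\ref{lemma:R7unsat} and calls it a \emph{body weakening}. Where you diverge is in the global organization. The paper does not attempt your ``$M(\mathit{Definite}(P_n))$ is a model of~$P_0$'' claim via backward induction along the transformation sequence; instead it introduces an auxiliary rule~R8 (Goal Replacement), shows that R4--R7 are all instances of R8 satisfying the body-weakening condition~(W.1)--(W.2) with respect to $M(\mathit{Definite}(P_0)\cup\mathit{Defs}_i)$, and then invokes a single classical result (Tamaki--Sato/Etalle--Gabbrielli, stated as Theorem~\ref{thm:cons}) giving $M(D_0\cup\mathit{Defs}_n)\subseteq M(D_n)$ for sequences built from R1--R3 and body weakenings under Condition~(U). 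Satisfiability is then read off via the standard trick of renaming $\mathit{false}$ to a fresh nullary predicate~$f$.

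This buys the paper two things you would have to work for. First, the side-conditions for R5--R7 are stated with respect to $M(\mathit{Definite}(P_0)\cup\mathit{Defs}_i)$, and the cited Theorem~\ref{thm:cons} is already formulated to accept weakenings verified in \emph{that} model; you correctly flag that transporting these conditions to $M(\mathit{Definite}(P_n))$ is the ``main obstacle'', and your stratified induction on~$\ell$ is a plausible device, but the paper simply sidesteps the issue. Second, your description of the classical technique (``a well-founded induction converting proofs in~$P_n$ into proofs in~$P_0$'') has the direction reversed: for the inclusion $M(P_0\cup\mathit{Defs}_n)\subseteq M(P_n)$ one converts derivations in $P_0\cup\mathit{Defs}_n$ into derivations in~$P_n$, with the weight measure on the \emph{source} derivation ensuring well-foundedness. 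This slip does not invalidate your semantic plan (which is not really about proof conversion), but it suggests the off-the-shelf lemma you need is the one the paper cites, not a bespoke stratified argument.
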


Thus, to prove the satisfiability of a 
set~$P_0$ of clauses,
it suffices to: (i)~construct a transformation sequence  
$P_0 \Rightarrow P_1 \Rightarrow \ldots \Rightarrow P_n$, 
and then (ii)~prove that $P_n$ is satisfiable.

The need for Condition (U) in Theorem~\ref{thm:unsat-preserv} can be shown
by the following example.

\vspace{-1mm}
\begin{example}\label{ex:need-of-Cond-U}
Let us consider the following initial set of clauses:

\vspace{1mm}
\begin{minipage}[t]{8mm}
$P_{0}$:
\end{minipage}
\begin{minipage}[t]{80mm}
{\small{
\begin{verbatim}
1. false :- p.
2. p.
\end{verbatim}
}}
\end{minipage}

\vspace{1mm}
\noindent
By rule~R1 we introduce the definition:

\vspace{.5mm}
\hspace{9mm}{\tts 3. newp :- p.}

\vspace{.5mm}
\noindent
and we get the set $P_{1}\!=\!\{${\tts 1,2,3}$\}$ of clauses.
Then, by folding clause~{\tts 1} using definition~{\tts 3}, we get:

\vspace{1mm}
\begin{minipage}[t]{8mm}
$P_{2}$:
\end{minipage}
\begin{minipage}[t]{80mm}
{\small{
\begin{verbatim}
1f. false :- newp.
2.  p.
3.  newp :- p.
\end{verbatim}
}}
\end{minipage}

\vspace{1mm}
\noindent
Again, by folding  definition~{\tts 3} using the same 
definition~{\tts 3}, we get:

\vspace{1mm}
\begin{minipage}[t]{8mm}
$P_{3}$:
\end{minipage}
\begin{minipage}[t]{80mm}
{\small{
\begin{verbatim}
1f. false :- newp.
2.  p.
3f. newp :- newp.
\end{verbatim}
}}
\end{minipage}
\vspace{1mm}

\noindent
Now we have that $P_{3}$ is satisfiable (being \{{\tts p}\} its least $
{\mathbb D}$-model), while $P_{0}$ is 
unsatisfiable. This fact is consistent with Theorem~\ref{thm:unsat-preserv}. Indeed, 
the transformation sequence 
$P_{0}\Rightarrow P_{1}\Rightarrow P_{2}\Rightarrow P_{3}$ does not
comply with Condition~(U) because during that sequence,  definition~{\tts{3}}
has not been unfolded.  \hfill $\Box$
\end{example}

The following example shows that for the application of rule~R7, 
Condition~(4) cannot be dropped
because, otherwise, Theorem~\ref{thm:unsat-preserv} does not hold.

\vspace{-1mm}
\begin{example}
Let us consider the initial set of clauses:

\vspace{1mm}
\begin{minipage}[t]{8mm}
$P_{0}$:
\end{minipage}
\begin{minipage}[t]{80mm}
{\small{
\begin{verbatim}
1. false  :- r(X,Y).
2. r(X,Y) :- f(X,Y).
3. f(X,Y) :- Y=0.
\end{verbatim}
}}
\end{minipage}
\vspace{1mm}

\noindent where {\tts f} and {\tts r} are predicates whose arguments are in the 
set~$\mathbb Z$ of the integers.
Let us assume that the level mapping $\ell$
is defined as follows: 
$\ell(${\tts f}$)\!=\!1$ and $\ell(${\tts r}$)\!=\!2$.
Now, we apply rule R1 and we introduce a new predicate~{\tts diff}, and we get:

\vspace{1mm}
\begin{minipage}[t]{8mm}
$P_{1}$:
\end{minipage}
\begin{minipage}[t]{80mm}
{\small{
\begin{verbatim}
1. false  :- r(X,Y).
2. r(X,Y) :- f(X,Y).
3. f(X,Y) :- Y=0.
4. diff(X,W,Y) :- f(X,Y), r(X,W).
\end{verbatim}
}}
\end{minipage}
\vspace{1mm}

\noindent where, complying with rule R1, we set $\ell(${\tts{diff}}$)\!=\!2$.
By applying rule~R7, even if Condition~(4) is not satisfied, we get:

\vspace{1mm}
\begin{minipage}[t]{8mm}
$P_{2}$:
\end{minipage}
\begin{minipage}[t]{80mm}
{\small{
\begin{verbatim}
1.  false  :- r(X,Y).
2r. r(X,Y) :- r(X,W), diff(X,W,Y).
3.  f(X,Y) :- Y=0.
4.  diff(X,W,Y) :- f(X,Y), r(X,W).
\end{verbatim}
}}
\end{minipage}
\vspace{1mm}

\noindent
Now, contrary to the conclusion of Theorem~\ref{thm:unsat-preserv},
we have that $P_{0}$ is unsatisfiable and $P_{2}$ is satisfiable,
being \{{\tts f(n,0)} 
$\mid$ {\tts n}\,$\in\!{\mathbb Z}\}$ its least
${\mathbb Z}$-model.
Note that the other Conditions~(1), (2), and~(3) for applying rule~R7 do hold. In particular, 
the atoms {\tts f(X,Y)} and {\tts r(X,Y)} are total, functional atoms
from~{\tts X} to~{\tts Y} with respect to $P_{0}$. 
\hfill $\Box$
\end{example}

\medskip

The rest of this section is devoted to the proof of Theorem~\ref{thm:unsat-preserv}.

\newcommand{\Iff}{\leftrightarrow}
\newcommand{\ONLYIF}{\Rightarrow}
\newcommand{\IFF}{\Leftrightarrow}

First, we recall and recast in our framework
some definitions and facts taken from the literature~\cite{EtG96,TaS84,TaS86}.
Besides the rules presented in Section~\ref{subsec:Rules} above,
let us also consider the following rule R8 which, given a transformation sequence 
$P_0 \Rightarrow P_1 \Rightarrow \ldots \Rightarrow P_i$, for some $i\!\geq\!0$, 
allows us to 
extend it by constructing a new set $P_{i+1}$ of CHCs such as $P_i\Rightarrow P_{i+1}$.

\medskip

\hrule
\vspace*{1.5mm}
\noindent 
\textbf{(R8)~Goal Replacement Rule.} Let  
$C$:~${H}\leftarrow c, c_{1}, {G}_{L},
{G}_{1}, {G}_{R}$ be a clause in~${P}_{i}$.
If in 
clause~$C$ we  \emph{replace} $c_1,G_1$ by $c_2,G_2$, 
we derive  clause~$D$:
${H}\leftarrow c, c_{2}, {G}_{L},
{G}_{2}, {G}_{R}$, and we get 
$P_{i+1} = ({P}_{i}\setminus\{C\})\cup \{D\}$. 
\smallskip
\hrule

\medskip

Now let us introduce two particular kinds of Goal Replacement Rule: 
(i)~{\it{body weakening}}, and (ii)~{\it{body strengthening}}.
First, for the Goal Replacement Rule, we need to consider the following two tuples of variables:

\vspace*{1mm}
$T_1 = \mathit{vars}(\{c_{1},{G}_{1}\}) 
\setminus \mathit{vars}(\{{H}, c, {G}_{L},{G}_{R}\})$,~~ and

$T_2 = \mathit{vars}(\{c_{2},{G}_{2}\}) 
\setminus \mathit{vars}(\{{H}, c, {G}_{L},{G}_{R}\})$. 

\vspace*{1mm}
\noindent
The Goal Replacement Rule is said to be a {\em body weakening} if the
following two conditions hold:

\vspace*{1mm}

\makebox[12mm][r]{(W.1)~~} $M(\textit{Definite}(P_0)\cup \mathit{Defs}_i) \models \forall\,( c_{1}\! \wedge \! {G}_{1}
\rightarrow \exists T_2.\, c_{2}\! \wedge \! {G}_{2})$

\makebox[12mm][r]{(W.2)~~} $\ell(H)\!>\!\ell(\mathit{A})$, for every atom $A$ occurring 
in $G_2$ and not in $G_1$.

\smallskip
\noindent 
\noindent The Goal Replacement Rule is said to be a \emph{body strengthening} if the
following condition holds:

\vspace*{1mm}
\makebox[12mm][c]{(S)~~} $M(\textit{Definite}(P_0)\cup \mathit{Defs}_i) \models \forall \, (c_{2}\! \wedge \! {G}_{2}
\rightarrow \exists T_1.\, c_{1}\! \wedge \! {G}_{1})$.

\vspace*{1mm}

\smallskip

Usually, in the literature
the Goal Replacement Rule is presented by considering the conjunction 
of Conditions~(W.1) and~(S), thereby considering the quantified equivalence
$\forall \, ( (\exists T_1.\, c_{1}\! \wedge \! {G}_{1})
\Iff (\exists T_2.\, c_{2}\! \wedge \! {G}_{2}))$.
We have split that equivalence into the two associated implications.
This has been done because, 
when proving the Soundness result (see Theorem~\ref{thm:unsat-preserv} in this section) and the Completeness result 
(see Theorem~\ref{thm:sat-preserv}  in Section~\ref{subsec:completeness}),
it is convenient to present %
the preservation
of the least $\mathbb D$-models %
into two parts as specified by the following two theorems.

\begin{theorem}
\label{thm:cons} Let $D_0, \ldots, D_n$ be sets of definite CHCs
and let \mbox{$D_0 \repl \ldots \repl D_n$}
be a transformation sequence constructed using rules
{\rm R1} $($Definition$)$, {\rm R2} $($Unfolding$)$, {\rm R3} $($Folding$)$, and 
{\rm R8} $($Goal Replacement$)$. Suppose that
Condition~{\rm (U)} of Theorem~$\ref{thm:unsat-preserv}$ holds and all 
goal replacements are body weakenings.
Then $M({D}_0\cup \mathit{Defs}_n)\subseteq M({D}_n)$.
\end{theorem}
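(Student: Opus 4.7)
The plan is to show that every ground atom $A$ belonging to $M(D_0 \cup \textit{Defs}_n)$ also belongs to $M(D_n)$. Because all programs in the sequence are definite, membership in $M(D_0 \cup \textit{Defs}_n)$ is witnessed by a finite proof tree $\tau$ whose internal nodes are ground instances of clauses from $D_0 \cup \textit{Defs}_n$ with satisfied constraints, and whose leaves are true $\mathbb{D}$-constraints. I would proceed by well-founded induction on a suitable weight of $\tau$.

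For the inductive step, let the root of $\tau$ be a ground instance of some clause $C \in D_0 \cup \textit{Defs}_n$ with body atoms $A_1, \ldots, A_m$, each having a sub-proof $\tau_k$. By the induction hypothesis, each $A_k$ lies in $M(D_n)$. The task is to exhibit a clause in $D_n$ that derives $A$ from these (or reconstructed) sub-proofs. I would trace the fate of $C$ through the sequence $D_0 \Rightarrow \ldots \Rightarrow D_n$ and handle each applicable rule case by case. Rule R1 is conservative, introducing only fresh-predicate definitions that cannot affect any prior proof. For rule R2, the sub-proof for the unfolded atom necessarily uses some matching clause $K_j \leftarrow c_j, B_j \in D_0$, so the two adjacent inference steps can be merged into a single step via the corresponding clause of $\mathit{Unf}(C, A, D_0)$, strictly reducing the weight. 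For rule R8 (body weakening), condition (W.1) evaluated in $M(\textit{Definite}(P_0) \cup \textit{Defs}_i)$ provides, for the satisfied old body, a witnessing ground instance of the weakened body whose atoms admit proofs of smaller weight, where condition (W.2) is precisely what keeps that weight decreasing.

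The main obstacle is the folding rule R3, where $C : H \leftarrow c, G_L, Q, G_R$ is replaced by $E : H \leftarrow c, G_L, K\vartheta, G_R$ using a definition $D : K \leftarrow d, B \in \textit{Defs}_i$ with $Q = B\vartheta$. To apply $E$ in $D_n$, we must separately derive the folded atom $K\vartheta$ in $D_n$ from the sub-proofs available for $Q$. Using $D$ together with these sub-proofs we do obtain a proof of $K\vartheta$ in $D_0 \cup \textit{Defs}_n$, but a naive tree-size measure does not decrease when we invoke the induction hypothesis on that new proof. Condition (U) is precisely what resolves the impasse: every definition $D$ used in a folding is, at some step of the sequence, unfolded against an atom of level $\ell(K)$, and this unfolding produces clauses whose bodies match one symbolic step of $B\vartheta$. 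These unfolded clauses live in $D_n$ (possibly after further transformation, which is handled by the other cases), and using them one derives $K\vartheta$ in $D_n$ from strictly simpler sub-proofs.

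Accordingly, I would take the weight of $\tau$ to be, lexicographically, the multiset of levels $\ell$ of clauses used in $\tau$ (under the well-founded multiset extension of the natural ordering) together with the structural size of $\tau$. The stratification of $P_0$ with respect to $\ell$, together with Conditions (U) and (W.2), will ensure that each case strictly reduces this weight, closing the induction and yielding the inclusion $M(D_0 \cup \textit{Defs}_n) \subseteq M(D_n)$.
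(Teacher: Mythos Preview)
The paper does not prove this theorem itself; it refers to Tamaki--Sato~\cite{TaS84,TaS86} and Etalle--Gabbrielli~\cite{EtG96}, explicitly noting that the argument hinges on a ``weight-tuple measure'' of derivations. Your overall plan---well-founded induction on a weight of proof trees in $D_0\cup\textit{Defs}_n$, with a case split on the rule that acted on the root clause---is exactly that framework. The gap is in your concrete measure. Take the folding case with $G_L$ and $G_R$ empty and $\ell(H)=\ell(K)$, both of which rule~R3 permits. The proof $\tau'$ of $K\vartheta$ that you assemble from the definition $D$ and the sub-proofs of $Q=B\vartheta$ has the \emph{same} multiset of node levels and the \emph{same} size as the original proof $\tau$ of $A$: one root at level $\ell(H)$ is traded for one at level $\ell(K)=\ell(H)$, and the sub-trees coincide. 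Your lexicographic pair (multiset of levels, size) therefore does not strictly decrease, and the induction does not close. Your fallback via Condition~(U)---that the unfolded versions of $D$ ``live in $D_n$''---does not hold either: those unfolded clauses may themselves be folded at a later step, so they need not survive in $D_n$, and you cannot hand the problem off to ``the other cases'' without circularity.

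What Tamaki--Sato actually use is sharper: to each ground atom in $M(D_0\cup\textit{Defs}_n)$ (not to a fixed tree) they attach a tuple of weights, stratified by level and taken as a minimum over derivations. Condition~(U) is exploited not to locate a surviving clause in $D_n$ but to guarantee that the body of every definition used for folding contains an atom of the \emph{same} level as its head; this forces the level-$\ell(K)$ component of the weight of $K\vartheta$ to be strictly below that of $A$ even when $\ell(H)=\ell(K)$, and the induction then closes. Your sketches for R1, R2, and R8 (body weakening via (W.1)--(W.2)) are essentially right once this measure is in place.
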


\begin{theorem}
	\label{thm:pc} Let $D_0, \ldots, D_n$ be sets of definite CHCs
	and let \mbox{$D_0 \repl \ldots \repl D_n$}
	be a transformation sequence constructed using rules
	{\rm R1} $($Definition$)$, {\rm R2} $($Unfolding$)$, {\rm R3} $($Folding$)$, and 
	{\rm R8} $($Goal Replacement$)$. Suppose that, for all applications of~{\rm R3}, Condition {\rm (E)} 
	holds $($see Definition~$\ref{def:condR}$ in Section~$\ref{sec:Completeness}$$)$
	and all goal replacements are 
	body strengthenings.
	Then \(M( D_0\cup \mathit{Defs}_n)\supseteq M( D_n) \).
\end{theorem}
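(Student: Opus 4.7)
The plan is to show that $M(D_0\cup\mathit{Defs}_n)$ is itself a $\mathbb D$-model of the set $D_n$ of clauses; since $M(D_n)$ is the \emph{least} $\mathbb D$-model of $D_n$, this immediately gives $M(D_n)\subseteq M(D_0\cup\mathit{Defs}_n)$, as required. To this end, abbreviate $M = M(D_0\cup\mathit{Defs}_n)$ and prove by induction on $i=0,\ldots,n$ that $M\models D_i$; the case $i=n$ will be the desired fact. The base case is immediate because $D_0\subseteq D_0\cup\mathit{Defs}_n$.

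For the inductive step, case-split on the rule used to derive $D_{i+1}$ from~$D_i$. Rule R1 adds a clause of $\mathit{Defs}_n$, which is satisfied by $M$ by construction. For rule R2 (unfolding), the clauses of $P_0$ used in the unfolding belong to $D_0$ and are therefore satisfied by $M$; combined with the inductive hypothesis $M\models C$ on the clause $C$ being unfolded, standard reasoning about symbolic computation shows that every clause in $\mathit{Unf}(C,A,P_0)$ is satisfied by $M$. For rule R8 under body strengthening, a grounding $\sigma$ that satisfies the new body produces, via condition~(S), an instantiation of the local variables $T_1$ of $c_1,G_1$ making the old body true; since $T_1$ does not occur outside $c_1,G_1$, extending $\sigma$ by this instantiation yields a grounding of the original clause under which its body, and hence its head by the inductive hypothesis, is satisfied.

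The main obstacle is rule R3 (folding), which is exactly where Condition~(E) intervenes. Let $E$: $H\leftarrow c, G_L, K\vartheta, G_R$ be derived by folding $C$: $H\leftarrow c, G_L, B\vartheta, G_R$ using a definition $D$: $K\leftarrow d, B$ in $\mathit{Defs}_i$. Fix a grounding $\sigma$ satisfying the body of $E$; the task is to produce a grounding of $C$ under which its body holds, so that the inductive hypothesis $M\models C$ delivers the head $H\sigma$. Since rule R1 makes $K$ a fresh predicate, $D$ is the unique clause in $D_0\cup\mathit{Defs}_n$ defining $K$, so $K\vartheta\sigma\in M$ forces the existence of values for the local variables of $D$ under which the body $(d,B)$ becomes true. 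Condition~(E) is expected to guarantee that $\vartheta$ maps these local variables to variables that do not occur elsewhere in $C$, so that $\sigma$ can be consistently re-grounded on them to make $B\vartheta$ true while leaving $c, G_L, G_R$ unchanged; this completes the folding case, and taking $i=n$ finishes the proof. The delicate bookkeeping of local versus shared variables in the folding step is the only non-routine ingredient; the remaining cases follow directly from the definitions of rules R1 and R2 and from the strengthening condition~(S) of rule R8.
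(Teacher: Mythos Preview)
Your approach is sound and is essentially the standard ``partial correctness'' argument of Tamaki and Sato (and its constraint-logic-programming extension by Etalle and Gabbrielli). The paper itself does not give a self-contained proof of this theorem: it explicitly states that for the proof ``we refer to the results presented in the literature~\cite{EtG96,TaS84,TaS86}'' and then only explains how its formulation differs from those references. So there is no proof in the paper to compare yours against beyond noting that you are reconstructing precisely the argument the paper cites.

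One point you gloss over deserves a sentence of justification. In the R8 case, Condition~(S) is stated with respect to $M(\mathit{Definite}(P_0)\cup\mathit{Defs}_i)$, not with respect to $M=M(D_0\cup\mathit{Defs}_n)$, and you silently use the implication in~$M$. This is fine, but it needs the observation that every definition in $\mathit{Defs}_n\setminus\mathit{Defs}_i$ introduces a fresh head predicate whose body, by condition~(3) of rule~R1, uses only predicates of $P_0$; hence adding those definitions does not change the least-model extension of any predicate occurring in $G_1$ or $G_2$, and Condition~(S) transfers from $M(D_0\cup\mathit{Defs}_i)$ to~$M$. With that remark made explicit, your sketch is complete; the folding case is handled correctly, and your reading of Condition~(E) (the local variables of $D$ are mapped by $\vartheta$ to pairwise distinct variables fresh for $H,c,G_L,G_R$, so $\sigma$ can be extended on them without disturbing the rest of the body) is exactly the intended use.
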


For the proof of Theorems~\ref{thm:cons} and~\ref{thm:pc} we refer 
to the results presented in the literature~\cite{EtG96,TaS84,TaS86}.
The correctness of the transformation rules with respect to 
the least Herbrand model semantics has been first proved in the landmark
paper by Tamaki and Sato~\cite{TaS84}.
In a subsequent technical report~\cite{TaS86}, the same authors
extended that result 
by introducing the notion of the {\it level\/} of an atom, 
which we also use in this paper (see the notion defined 
at the beginning of Section~\ref{subsec:Rules}). 

The use of atom levels allows less restrictive 
applicability conditions on the Folding and Goal Replacement Rules.
Later, Etalle and Gabbrielli~\cite{EtG96} extended Tamaki and Sato's results 
to the \mbox{$\mathbb D$-model} semantics of
{\em constraint logic programs} (in the terminology used in
this paper, a constraint logic program is a 
set of definite constrained Horn clauses).

There are three main differences between our presentation of the correctness 
results for the transformation rules with respect to 
the presentation considered in %
the literature~\cite{EtG96,TaS84,TaS86}.

First, as already mentioned, we kept the two Conditions~(E) and~(S), 
which guarantee
the inclusion \(M(D_0\!\cup\! \mathit{Defs}_n)\!\supseteq\! M( D_n) \) (called 
{\em Partial Correctness} by Tamaki and Sato~\cite{TaS86}),
separated from the three Conditions~(U), (W.1), and~(W.2), which guarantee
the reverse inclusion \(M(D_0\cup\mathit{Defs}_n)\!\subseteq\! M( D_n) \).
All five conditions together garantee the equality
\(M(D_0\!\cup\! \mathit{Defs}_n)\!=\!M( D_n) \)
(called {\em Total Correctness} by Tamaki and Sato~\cite{TaS86}).

Second, Tamaki and Sato's conditions for the correctness of the Goal Replacement
Rule are actually more general than ours, as they
use a well-founded relation which is based on atom levels and
also on a suitable measure (called \mbox{\em weight-tuple measure}~\cite{TaS86})
of the successful derivations of an 
atom in $M( D_0\cup \mathit{Defs}_n)$. 
Our simpler conditions straightforwardly imply Tamaki and Sato's ones, 
and are sufficient for our purposes in the present paper.

Third, Tamaki and Sato papers~\cite{TaS84,TaS86} 
do not consider constraints, whereas 
Etalle and Gabbrielli results for constraint logic programs do not consider
Goal Replacement~\cite{EtG96}. However, Tamaki and Sato's proofs can easily be 
extended to
constraint logic programs by simply dealing with atomic constraints as atoms
with level 0 and assigning positive levels to all other atoms.

\medskip

From Theorem~\ref{thm:cons}, we get the following 
Theorem~\ref{thm:unsat},
which relates the satisfiability of sets of clauses obtained 
by applying the transformation rules to the satisfiability of the original sets of clauses.

\begin{theorem}
\label{thm:unsat} 
Let $P_0 \repl \ldots \repl P_n$
be a transformation sequence constructed using rules
{\rm R1} $($Definition$)$, {\rm R2} $($Unfolding$)$, {\rm R3} $($Folding$)$, and 
{\rm R8} $($Goal Replacement$)$. Suppose that
Condition~{\rm (U)} of Theorem~$\ref{thm:unsat-preserv}$ holds and all 
goal replacements are body weakenings.
If $P_n$ is satisfiable, then $P_0$ is satisfiable.
\end{theorem}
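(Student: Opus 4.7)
The plan is to reduce Theorem \ref{thm:unsat} to the model-inclusion statement of Theorem \ref{thm:cons}, which applies only to sequences of definite CHCs. The main obstacle is precisely this gap: $P_0$ and $P_n$ typically contain goals (clauses with head $\mathit{false}$), whereas Theorem \ref{thm:cons} speaks about least $\mathbb D$-models of definite programs. So I first need to recast the whole sequence as a transformation sequence of definite CHCs without breaking any of the applicability conditions.

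First I would introduce a fresh $0$-ary predicate symbol $g$, not occurring anywhere in $P_0\repl\ldots\repl P_n$, and let $P'_i$ be obtained from $P_i$ by replacing every occurrence of $\mathit{false}$ (i.e., each goal head) by $g$. Each $P'_i$ is then a set of definite CHCs, and, by the equivalence recorded at the end of Section~\ref{sec:CHCs}, $P_i$ is $\mathbb D$-satisfiable if and only if $g\notin M(P'_i)$, because the only ground atoms of $M(P'_i)$ involving $g$ are witnesses of the bodies of goals being satisfied in $M(\mathit{Definite}(P_i))$.

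Second, I would verify that $P'_0\repl\ldots\repl P'_n$ is a valid transformation sequence meeting the hypotheses of Theorem \ref{thm:cons}. Extend the level mapping $\ell$ used for $P_0$ by choosing $\ell(g)$ strictly greater than the level of every other predicate. Each rule application then transfers literally: R1 is untouched, since $g$ appears in no body and in no new definition, so $\mathit{Defs}_n$ is the same in the two sequences; R2 is unaffected, since unfolding never acts on $\mathit{false}$; an R3 step whose original head was $\mathit{false}$ now has head $g$, and the side condition $\ell(g)\geq\ell(K)$ holds by the maximality of $\ell(g)$, matching the convention that $\mathit{false}$ behaves as a top-level atom; an R8 body-weakening whose head was $\mathit{false}$ now has $H=g$, and condition (W.2) becomes $\ell(g)>\ell(A)$, again by maximality. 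Condition (U) is preserved verbatim: the unfolding of a definition $D$ required by (U) always acts on a definite clause, which is untouched by the substitution $\mathit{false}\mapsto g$.

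Third, I would apply Theorem \ref{thm:cons} to $P'_0\repl\ldots\repl P'_n$ to obtain $M(P'_0\cup\mathit{Defs}_n)\subseteq M(P'_n)$. By monotonicity of the least-model operator on definite programs, $M(P'_0)\subseteq M(P'_0\cup\mathit{Defs}_n)$, and hence $M(P'_0)\subseteq M(P'_n)$. The hypothesis that $P_n$ is satisfiable gives $g\notin M(P'_n)$, so $g\notin M(P'_0)$, which is exactly the satisfiability of $P_0$. The delicate part is the second step: although the substitution $\mathit{false}\mapsto g$ is syntactically trivial, one must carefully check that every applicability condition of R1, R2, R3, R8, and in particular Condition (U) together with the body-weakening requirement, transfers to the definite version under the extended level mapping.
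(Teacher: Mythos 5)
Your proposal is correct and takes essentially the same route as the paper's own proof: replacing every head occurrence of $\mathit{false}$ by a fresh predicate to obtain a sequence of definite programs, checking that the hypotheses of Theorem~\ref{thm:cons} transfer, and translating satisfiability of $P_n$ and $P_0$ into non-membership of the fresh atom in the corresponding least $\mathbb D$-models. The only (harmless) variation is at the end: where the paper first establishes that $P_0$ is satisfiable iff $P_0\cup\mathit{Defs}_n$ is satisfiable via an explicit model extension and then chains through $M(P'_0\cup\mathit{Defs}_n)$, you instead invoke monotonicity of the least-model operator to pass directly from $M(P'_0\cup\mathit{Defs}_n)\subseteq M(P'_n)$ to $M(P'_0)\subseteq M(P'_n)$, which is equally valid.
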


\begin{proof}
First, we observe that $P_0$ is satisfiable iff 
$P_0 \cup \textit{Defs}_n$ is satisfiable. Indeed, we have that: 
(i)~if~$\mathcal{M}$ is a $\mathbb D$-model of $P_0$, then the $\mathbb D$-interpretation
$\mathcal{M} \cup \{\textit{newp}(a_1,\ldots,a_k) \mid \textit{newp}$ 
is a head predicate
in $\textit{Defs}_n \mbox{ and } a_1,\ldots, a_k$ are ground terms$\}$
is a $\mathbb D$-model of $P_0 \cup \textit{Defs}_n$,
and (ii)~if $\mathcal{M}$ is a $\mathbb D$-model of $P_0 \cup \textit{Defs}_n$, then
all clauses of $P_0$ are true in $\mathcal{M}$, and hence
$\mathcal{M}$  is a $\mathbb D$-model of $P_0$.

Then, let us consider a new transformation 
sequence  $P'_0\Rightarrow \ldots \Rightarrow P'_n$
obtained from the  sequence $P_0\Rightarrow \ldots \Rightarrow P_n$
by replacing each occurrence of \textit{false} in the head of
a clause by a fresh, new predicate symbol, say $f$.
$P'_0,\ldots,P'_n$ are sets of definite clauses, and thus, for $i=0,\ldots,n,$ $\textit{Definite}(P'_i)= P'_i$.
The sequence \mbox{$P'_0\Rightarrow \ldots \Rightarrow P'_n$} satisfies the 
hypotheses of Theorem~\ref{thm:cons}, and hence 
$M(P'_0\cup \mathit{Defs}_n)\!\subseteq M(P'_n)$. We have that:    

\noindent
\makebox[12mm][l]{}$P_n$ is satisfiable

\noindent
\makebox[12mm][l]{implies}{$P'_n\cup \{\neg f\}$ is satisfiable}

\noindent
\makebox[12mm][l]{implies}{$f\not\in M(P'_n)$}

\noindent
\makebox[12mm][l]{implies,}\,by Theorem~\ref{thm:cons}, 
$f\not\in M(P'_0 \cup \textit{Defs}_n)$ 

\noindent
\makebox[12mm][l]{implies}{$P'_0 \cup \textit{Defs}_n\cup \{\neg f\}$ is satisfiable}

\noindent
\makebox[12mm][l]{implies}{$P_0 \cup \textit{Defs}_n$ is satisfiable}

\noindent
\makebox[12mm][l]{implies}{$P_0$ is satisfiable.} \hfill$\Box$
\end{proof}

Now, in order to prove Theorem~\ref{thm:unsat-preserv} of 
Section~\ref{sec:TransfRules}, which states the soundness of rules R1--R7, 
we show that rules R4--R7 are all body weakenings.

\medskip

An application of rule~R4~(Clause Deletion), by which we delete
clause~$C$: $H\leftarrow c,G$, 
whenever the constraint~$c$ is unsatisfiable, is equivalent to
the replacement of the body of clause $C$ by {\it false}. 
Since $c$ is unsatisfiable, we have that:

\smallskip
$M(\textit{Definite}(P_0)\cup \mathit{Defs}_i) \models \forall\,(c \wedge G \rightarrow {\it false})$

\smallskip
\noindent
and Condition (W.1) of rule R8 holds. Also Condition~(W.2), that is:

\vspace*{1mm}
$\ell(H)\!>\!\ell(\mathit{A})$, for every atom $A$ occurring in ${\it false}$ 

\vspace*{1mm}
\noindent
trivially holds, because there are no atoms in ${\it false}$.
Thus, the replacement of the body of clause $H\leftarrow c,G$ by {\it false}
is a body weakening.

\medskip 

Let us now consider rule~R5 (Functionality). 
Let $F(X,Y)$ be a conjunction of atoms 
that defines a functional relation from~$X$ to~$Y$, that is, Property (\textit{Funct})
of Section~\ref{sec:CHCs} holds for $F(X,Y)$.
When rule~R5 is applied whereby a conjunction $F(X,Y),F(X,Z)$
is replaced by the new conjunction $Y\!=\!Z,\ F(X,Y)$, we have that:

\vspace*{1mm}
$M(\mathit{Definite}(P_0)\cup \mathit{Defs}_i) \models \forall(F(X,Y) \wedge F(X,Z) \rightarrow Y\!=\!Z)$

\vspace*{1mm}
\noindent
and hence Condition~(W.1) of rule R8 holds. When this replacement is performed, also Condition~(W.2) trivially holds, and thus
rule~R5 is a body weakening. 

\medskip

An application of rule~R6 (Totality) replaces a conjunction $F(X,Y)$
by {\it true} (that is, the empty conjunction), which is implied by any formula. Hence
{Conditions (W.1) and (W.2) trivially hold,} %
and rule~R6 is a body weakening. 
\medskip

For rule~R7 (Differential Replacement) we prove the following lemma.
Recall that by $F(X;Y)$ we denote a conjunction of atoms 
that defines a total, functional relation from~$X$ to~$Y$.

\begin{lemma}\label{lemma:R7unsat} Let us consider a transformation sequence 
$P_{0}\Rightarrow \ldots \Rightarrow P_{i}$ and a 
clause $C${\rm :}~$H\leftarrow c,  G_L,F(X;Y),G_R$ in $P_{i}$. 
Let us assume that by applying rule~{\rm R7} on clause~$C$ using 
the definition clause

\smallskip
$D{\rm :}~\mathit{diff}(Z) \leftarrow d, F(X;Y), R(V;W),$

\smallskip
\noindent
where\,{\rm :} $(D1)$~$W \cap \vars(C) = \emptyset,$  and
$(D2)$~$\mathbb{D}\models\forall (c\rightarrow d),$ 
we derive clause

\smallskip
$E${\rm :} $H\leftarrow c,  G_L,R(V;W), \mathit{diff}(Z), G_R$

\smallskip
\noindent
and we get the new set 
$P_{i+1} = (P_{i}\setminus\{C\}) \cup \{E \}$ of clauses. Then,

\smallskip
$M(\textit{Definite}(P_0) \cup \mathit{Defs}_i)\models \forall( c \wedge F(X;Y) \rightarrow \exists W.\,(R(V;W) \wedge \mathit{diff}(Z)))$.
\end{lemma}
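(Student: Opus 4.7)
The plan is to fix an arbitrary valuation of the free variables of $c \wedge F(X;Y)$ under which this formula holds in $M(\textit{Definite}(P_0)\cup \mathit{Defs}_i)$, and then to construct a witness for $W$ making $R(V;W)\wedge \mathit{diff}(Z)$ true in the same model. There is no need for induction on the transformation sequence or on the structure of terms: the argument is a direct semantic chase through the single definition clause~$D$.

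First, by hypothesis~(D2), $\mathbb{D}\models \forall(c\rightarrow d)$, so the constraint~$d$ also holds under the fixed valuation. Next, since $R(V;W)$ is total from~$V$ to~$W$ with respect to $\textit{Definite}(P_0)\cup \mathit{Defs}_i$, for the fixed valuation of the input tuple~$V$ there exists a value~$w$ for~$W$ such that $R(V;w)$ is true in $M(\textit{Definite}(P_0)\cup \mathit{Defs}_i)$. Hypothesis~(D1), $W\cap \vars(C)=\emptyset$, guarantees that the tuple $W$ is disjoint from any variable already bound, so extending the valuation by mapping $W$ to $w$ is well defined and leaves the truth of $c$ and of $F(X;Y)$ unaffected.

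At this extended valuation the whole body $d,\,F(X;Y),\,R(V;W)$ of the definition clause~$D$ is satisfied. Since $D$ belongs to $\mathit{Defs}_i \subseteq \textit{Definite}(P_0)\cup \mathit{Defs}_i$ and every such clause is true in the least $\mathbb{D}$-model, the head atom $\mathit{diff}(Z)$ must hold as well; note that $Z \subseteq \vars(d)\cup X\cup Y\cup V\cup W$, so the value of $Z$ is fully determined by the valuation we have built. Combining the existence of $w$ with the truth of $\mathit{diff}(Z)$ at this valuation yields $\exists W.(R(V;W)\wedge \mathit{diff}(Z))$, as required. The only subtle point, and the one I expect to need the most care to phrase precisely, is ensuring that totality is invoked with respect to exactly $\textit{Definite}(P_0)\cup \mathit{Defs}_i$ so that the witness~$w$ and the firing of~$D$ live in one and the same model; condition~(D1) is precisely what makes this choice of $w$ compatible with the variables already fixed by the universal quantifier in front of the target implication.
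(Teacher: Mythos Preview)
Your proposal is correct and follows essentially the same approach as the paper: use totality of $R(V;W)$ to obtain a witness for $W$, use $(D1)$ to ensure this witness does not interfere with the already fixed variables, use $(D2)$ to pass from $c$ to $d$, and then invoke the definition clause $D$ in the least model to conclude $\mathit{diff}(Z)$. The paper phrases this as two displayed implications $(\alpha)$ and $(\beta)$ combined at the end, while you unfold the same reasoning via a valuation; the content is the same.
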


\begin{proof} Let $\mathcal M$ denote $M(\textit{Definite}(P_0) 
\cup \mathit{Defs}_i)$.
Since, %
$R(V;W)$ is a total, functional conjunction from $V$ to $W$
with respect to~$\textit{Definite}(P_0) \cup  \mathit{Defs}_i$,
we have:

\smallskip

$\mathcal M\models \forall\,( c \wedge F(X;Y) \rightarrow \exists W.\, R(V;W))$ \hfill $(\alpha)$~~~~

\smallskip
\noindent
Since, by Condition~$(D1)$, none of the variables in $W$ occurs in~$C$, from definition~$D$, we get:

\smallskip

$\mathcal M \models \forall\,(d \wedge F(X;Y) \wedge R(V;W) \rightarrow \mathit{diff}(Z))$\hfill $(\beta)$~~~~

   \smallskip

\noindent
From~$(\alpha)$, $(\beta)$, and Condition~$(D2)$, we get the thesis.\hfill $\Box$
\end{proof}

\medskip

From Lemma~\ref{lemma:R7unsat}
it follows that rule~R7, which replaces in the body of 
clause~$C${\rm :} $H\leftarrow c,  G_L,F(X;Y),G_R$
the conjunction~$F(X;Y)$ by the new conjunction $R(V;W)$, $\mathit{diff}(Z)$, 
is a body weakening, assuming that $\ell(H)\!>\!\ell(\mathit{diff}(Z))$.
Recall that, since clause $D$ is a definition clause, we  
have that $\ell(\mathit{diff}(Z))\!\geq\!\ell(\mathit{R})$, and thus we have
that $\ell(H)\!>\!\ell(\mathit{R})$.

The following lemma summarizes the facts we have shown above about rules~R4--R7.

\begin{lemma}\label{lemma:weakening} 
The applications of rules {\rm R4--R7} are all body weakenings.
\end{lemma}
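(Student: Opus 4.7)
My plan is to verify, for each of rules R4--R7, the two Conditions~(W.1) and~(W.2) that define a body weakening when viewed as an instance of rule~R8. Since the proofs for R4, R5, and R6 are essentially by inspection, the substantive work lies in R7, where I will appeal to Lemma~\ref{lemma:R7unsat}.

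First, I would handle R4, R5, and R6 uniformly. For R4, deleting a clause whose constraint $c$ is unsatisfiable can be recast as replacing its body by $\mathit{false}$: Condition~(W.1) holds because $c \wedge G$ is unsatisfiable, and Condition~(W.2) is vacuous since the new body contains no atoms. For R5, the rewrite of $F(X,Y), F(X,Z)$ into $Y\!=\!Z, F(X,Y)$ introduces no new atom, so (W.2) is immediate; (W.1) is exactly the functionality Property~$(\mathit{Funct})$ of $F(X,Y)$ with respect to $\mathit{Definite}(P_0)\cup \mathit{Defs}_i$, which holds by the side condition of R5. For R6, removing a total conjunction $F(X,Y)$ amounts to replacing it by the empty conjunction (i.e., $\mathit{true}$), which is trivially implied by the original body, while (W.2) holds because no atom is introduced.

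The main step — and the one that actually uses the new machinery — is R7. Here the plan is to invoke Lemma~\ref{lemma:R7unsat}, which already gives
\[
M(\textit{Definite}(P_0)\cup \mathit{Defs}_i)\models \forall\,(c \wedge F(X;Y) \rightarrow \exists W.\,(R(V;W) \wedge \mathit{diff}(Z))).
\]
This is exactly Condition~(W.1) for the replacement performed by R7, because the assumption $W \cap \vars(C) = \emptyset$ ensures that the variables of the new body not appearing in $H, c, G_L, G_R$ are precisely the variables in~$W$, so the existential quantifier over $T_2$ in the statement of body weakening matches the existential over~$W$. For Condition~(W.2), I will combine the explicit R7 hypothesis $\ell(H)\!>\!\ell(\mathit{diff}(Z))$ with the level mapping assigned to a defined predicate by R1: applied to the definition~$D$, this gives $\ell(\mathit{diff}(Z)) \geq \ell(R)$, so both newly introduced atoms $R(V;W)$ and $\mathit{diff}(Z)$ have level strictly less than $\ell(H)$, as required.

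The main obstacle, if any, is purely bookkeeping in the R7 case: one must confirm that the variable tuple $T_2$ of~R8 really coincides with~$W$, which depends on Condition~(2) of R7 ($W \cap \vars(C) = \emptyset$) together with the definition of~$D$, where all other variables of $R(V;W), \mathit{diff}(Z)$ are drawn from $X\cup Y \cup V \cup Z$ and thus already appear in~$C$ via $F(X;Y)$ or the original body. Once this alignment is checked, Lemma~\ref{lemma:R7unsat} and the level condition conclude the argument, and the lemma follows by collecting the four cases.
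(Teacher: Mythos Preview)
Your proposal is correct and follows essentially the same approach as the paper: handling R4--R6 by direct inspection of Conditions~(W.1) and~(W.2), and invoking Lemma~\ref{lemma:R7unsat} together with the level condition $\ell(H)>\ell(\mathit{diff}(Z))\geq\ell(R)$ for R7. One minor correction on your bookkeeping remark: you do not need $T_2 = W$ exactly, only $W \subseteq T_2$ (which follows immediately from Condition~(2) of R7, namely $W\cap\vars(C)=\emptyset$), since enlarging the set of existentially quantified variables in the consequent of (W.1) can only weaken it; in particular, variables in $V$ or $\mathit{vars}(d)$ need not occur in~$C$ under the general statement of R7.
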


Finally, having proved Lemma~\ref{lemma:weakening}, we can
 present the proof of Theorem~\ref{thm:unsat-preserv}.

\medskip

\noindent 
{\it Proof of Theorem~$\ref{thm:unsat-preserv}$}.
Let $P_0 \Rightarrow P_1 \Rightarrow \ldots \Rightarrow P_n$ be 
a transformation sequence constructed using rules {\rm{R1--R7}}. Then,
by Lemma~\ref{lemma:weakening},
that sequence
 can also be constructed by applications of rules {\rm{R1--R3}} together with
applications of rule~R8 which are all body weakenings. 
Since by hypothesis of Theorem~\ref{thm:unsat-preserv}, Condition~(U) does hold,
by applying Theorem~\ref{thm:unsat} we get the thesis.~\hfill~$\Box$
%
 

\section{An Algorithm for ADT Removal}
\label{sec:Strategy}

In this section we present Algorithm~\Diff~for 
eliminating ADT terms from CHCs 
by using the transformation rules R1--R7 presented in Section~\ref{sec:TransfRules}
and automatically introducing suitable 
difference predicates.
Then we show that, if Algorithm~\Diff~terminates, it transforms a set {\it Cls} 
of clauses into a new set $\mathit{TransfCls}$
where every argument of every predicate has a basic type.
Theorem~\ref{thm:unsat-preserv} (see Section~\ref{subsec:soundness}) guarantees that if $\mathit{TransfCls}$ is satisfiable,
then also {\it Cls} is satisfiable.  

\subsection{The ADT removal Algorithm~\Diff}
\label{subsec:algoR}

Algorithm~\Diff~(see Figure~\ref{fig:AlgoR}) 
removes ADT terms starting from the set~$\mathit{Gs}$ of 
goals in~$\mathit{Cls}$. Initially, those goals are all
collected in the set~$\mathit{InCls}$. 
The set~$\mathit{Defs}$ collects 
the  definitions of the new predicates
introduced by applications of rule~R1 during the execution of
Algorithm~\Diff. Initially, we have that $\mathit{Defs}\!=\!\emptyset$.

\begin{figure}[!ht]
\vspace{-5mm}
\noindent \hrulefill\nopagebreak

\noindent {\bf Algorithm}~\Diff\\
{\em Input}: A set $\mathit{Cls}$ of clauses and a level mapping $\ell$ of the
predicates occurring in $\mathit{Cls}$.
{\em Output}: A set $\mathit{TransfCls}$ of clauses that have basic types.

\vspace*{-2mm}
\noindent \rule{2.0cm}{0.2mm}

\noindent 
Let $\mathit{Cls} = \mathit{Ds} \cup \mathit{Gs}$, where $\mathit{Ds}$ is a set of definite clauses and $\mathit{Gs}$ 
is a set of goals;

\noindent $\mathit{InCls}:=\mathit{Gs}$;
\noindent $\mathit{Defs}:=\emptyset$;
\noindent $\mathit{TransfCls}:=\emptyset;$

\noindent
{\bf while} $\mathit{InCls}\!\neq\!\emptyset$ {\bf do}

\hspace*{3mm}\vline\begin{minipage}{11.8cm} %
\makebox[4mm][l]{~$\scriptscriptstyle\blacksquare$}$\mathit{Diff\mbox{-}Define\mbox{-}Fold}(\mathit{InCls},\mathit{Defs}, 
\mathit{NewDefs},\mathit{FldCls});$

\makebox[4mm][l]{~$\scriptscriptstyle\blacksquare$}$\mathit{Unfold}(\mathit{NewDefs},\mathit{Ds},\mathit{UnfCls});$

\makebox[4mm][l]{~$\scriptscriptstyle\blacksquare$}$\mathit{Replace}(\mathit{UnfCls}, \mathit{Ds}, \mathit{RCls});$

\makebox[4.5mm][l]{}$\mathit{InCls}:=\mathit{RCls};$~~~
$\mathit{Defs}:=\mathit{Defs}\,\cup\mathit{NewDefs};$~~~
$\mathit{TransfCls}:=\mathit{TransfCls}\,\cup\mathit{FldCls};$
\end{minipage} %

\vspace*{1mm} 
\noindent \hrulefill
\vspace*{-2mm} 
\caption{The ADT removal algorithm~\Diff \label{fig:AlgoR}.}
\vspace*{-6mm}
\end{figure}

\noindent
Algorithm~\Diff~iterates a sequence made out of the following three procedures.~

\smallskip
\noindent
(1)~{\it Procedure $\mathit{Diff\mbox{-}Define\mbox{-}Fold}$} introduces, by rule~R1,  
a set $\mathit{NewDefs}$ of suitable new predicate definitions. 
By applications of the Folding Rule~R3 and, possibly, of the Differential Replacement Rule~R7,
using clauses in $\mathit{Defs} \cup \mathit{NewDefs}$, the procedure removes
the ADT terms from the input set $\mathit{InCls}$ of clauses.

The bodies (but not the heads) of the clauses in $\mathit{NewDefs}$ contain ADT terms, and thus they
need to be transformed to remove those terms. 

\smallskip
\noindent
(2)~{\it Procedure $\mathit{Unfold}$} performs some steps of symbolic evaluation
of the newly introduced definitions by applying the Unfolding Rule~R2 to
the clauses occurring in $\mathit{NewDefs}$.

\smallskip
\noindent
(3)~{\it Procedure $\mathit{Replace}$} removes clauses that have an unsatisfiable body by applying rule~R4,
and also exploits the functionality and totality 
properties of the predicates by applying rules~R5 and~R6, respectively.

\smallskip
The clauses with ADTs obtained after the $\mathit{Replace}$ procedure
and the new predicate definitions introduced at each iteration,
are added to $\mathit{InCls}$ and $\mathit{Defs}$, respectively.
Algorithm~\Diff~terminates when the set~$\mathit{InCls}$ of clauses
becomes empty because no new definitions need to be introduced 
to perform folding steps.

\smallskip

Note that Algorithm~\Diff~takes as input also a level 
mapping~$\ell$ for the 
predicates occurring in {\it Cls}.
In our implementation, 
however, no function $\ell$ is actually provided and, instead,
a suitable level mapping is constructed during the execution of the 
algorithm itself. We do this construction %
by following a general constraint-based approach for guaranteeing 
the correctness of logic program transformations~\cite{Pe&12a}. 
In particular, given an initially empty
set~$L$ of constraints, each time~\Diff~applies a transformation rule
whose soundness depends on the satisfaction of a constraint
on the predicate levels (see, in particular the conditions
in rules~R1, R3, R7, and Condition (U) in Theorem~\ref{thm:unsat-preserv} for R2),
that constraint is added to the set~$L$. 
For the soundness of
Algorithm~\Diff, it is required that at the end of its execution,
the set $L$ be satisfiable. A solution of~$L$ provides 
the level mapping~$\ell$ to be constructed.
In order not to burden  %
the presentation with too many 
technical details, we will not present here the actual constraint 
handling mechanism used for the 
construction of the function~$\ell$.

\begin{example}[Reverse]\label{ex:rev1}
Throughout this section we will use the {\it Reverse} 
example of Section~\ref{sec:IntroExample} as a running example
for illustrating an application of the ADT removal algorithm~\Diff.
In that example, the set {\it Cls} of clauses given as input to~\Diff~consists
of clauses {\tts 1}--{\tts 9}, with ${\it Gs} = \{${\tts 1}$\}$
and ${\it Ds} = \{${\tts 2}$,\ldots,${\tts 9}$\}$.
Thus, ${\it InCls}$ is initialized to \{{\tts 1}\}.
We assume that the following level mapping $\ell$ is associated with
the predicates occurring in clauses {\tts 1}--{\tts 9}:
$\ell(${\tts append}$) = \ell(${\tts reverse}$) = 2$, and
$\ell(${\tts snoc}$) = \ell(${\tts len}$) = 1$. \hfill $\Box$

\end{example}

\subsection{Procedure $\mathit{Diff\mbox{-}Define\mbox{-}Fold}$}
\label{subsec:diff}

In order to present the $\mathit{Diff\mbox{-}Define\mbox{-}Fold}$ procedure used by Algorithm~\Diff,  
first we introduce the following notions.

Given a conjunction $G$ of atoms, by $\mathit{bvars}(G)$ 
we denote the set of variables in~$G$ that have a basic type.
Similarly, by $\mathit{adt\mbox{-}vars}(G)$
we denote the set of variables in $G$ that have an ADT type.

\begin{definition}
We say that an atom $($or a clause$)$ {\em has basic types} if {\em all\/} its 
arguments $($or atoms, respectively$)$
have a basic type.
An atom $($or a clause$)$ {\em has ADTs\/} if {\em at least one} of its arguments $($or atoms, respectively$)$ has an ADT type. 
\end{definition}

\begin{definition}
{Given a set $($or a conjunction$)$ $S$ of atoms, $\mathit{SharingBlocks}(S)$ 
denotes the partition of $S$ with respect to
the reflexive, transitive closure, denoted~$\Downarrow_S$, of the relation~$\downarrow_S$ defined as follows.
Given two atoms $A_1$ and $A_2$ in $S$, 
$A_1\! \downarrow_S\! A_2$ holds 
iff $\mathit{adt\mbox{-}vars}(A_1) \cap 
\mathit{adt\mbox{-}vars}(A_2)\!\neq\! \emptyset$.}
The elements of the partition are called the {\em sharing blocks} of~$S$.
We say that $S$  is  
{\em connected} if $\mathit{SharingBlocks}(S)=\{S\}$.
\end{definition}

\begin{definition}
A {\em generalization}  of a pair $(c_1,c_2)$ of constraints is a constraint, denoted $\alpha (c_1,c_2)$,
such that $\mathbb D \models\forall (c_1 \rightarrow \alpha (c_1,c_2))$ and 
$\mathbb D \models\forall (c_2 \rightarrow \alpha (c_1,c_2))$.
\end{definition}

In particular,
we consider the following generalization operator based on 
{\it widening}~\cite{CoH78,Fi&13a}.
Suppose that $c_1$ is the conjunction $(a_1,\ldots,a_m)$ of atomic 
constraints, then $\alpha (c_1,c_2)$ is defined to be
 the conjunction of 
all $a_i$'s in $(a_1,\ldots,a_m)$
such that \mbox{$\mathbb D\!\models\!\forall (c_2\! \rightarrow\! a_i)$}.
In order to improve the efficacy of generalization, when some of the $a_i$'s 
are {\it LIA} equalities, they are split into conjunctions of 
{{\it LIA}} inequalities
before applying widening.

\begin{definition}\label{def:projection}
For any constraint $c$ and tuple $V$ of variables, the {\em projection} 
of~$c$ onto $V$ is a constraint $\pi(c,V)$ such that: 
$(i)$~$\mathit{vars}(\pi(c,V))\!\subseteq\! V$, and
$(ii)$~$\mathbb D \models \forall (c\!\rightarrow\!\pi(c,V))$.
\end{definition}

In our implementation,
$\pi(c,V)$ is computed by applying to the formula $\exists Y.\, c$, where \mbox{$Y\!=\! \vars(c)\! \setminus\! V,$}
a quantifier elimination algorithm for the {theories} of booleans and 
{\it rational} {(not integer)}
numbers. This implementation is safe in our context, 
because it guarantees properties~$(i)$ and~$(ii)$ 
{of Definition~\ref{def:projection}}, 
and avoids relying on
modular arithmetic, as usually done when eliminating quantifiers in 
{\it LIA}~\cite{Rab77}.

\begin{definition}%
For two conjunctions $G_1$ and $G_2$ of atoms, we say that 
$G_1$ {\em atomwise subsumes} $G_2$, denoted $G_1\Embedded G_2$,
if $G_1$ is the conjunction $(A_1,\ldots,A_n)$ 
and there exists a subconjunction $(B_{1},\ldots, B_{n})$
of atoms of 
$G_2$ $($modulo reordering$)$ and substitutions 
$(\vartheta_{1},\ldots,\vartheta_{n})$ such that, 
for $i\!=\!1,\ldots,n,$ we have that $B_{i}\!=\!A_{i}\vartheta_{i}$.
\end{definition}

Now let us present the $\mathit{Diff\mbox{-}Define\mbox{-}Fold}$ procedure 
(see Figure~\ref{fig:Diff}).
At each iteration of the body of the {\bf for} loop, 
the $\mathit{Diff\mbox{-}Define\mbox{-}Fold}$ procedure removes the ADT terms occurring in
a sharing block $B$ of the body of a clause~$C\!:$ $H\!\leftarrow\! c,  B, G'$ of 
$\mathit{InCls}$ (initially, $\mathit{InCls}$ is a set of goals whose head is {\it false}).
This is done by possibly introducing some new definitions using 
the Definition Rule~R1
and applying the Folding Rule~R3. To allow folding, some applications
of the Differential Replacement Rule~R7  may be needed.
We have the following four cases.

\begin{figure}[!ht]
\noindent \hrulefill \nopagebreak

\noindent {\bf Procedure $\mathit{Diff\mbox{-}Define\mbox{-}Fold}(\mathit{InCls},\mathit{Defs}, 
	\mathit{NewDefs},\mathit{FldCls})$}
\\
{\em Input}\/: A set {\it InCls} of clauses and a set {\it Defs} of definitions;
\\
{\em Output}\/: A set {\it NewDefs} of definitions and a set $\mathit{FldCls}$ 
of clauses with basic types.

\vspace{-2mm}
\noindent \rule{2.0cm}{0.2mm}

\noindent $\mathit{NewDefs} := \emptyset; \ \mathit{FldCls}:= \emptyset$;

\noindent {\bf for} each clause $C$: $H\leftarrow c, G$ in $\mathit{InCls}$ {\bf do}

\noindent 
\hspace{3mm}{\bf if} $C$ has basic types {\bf then}
$\mathit{InCls}\! :=\! \mathit{InCls}\!\setminus\! \{C\}$;\ $\mathit{FldCls}:=\mathit{FldCls}\cup\{C\}$

\noindent 
\hspace*{3mm}{\bf else}
\vspace{1mm}

\hspace{6mm}
\vline\hspace{1mm}\begin{minipage}{11.4cm}
let $C$ be $H\leftarrow c, B, G'$ 
where $B$ is a sharing block in $G$ such that $B$ contains at least one atom 
that has ADTs;
\\
$\bullet$ ({\bf Fold}) {\bf if} there is a clause $D$: $\mathit{newp}(U) \If d, 
B'$, which is a variant of a clause\\ 
\hspace*{3.7mm}in  $\mathit{Defs}\cup \mathit{NewDefs}$, with 
$U\!=\!\textit{bvars}(\{d,B'\})$,
such that:
(i)~$B=B'\vartheta$, {for some\\ \hspace*{3.7mm}substitution $\vartheta$ %
{acting on $\mathit{adt\mbox{-}vars}(B')$ only,}}
	and (ii)~$\mathbb D \models\forall (c \rightarrow d)$, 
{\bf then}\\
\hspace*{3.7mm}\underline{\Down fold} $C$ using $D$ and derive $E$: $H\!\leftarrow c, \mathit{newp}(U),\! G'$; \\
$\bullet$ ({\bf Generalize}) {\bf else if} there is a clause 
$D$: $\mathit{newp}(U) \If d, B'$, which is a variant \\ 
\hspace*{3.7mm}of a clause in $\mathit{Defs}\cup
\mathit{NewDefs}$, with 
$U\!=\!\textit{bvars}(\{d,B'\})$,
such that: %
(i)~$B=B'\vartheta$, {\\ \hspace*{3.7mm}for some
	 substitution $\vartheta$ %
     acting on $\mathit{adt\mbox{-}vars}(B')$ only,} and
 (ii)~$\mathbb D \not\models\forall (c \rightarrow d)$,
{\bf then}
\\ %
\hspace*{4.5mm}\vline\hspace{1.5mm}\begin{minipage}{10.8cm} 
\underline{\Down introduce definition} $\mathit{GenD}$: $\mathit{genp}(U) \If \alpha(d,c), B'$
\\
\underline{\Down fold} $C$ using $\mathit{GenD}$ and derive $E$: $H\leftarrow c, \mathit{genp}(U), G'$; 
\\
$\mathit{NewDefs} := \mathit{NewDefs} \cup \{\mathit{GenD}\}$;
\end{minipage}\vspace{.5mm}
\\ 
$\bullet$ ({\bf Diff}-{\bf Introduce}) {\bf else if} there is a clause 
$D$:\,$\mathit{newp}(U) \If d, B'$, which is a\\\hspace*{3.9mm}variant of 
a clause in\,$\mathit{Defs}\,\cup \mathit{NewDefs}$, with 
$U\!=\!\textit{bvars}(\{d,\!B'\})$ and  
$B' \Embedded B$, {\bf then}
\\ %
\hspace*{4.5mm}\vline\hspace{1.5mm}\begin{minipage}{10.8cm} 

take a maximal connected subconjunction $M$ of $B$, if any, such that: 
\hangindent=3mm
\\
(i) $B\!=\!(M, F(X;Y))$, for some \Down non-empty conjunction $F(X;Y)$,
(ii) $B'\vartheta=(M, R(V;W))$, {for some
	 substitution $\vartheta$ %
     acting on $\mathit{adt\mbox{-}vars}(B')$ only}  and
$W \cap \vars(C) \!= \!\emptyset$, and 
(iii)~for every atom~$A$ in $F(X;Y)$, $\ell(H)>\ell(A)$; 

\underline{\Down introduce definition} \hangindent=15mm $\widehat{D}$: $\mathit{diff}(Z) \leftarrow \pi(c,X),F(X;Y),R(V;W)$
 
\hspace*{3mm}where $Z\!=\!\mathit{bvars}(\{F(X;Y),R(V;W)\})$;

$\mathit{NewDefs} := \mathit{NewDefs} \cup \{\widehat{D}\}$;

\underline{\Down replace} $F(X;Y)$ by $(R(V;W), \mathit{diff}(Z))$ in $C$, and derive  clause\\
\hspace*{3mm}\rule{0mm}{2.9mm}$C'$: $H\leftarrow c, M, R(V;W), \mathit{diff}(Z), G'$;

{\bf if} $\mathbb D \models\forall (c \rightarrow d)$
               
\hspace*{3mm}{\bf then}~\underline{\Down fold} $C'$ using $D$ and derive $E$: $H\leftarrow c,\mathit{newp}(U), \mathit{diff}(Z), G'$;

\hspace*{3mm}{\bf else} %
\hspace{.5mm}\vline\hspace{1.5mm}\begin{minipage}[t]{9.5cm}%
\underline{\Down introduce\,definition}\,$\mathit{GenD}$:\,$\mathit{genp}(U) \!\If\!\alpha(d,c), B'$;

\underline{\Down fold} $C'$ using $\mathit{GenD}$ and derive $E$: $H\leftarrow c, \mathit{genp}(U), \mathit{diff}(Z), G'$; 

$\mathit{NewDefs} := \mathit{NewDefs} \cup \{\mathit{GenD}\}$;
\end{minipage} %
\end{minipage}

\noindent
$\bullet$ ({\bf Project}) {\bf else}

\hspace*{4.5mm}\vline\hspace{1.5mm}\begin{minipage}{10.8cm}
\underline{\Down introduce definition} $\mathit{ProjC}$: $\mathit{newp}(U) \If \pi(c,Z), B$ 
\ where $U\!=\!\mathit{bvars}(B)$ \\ 
and $Z$ are the input variables of a basic type in $B$;

\underline{\Down fold} $C$ using $\mathit{ProjC}$ and derive  
clause $E$: $H\leftarrow c, \mathit{newp}(U), G'$; 

$\mathit{NewDefs} := \mathit{NewDefs} \cup \{\mathit{ProjC}\}$;
\end{minipage}

\end{minipage}	%
\vspace{-1mm}

\noindent 
\hspace{3mm}$\mathit{InCls}\! :=\! (\mathit{InCls}\setminus \{C\}) \cup \{E\}$;

\nopagebreak %
\noindent \hrulefill
\vspace*{-1mm}
\caption{The {\it Diff-Define-Fold} procedure.
According to rule R1, the level of every new predicate~(either 
$\mathit{genp}$, or $\mathit{diff}$, or $\mathit{newp}$) 
introduced by the procedure, is equal to the maximum level of 
the atoms occurring in the body of its definition.
\label{fig:Diff}}
\vspace*{-4mm}
\end{figure}	

\smallskip
\noindent
$\bullet$ ({\bf Fold}). In this case the ADT terms in $B$ 
can be removed {by folding using a definition that has already
been introduced.
In particular, let us} suppose that $B$ is an instance, via a substitution $\vartheta$, of the 
conjunction of atoms in the body of a definition $D$ introduced at a previous iteration of the
{\it Diff-Define-Fold} procedure,
and constraint~$c$ in~$C$ entails the constraint in~$D$.
Since we have assumed that all terms of a basic type occurring in an atom are
distinct variables (see Section~\ref{sec:CHCs}), and the variables of $D$ 
can be freely renamed, we require that $\vartheta$ acts on
ADT variables only, and hence it is the identity on the variables of a basic type.
A similar assumption is also made in the next two cases (Generalize) and (Diff-Introduce).
Then, we remove the ADT arguments occurring in $B$ by folding $C$ 
using~$D$.
Indeed, by construction, 
all variables in the head of every definition introduced by 
Algorithm~\Diff~have a basic type. 

\vspace{1.5mm}
\noindent
$\bullet$ ({\bf Generalize}).
Suppose that the previous case does not apply.
Suppose also that there exists a definition $D$, introduced at a previous iteration of the
{\it Diff-Define-Fold} procedure, such that 
the sharing block $B$ is an instance of the conjunction $B'$ of 
the atoms in the body of $D$
and, unlike the (Fold) case, the constraint~$c$~in $C$ does {\em not} 
entail the constraint $d$ in $D$.
We introduce a new definition
$\mathit{GenD}\!:$ $\mathit{genp}(U) \If \alpha(d,c), B'$,
where: (i)~by construction, the constraint $\alpha(d,c)$ is a generalization 
of~$d$ such that $c$ entails $d$,
and (ii)~$U$ is the tuple of the variables of a basic type in $(d,B)$.
Then, we remove the ADT arguments occurring in~$B$ by folding~$C$ 
using~$\mathit{GenD}$. 

\smallskip
\noindent
$\bullet$ ({\bf Diff}-{\bf Introduce}).
Suppose that the previous two cases do not apply 
because the sharing block $B$ in clause~$C$ is not an instance of the conjunction of atoms in the body of any
definition introduced at a previous iteration of the procedure.
Suppose, however, that
$B$ {\em partially matches} the body of {an already introduced}
definition
$D$: $\mathit{newp}(U) \If d, B'$, that is, 
(i)~$B\!=\!(M, F(X;Y))$, and (ii)~for some substitution $\vartheta$ acting on
$\mathit{adt\mbox{-}vars}(B')$ only,
$B'\vartheta\!=\!(M, R(V;W))$ (see Figure~\ref{fig:Diff} for details).
Then, we introduce a difference predicate {\it diff\/} 
{defined by the clause} %
$\widehat{D}$: $\mathit{diff}(Z) \leftarrow \pi(c,X),F(X;Y),R(V;W),$
where $Z\!=\!\mathit{bvars}(\{F(X;Y),$ $ R(V;W)\})$ and, 
by rule~R7, we replace the conjunction
$F(X;Y)$ by the new conjunction $(R(V;W), \mathit{diff}(Z))$ in the body of~$C$,
thereby deriving $C'$. 
Finally, we remove the ADT arguments in~$B$ by folding $C'$ using either $D$ 
(if $c$ entails $d$) or 
a clause $\mathit{GenD}$ whose constraint is the generalization $\alpha(d,c)$
of the constraint~$d$ (if $c$ does {\em not} entail $d$) (again, 
see Figure~\ref{fig:Diff} for details).  

\smallskip
\noindent
$\bullet$ ({\bf Project}). Suppose that none of the previous three 
cases apply. Then,
we first introduce a new definition $\mathit{ProjC}$: 
$\mathit{newp}(U)\! \If\! \pi(c,Z), B$, 
 where $U\!=\!\mathit{bvars}(B)$ and $Z$ are the input variables of basic 
types in $B$, and then we can remove the ADT arguments occurring in the 
sharing block $B$ by {folding} $C$ using $\mathit{ProjC}$.

\begin{example}[Reverse, Continued]\label{ex:rev2}
The body of goal~{\tts 1} (see Section~\ref{sec:IntroExample}) has a single sharing block, that is,

\smallskip
\noindent
$B_1$:~~{\tts append(Xs,Ys,Zs), reverse(Zs,Rs), len(Xs,N0), len(Ys,N1), len(Rs,N2)}  

\smallskip
\noindent
Indeed, we have that {\tts append(Xs,Ys,Zs)} shares a list variable with 
each of atoms {\tts reverse(Zs,Rs)}, {\tts len(Xs,N0)}\!, and {\tts len(Ys,N1)},
and atom {\tts reverse(Zs,Rs)} shares a list variable with {\tts len(Rs,N2)}.
None of the first three cases (Fold), (Generalize), or 
({Diff}-{Introduce}) applies, because $\mathit{Defs}\cup
\mathit{NewDefs}$ is the empty set. 
Thus, Algorithm~\Diff~introduces the following new definition (see also 
Section~\ref{sec:IntroExample}):

\vspace{-2mm}

{\small
\begin{verbatim}
D1. new1(N0,N1,N2) :- append(Xs,Ys,Zs), reverse(Zs,Rs), len(Xs,N0), 
                      len(Ys,N1), len(Rs,N2).
\end{verbatim}
}

\vspace{-2mm}

\noindent
where: 
(i)~{\tts new1} is a new predicate symbol,
(ii)~the body is the sharing block~$B_{1}$,
(iii)~{\tts N0,N1,N2} are the variables of basic types in $B_{1}$,  and
(iv)~the constraint is the empty conjunction {\tts true},
that is, the projection of the constraint {\tts N2=\textbackslash=N0+N1} occurring
in goal {\tts 1}
onto the input variables of basic types in~$B_{1}$ (i.e., the empty set,
as {\tts N0,N1,N2} are all output variables).
In accordance with rule R1, we set 
 $\ell(${\tts new1}$) = \mathit{max}\{\ell(${\tts append}$), \ell(${\tts reverse}$), \ell(${\tts len}$) \} = 2$.

By folding, from goal~{\tts 1} 
we derive a new goal without occurrences 
of list variables:

\smallskip
\noindent
{\tts{10. false :- N2=\textbackslash=N0+N1, new1(N0,N1,N2).}} 

\smallskip
\noindent
The presentation of this example will continue
in Example~\ref{ex:rev-continued} (see
Section~\ref{subsec:proc-unfold-replace}).~\hfill $\Box$
\end{example}

\vspace{-6mm}

\subsection{Procedures $\mathit{Unfold}$ and $\mathit{Replace}$}
\label{subsec:proc-unfold-replace}

The $\mathit{Diff\mbox{-}Define\mbox{-}Fold}$ procedure may
introduce new definitions with ADTs in their bodies (see, for instance, 
clause~{\tts D1} defining predicate {\tts new1} in Example~\ref{ex:rev2}). 
These definitions are added to {\it NewDefs} 
and transformed %
by the $\mathit{Unfold}$ and $\mathit{Replace}$ procedures.

Procedure {\it Unfold} (see Figure~\ref{fig:unfoldProc}) repeatedly 
applies rule~R2 in two phases.
In Phase~1 the procedure unfolds a given clause in {\it NewDefs} with respect to 
so-called {\em source} atoms in its body.
Recalling that each atom is the relational translation of a
function call, the source atoms represent innermost 
function calls in the functional expression corresponding
to the clause body.
The unfolding steps of Phase~1 may determine, by unification,
the instantiation of some input variables.
Then, in Phase~2 these instantiations are taken into account 
for performing further unfolding steps.
Indeed, the procedure selects for unfolding only atoms whose input 
arguments are instances of the corresponding arguments in the heads of their
matching clauses. 	

\begin{figure}[!ht]
\noindent \hrulefill
	
	\noindent {\bf Procedure $\mathit{Unfold}(\mathit{NewDefs},\mathit{Ds},\mathit{UnfCls})$}
	\\
	{\em Input}\/: A set $\mathit{NewDefs}$ of definitions and a set $\mathit{Ds}$ of definite clauses;
	\\
	{\em Output}\/: A set $\mathit{UnfCls}$ of definite clauses.
	
	\vspace*{-2.5mm} \noindent \rule{2.0cm}{0.2mm}

	\noindent
	$\mathit{UnfCls} := \mathit{NewDefs}$; 

	\vspace*{.5mm}
    \noindent
	Phase~1. \vline\vline\hspace{1.5mm}\begin{minipage}[t] {10.7cm}
	\noindent\hangindent=2mm
	- {For each clause $C$ in $\mathit{UnfCls}$, 
	mark as unfoldable a set $S$ of atoms in the body of $C$ such that:
	(i)~there is an atom $A$ in $S$ with $\ell(H)\!=\!\ell(A)$, {where $H$ is the head of $C$,} and
	(ii)~all atoms in $S\setminus \{A\}$ are source atoms such that
	every source variable of the body of $C$ occurs in $S$;}
	
	\noindent\hangindent=2mm
	- {\bf while} there exists a clause $C$: $H \leftarrow c, {L}, A, {R}$\, in $\mathit{UnfCls}$, 
	for some conjunctions $L$ and~$R$ of atoms, such that $A$ is an
	unfoldable atom {\bf do}
	
\vspace*{0.5mm}
	
\noindent
\hspace{5mm}$\mathit{UnfCls}:=(\mathit{UnfCls}\setminus\{C\}) \cup \mathit{Unf(C,A,Ds)}$;
	\end{minipage}

\vspace*{0.5mm}
\hspace{13mm}\rule{2.0cm}{0.2mm}

    \noindent
	Phase~2. \vline\vline\hspace{1.5mm}\begin{minipage}[t] {10.7cm}
	- Mark as unfoldable all atoms in the body of each clause in $\mathit{UnfCls}$;
	
\noindent \hangindent=2mm
- {\bf while} there exists a clause $C$: 	\mbox{$H \leftarrow c, {L}, A, {R}$} 
in $\mathit{UnfCls}$,  for some conjunctions $L$ and~$R$ of atoms,
	such that $A$ is a head-instance %
	with respect to~{\it Ds} 
	and $A$ is either unfoldable or descending {\bf do}
	
\vspace*{0.5mm}
	\noindent
	\hspace{5mm}$\mathit{UnfCls}:=(\mathit{UnfCls}\setminus\{C\}) \cup \mathit{Unf(C,A,Ds)}$;	\end{minipage}
	
\vspace*{0.5mm}	
\noindent \hrulefill
\vspace{-2mm}
\caption{The {\it Unfold} procedure.\label{fig:unfoldProc}}
\vspace*{-3mm}	
\end{figure}

In order to present the {\it Unfold} procedure in a formal way, we need the following notions.

\begin{definition}
A variable $X$ occurring in a conjunction $G$ of atoms is said to be 
a {\em source variable} if it is an input variable for an atom
in $G$ and not an output variable of any atom in $G$.
An atom $A$ in a conjunction $G$ of atoms is said to be a {\em source atom} if
all its input variables are source variables.
\end{definition}

For instance, in clause~{\tts 1} of 
Section~\ref{sec:IntroExample}, where the input variables of the
atoms {\tts append(Xs,Ys,Zs)}, {\tts reverse(Zs,Rs)}, {\tts len(Xs,N0)}, 
{\tts len(Ys,N1)}, and {\tts len(Rs,N2)} 
are {\tts (Xs,Ys)}, {\tts Zs}, {\tts Xs}, {\tts Ys}, and {\tts Rs}, respectively,
there are the following  three source atoms:
{\tts append(Xs\!,Ys\!,Zs)}, {\tts len(Xs\!,N0)}, and 
{\tts len(Ys\!,N1)}. 
These three atoms correspond to the innermost 
function calls which occur in the functional expression 
\mbox{{\tts len(reverse(append xs ys))} {\tts =}\hspace*{-1.6mm}{\tts /} 
{\tts (len xs)\,+\,(len ys)})}
corresponding
to the clause body.

\begin{definition}\label{def:head-instance}
An atom $A(X;Y)$ in the body of clause $C$: \mbox{$H \leftarrow c, {L}, A(X;Y), {R}$}
is a {\em head-instance} with respect to~a set~{$\mathit{Ds}$} of clauses if, 
for every clause $K\leftarrow d, B$ in~{$\mathit{Ds}$} such that: 
$(1)$~there exists a most general unifier $\vartheta$ of $A(X;Y)$ and
$K$, and $(2)$~the constraint $(c, d)\vartheta$ is
satisfiable, we have that $\vartheta$ is a variable renaming for~$X$.
\end{definition}

Thus, $A(X;Y)$ is a head-instance, if for all clause heads $K$ in $\mathit{Ds}$
the input variables $X$ are not instantiated by unification with $K$.
For instance, {with respect to the set $\{${\tts 2},\,{\tts 3}$\}$ of clauses of Section~\ref{sec:IntroExample}}, 
the atom {\tts append([X|Xs],Ys,Zs)} is a head-instance, while the atom
{\tts append(Xs,Ys,Zs)} is~not.

Recall that in a set {\it Cls}  of clauses, predicate $p$ {\it immediately depends on} predicate $q$,
if in\,{\it Cls} there is a clause of the form $p(\ldots) \leftarrow \ldots, q(\ldots), \ldots$
The {\it depends on} relation is the transitive closure of the {\it immediately depends on}
relation~\cite{Apt90}.

\begin{definition}
Let $\prec$ be a well-founded ordering on tuples of terms such that, for
all tuples of terms $t$ and $u,$ if $t\!\prec\! u$, then, for all substitutions $\vartheta,$ 
\mbox{$t\vartheta\!\prec\! u\vartheta$.} A predicate $p$ is {\em descending} with respect to~$\prec$
if, for all clauses, $p(t;u) \leftarrow c,\, p_1(t_1;u_1),\ldots,p_n(t_n;u_n),$
for $i\!=\!1,\ldots,n,$ if $p_i$ depends on $p$, then $t_i\!\prec\! t$.
An atom is descending if its predicate is descending.
\end{definition}

The well-founded ordering $\prec$ we use in our implementation is based on the {\it subterm}
relation and is defined as follows: {$(u_1,\ldots,u_m)\!\prec\! (v_1,\ldots,v_n)$
if for every~$u_i$ there exists $v_j$ such that $u_i$ is a (non necessarily strict) subterm of $v_j$, 
and there exists $u_i$ which is a {strict} subterm
of some~$v_j$.} For instance, the predicates {\tts append}, 
{\tts reverse}, {\tts snoc}, and {\tts len} in
our running example are all descending.

To control the application of rule R2 in Phases~1 and~2 of the {\it Unfold} procedure 
we mark as {\it unfoldable} some atoms in the body of a clause.
If we unfold with respect to~atom~$A$ clause $C$: \mbox{$H \!\leftarrow c, L, A, R$}, then 
the marking of the clauses
in $\mathit{Unf(C,A,Ds)}$ is {done} as follows: 
(i)~each atom derived from $A$ is not marked 
as unfoldable, and (ii)~each atom~$A''$ inherited from an atom~$A'$, 
different from~$A$,
in the body of $C$ is marked as unfoldable iff $A'$ is marked as unfoldable.

In Phase~1, for each clause $C$ in $\mathit{NewDefs}$ 
the procedure marks as unfoldable a non-empty set~$S$ of atoms in
the body of $C$ consisting of: 
(i)~an atom $A$ such that $\ell(H)\!=\!\ell(A)$, 
where $H$ is the head of $C$, and 
(ii)~a set of source atoms (possibly including $A$) such that
every source variable of the body of $C$ occurs in $S$.
Then, the procedure unfolds with respect to all
unfoldable atoms. 
Note that atom~$A$ exists because, by construction,
when we introduce a new predicate during the $\mathit{Diff\mbox{-}Define\mbox{-}Fold}$ procedure,
we set the level of 
the new predicate to the maximal level of an atom in the body of its definition.
The unfolding with respect to~$A$ 
enforces Condition~(U) of Theorem~\ref{thm:unsat-preserv},
and hence the soundness of Algorithm~\Diff.

In Phase~2 the instantiations of input variables determined by the unfolding steps of Phase~1 
are  {taken into account} for further applications of rule~R2.
Indeed, clauses are unfolded with respect to atoms which are head-instances and, 
in particular,
unfolding with respect to head-instances which are descending atoms, is repeated until no such atoms are present.

The termination of the procedure {\it Unfold} 
 is ensured by the following two facts:
(i) if a clause $C$ has $n\, (\geq\!1)$ atoms marked as unfoldable, and 
clause~$C$ is unfolded with respect to an atom~$A$
that is marked as unfoldable, then each clause in $\mathit{Unf(C,A,Ds)}$ 
has $n\!-\!1$ atoms marked as unfoldable, and 
(ii)~since $\prec$ is a well-founded ordering, 
it is not possible to perform an infinite sequence of applications
of the Unfolding Rule R2 with respect to descending atoms.

\begin{example}[Reverse, Continued]\label{ex:rev-continued}
The {\it Unfold} procedure marks as unfoldable atom
{\tts append(Xs,Ys,Zs)} in the body of clause
{\tts D1}, which has the same level as the head of the clause.
Atom {\tts append(Xs,Ys,Zs)} is also a source atom containing all
the input variables of the body of clause {\tts D1} (that is, {\tts Xs}
and {\tts Ys}).
Then, by unfolding clause {\tts D1} with respect to {\tts append(Xs,Ys,Zs)},
we get:

\vspace{1mm}

\noindent
{\tts 11. new1(N0,N1,N2)\!\! :- reverse(Ys,Rs), len([],N0), len(Ys,N1), len(Rs,N2).}

\noindent
{\tts 12. new1(N0,N1,N2)\!\! :- append(Xs,Ys,Zs)\!,\! reverse([X|Zs],Rs)\!,\! len([X|Xs],N0)\!,}

\noindent
\hspace{36mm}{\tts  len(Ys,N1), len(Rs,N2).}

\vspace{1mm}

\noindent
Now, atoms {\tts len([],N0)}, {\tts reverse([X|Zs],Rs)}, and {\tts len([X|Xs],N0)}
are all head-instances, and hence the procedure unfolds clauses {\tts 11} and {\tts 12}
with respect to these atoms. We get:

\vspace*{-2mm}

{\small
	\begin{verbatim}
13. new1(N0,N1,N2) :- N0=0, reverse(Zs,Rs), len(Zs,N1), len(Rs,N2).
14. new1(N01,N1,N21) :- N01=N0+1, append(Xs,Ys,Zs), reverse(Zs,Rs), 
	                        len(Xs,N0), len(Ys,N1), snoc(Rs,X,R1s), 
	                        len(R1s,N21).
	\end{verbatim}
}

\vspace*{-2mm}
The presentation of this transformation will continue in 
Example~\ref{ex:rev-continued2} below. \hfill $\Box$
\end{example}

\noindent
{Procedure {\it Replace} (see Figure \ref{fig:replaceProc}) applies rules~R4, R5, and~R6 as long as possible.
{\it Replace} terminates because each application of one of those rules decreases 
the number of atoms. }

\begin{figure}[!ht]
	\noindent \hrulefill
	
	\noindent {\bf Procedure $\mathit{Replace}(\mathit{UnfCls},\mathit{Ds},\mathit{RCls})$}
	\\
	{\em Input}\/: Two sets $\mathit{UnfCls}$ and $\mathit{Ds}$ of definite clauses;
	\\
	{\em Output}\/: A set $\mathit{RCls}$ of definite clauses.
	
	\vspace*{-2.5mm} \noindent \rule{2.0cm}{0.2mm}

	\noindent
	$\mathit{RCls} := \mathit{UnfCls}$; 

	\vspace*{.5mm}
	\noindent
	\begin{minipage}[t] {11.8cm}
		\noindent\hangindent=4mm
		{\bf repeat} 
		\\
		{\bf if} there is a clause $C\in \mathit{RCls}$ such that rule R4 is applicable to $C$\\ \hspace*{3mm}{\bf then}
		$\mathit{RCls} := \mathit{RCls} \setminus \{C\}$;
		\smallskip		\\
        {\bf if} there is a clause $C\in \mathit{RCls}$ such that the Functionality Rule R5 is applicable to $C$
        with respect to $\mathit{RCls}\cup \mathit{Ds}$, thus deriving a new clause $D$\\ \hspace*{3mm}{\bf then }
        $\mathit{RCls} := (\mathit{RCls} \setminus \{C\}) \cup \{D\}$;
		\smallskip\\
        {\bf if} there is a clause $C\in \mathit{RCls}$ such that the Totality Rule R6 is applicable to $C$
        with respect to $\mathit{RCls}\cup \mathit{Ds}$, thus deriving a new clause $D$\\ \hspace*{3mm}{\bf then }
        $\mathit{RCls} := (\mathit{RCls} \setminus \{C\}) \cup \{D\}$;		
		
		\smallskip
		{\bf until} no rule in $\{$R4, R5, R6$\}$ is applicable to a clause in $\mathit{RCls}$
		
	\end{minipage}
		
	\smallskip	
	\noindent \hrulefill
	\vspace{-2mm}
	\caption{The {\it Replace} procedure.\label{fig:replaceProc}}
	\vspace*{-3mm}	
\end{figure}

\begin{example}[Reverse, Continued]\label{ex:rev-continued2} 
Neither rule~R5 nor rule~R6 is applicable to clauses~{\tts 13} and~{\tts 14}.
Thus, the first iteration of the body of the {\bf while-do}
loop of Algorithm~\Diff~terminates with
$\mathit{InCls}\!=\!\{${\tts 13,14}$\},$
$\mathit{Defs}\!=\!\{${\tts D1}$\},$ and
$\mathit{TransfCls}\!=\!\{${\tts 10}$\}$.

Now, the second iteration starts off by executing the 
{\it Diff-Define-Fold} procedure.
The procedure handles the two clauses {\tts 13} and {\tts 14} in $\mathit{InCls}$.

\smallskip
\noindent
For clause {\tts 13}, the {\it Diff-Define-Fold} procedure
applies case (Project).
Indeed, the body of clause {\tts 13} has the following single sharing block:

\smallskip

\noindent
$B_{13}$:~~{\tts reverse(Zs,Rs), len(Zs,N1), len(Rs,N2)}

\smallskip

\noindent
and there is no clause $\mathit{newp}(V) \If d, B'$ in $\textit{Defs}\,\cup\textit{NewDefs}$
such that $B_{13}$ is an instance of $B'$.
Thus, the procedure 
adds to \textit{NewDef} the following new definition
(see also Section~\ref{sec:IntroExample}):

\smallskip

\noindent
{\tts D2. new2(N1,N2) :- reverse(Zs,Rs), len(Zs,N1), len(Rs,N2).}

\smallskip
\noindent
and, by folding clause~{\tts 13}, we get:

\smallskip

\noindent
{\tts 15. new1(N0,N1,N2) :- N0=0, new2(N1,N2).}

\smallskip

\noindent
which has basic types and hence it is added to {\it FldCls}.
This clause is then added to the output set {\it TransfCls} 
(see Figure~\ref{fig:AlgoR}).

\medskip
\noindent
For clause {\tts 14}, the {\it Diff-Define-Fold} procedure
applies case (Diff-Introduce).
Indeed, the body of clause {\tts 14} has the following single sharing block:

\smallskip

\noindent
$B_{14}$:~~{\tts append(Xs,Ys,Zs)\!,} {\tts reverse(Zs,Rs)\!,}\ {\tts len(Xs,N0)\!,}\ {\tts len(Ys,N1)\!,}\,

\hspace{3.7mm}{\tts snoc(Rs,X,R1s)\!,}\ {\tts len(R1s,N21)}
          
\smallskip

\noindent
and we have that $B_1 \Embedded B_{14},$ where $B_1$ is the body of 
clause {\tts D1},
which is the definition introduced as explained in Example~\ref{ex:rev2} above.
The procedure constructs the conjunctions defined at Points (i)--(iii) of (Diff-Introduce)
as follows:

\smallskip
\noindent
$M$= ({\tts append(Xs,Ys,Zs)\!,} {\tts reverse(Zs,Rs)\!, len(Xs,N0)\!, len(Ys,N1)}),

\noindent
$F(X;Y)$\! =\! ({\tts  snoc(Rs,X,R1s)\!,\;len(R1s,N21)}), ~where 
$X$={\tts  (Rs,X)}, $Y$={\tts  (R1s,N21)},

\noindent
$R(V;W)$\! =\! {\tts len(Rs,N2)}, ~where $V$=\,{\tts (Rs)},  $W$=\,{\tts (N2)}.

\smallskip
\noindent
In this example, $\vartheta$ is the identity substitution. Morevover,
the condition on the level mapping~$\ell$ required in the 
{\it Diff-Define-Fold} Procedure of Figure~\ref{fig:Diff} 
is fulfilled because
$\ell(${\small{\tts new1}}$)\!>\!\ell(${\small{\tts snoc}}$)$ and 
$\ell(${\small{\tts new1}}$)\!>\!\ell(${\small{\tts len}}$)$.
Thus, the definition $\widehat{D}$ 
to be introduced~is:

\vspace{-1mm}

{\small
	\begin{verbatim}
	D3. diff(X,N2,N21) :- snoc(Rs,X,R1s), len(R1s,N21), len(Rs,N2).
	\end{verbatim}
}
\vspace{-1mm}

\noindent
Indeed, we have that: (i)~the projection $\pi(c,X)$ is 
$\pi(${\small{\tts N01=N0+1}},\,{\small{\tts (Rs,X)}}$)$, that is, the empty conjunction {\tts true},
(ii)~$F(X;Y),\, R(V;W)$ is the body of clause~{\tts D3},
and (iii)~the head variables {\tts N2}, {\tts X}, 
and {\tts N21} are the integer
variables in that body. 

Note that:
(i)~clause~{\tts D3} %
is the one we have presented  
in Section~\ref{sec:IntroExample}, 
and (ii)~the relationship between the sharing blocks~$B_{1}$ and~$B_{14}$,
which occur in the body of clauses~{\tts D1} and~{\tts 14}, respectively,
formalizes the notion of mismatch between the bodies of clauses~{\tts D1} 
and~{\tts D1$^{\textstyle *}$} described in Section~\ref{sec:IntroExample},
because clause~{\tts 14} is the same as clause~{{\tts D1$^{\textstyle *}$}.}

Then,  by applying rule~R7 to clause~{\tts 14}, the 
conjunction `{\tts  snoc(Rs,X,R1s)\!,} {\tts len(R1s\!,N21)\!}' can be
replaced by the new conjunction\,`{\tts len(Rs\!,N2)\!,} 
{\tts diff\!(X\!,N2\!,N21)\!}', and we get the clause:

\vspace{-1mm}

{\small
	\begin{verbatim}
	16. new1(N01,N1,N21) :- N01=N0+1, append(Xs,Ys,Zs), reverse(Zs,Rs), 
	                        len(Xs,N0), len(Ys,N1), len(Rs,N2), 
	                        diff(X,N2,N21).
	\end{verbatim}
}

\vspace{-1mm}

\noindent
Finally, by folding clause~{\tts 16} using clause~{\tts D1}, 
we get the following clause:

\vspace{-1mm}

{\small
	\begin{verbatim}
	17. new1(N01,N1,N21) :- N01=N0+1, new1(N0,N1,N2), diff(X,N2,N21).
	\end{verbatim}
}

\vspace{-1mm}

\noindent
which has no list arguments and hence it is added to {\it FldCls}.
This clause is then added to the output set {\it TransfCls}.

\smallskip
Algorithm \Diff~proceeds by applying the \textit{Unfold} and \textit{Replace} procedures
to clauses~{\tts D2} and~{\tts D3}. Then, a final execution of the {\it Diff-Define-Fold}
procedure allows us to fold all clauses with ADT terms 
and derive clauses with basic types,
without introducing any new definition. Thus, \Diff~terminates and its output \textit{TransfCls}
is equal  (modulo variable renaming) to the set \textit{TransfRevCls} of clauses listed in Section~\ref{sec:IntroExample}.
\hfill$\Box$
\end{example}

\subsection{Termination of {Algorithm}~\Diff}

As discussed above, each execution of the {\it Diff-Define-Fold}, 
{\it Unfold}, and {\it Replace} procedures terminates.
However, Algorithm~\Diff~might not terminate because 
new predicates may be introduced
by {\it Diff-Define-Fold} at each iteration of 
the {\bf while}-{\bf do} of~\Diff, and the loop-exit condition
$\textit{InCls}\neq \emptyset$ might be never satisfied.

Thus, {Algorithm}~\Diff~terminates if and only if,
during its execution, the {\it Diff-Define-Fold} procedure 
introduces a finite set of new predicate definitions.
A  way of achieving this finiteness property
is to combine the use of a generalization operator for constraints (see Section~\ref{subsec:diff})
with a suitable generalization strategy for the conjunctions of atoms 
that can appear in the body of the definitions
(see, for instance, the {\em most specific generalization}
used by {\em conjunctive partial deduction}~\cite{De&99}).
It should be noticed, however, that an effect of a badly designed generalization
strategy could be an ADT removal algorithm that often terminates and returns a
set of unsatisfiable CHCs whereas the initial clauses were satisfiable
(in other terms, the transformation would often generate {\em spurious counterexamples}).
  
The study of suitable generalization strategies and also
the study of classes of CHCs for which a suitable modification of
Algorithm~\Diff~terminates are beyond the scope of the present 
paper. 
Instead, in Section~\ref{sec:Experiments}, we evaluate the effectiveness of Algorithm~\Diff~from
an experimental viewpoint.

\subsection{Soundness of {Algorithm}~\Diff}
\label{subsec:soundR}

The soundness of~\Diff~follows from the soundness of the transformation 
rules, and hence we have the following result.

\begin{theorem}[Soundness of {Algorithm}~\Diff] 
\label{thm:soundness-AlgorithmR}
Suppose that {Algorithm}~\Diff~terminates 
for an input set $\mathit{Cls}$ of clauses, and let $\mathit{TransfCls}$
be the output set of clauses.
Then, every clause in $\mathit{TransfCls}$ has basic types, and
if $\mathit{TransfCls}$ is satisfiable, then $\mathit{Cls}$ is satisfiable.
\end{theorem}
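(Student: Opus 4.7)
The plan is to obtain soundness by packaging the entire execution of Algorithm~\Diff\ as a transformation sequence $P_0 \Rightarrow P_1 \Rightarrow \ldots \Rightarrow P_n$ with $P_0=\mathit{Cls}$ to which Theorem~\ref{thm:unsat-preserv} applies, and then to relate the satisfiability of $\mathit{TransfCls}$ to that of the final set~$P_n$. The argument splits naturally into the two claims of the statement.

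For the basic-types claim, I would argue by inspection and induction on the iterations of the {\bf while}-{\bf do} loop: each clause added to $\mathit{TransfCls}$ comes from $\mathit{FldCls}$, which in turn contains only clauses that either already have basic types on entering the {\bf for}-loop body of $\mathit{Diff\mbox{-}Define\mbox{-}Fold}$, or are derived by folding so that each atom in the body of the resulting clause is either a basic-type atom inherited from $C$ or one of $\mathit{newp}(U)$, $\mathit{genp}(U)$, $\mathit{diff}(Z)$ whose arguments are basic-type by construction of $U$ and $Z$.

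For the satisfiability claim, the main step is to show that the rule applications performed by $\mathit{Diff\mbox{-}Define\mbox{-}Fold}$, $\mathit{Unfold}$, and $\mathit{Replace}$ can be linearised into a legitimate transformation sequence using rules R1--R7, and that its final set $P_n$ is exactly $\mathit{Ds}\cup\mathit{TransfCls}$. Indeed, the original definite clauses $\mathit{Ds}$ are never modified (they are only used as reference clauses for applications of R2 through the calls to $\mathit{Unf}(C,A,\mathit{Ds})$), the original goals are removed once they are replaced by their folded images, and every definition introduced by an application of R1 is eventually unfolded and its unfolded versions successively folded into basic-type clauses that are placed into $\mathit{TransfCls}$ (because termination of \Diff\ occurs only when $\mathit{InCls}$ becomes empty). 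One then observes that any $\mathbb D$-model $\mathcal M$ of $\mathit{TransfCls}$ can be combined with the least $\mathbb D$-model $M(\mathit{Ds})$ of $\mathit{Ds}$ to yield a $\mathbb D$-model of $P_n$, since the new predicate symbols $\mathit{newp}$, $\mathit{genp}$, $\mathit{diff}$ occurring in $\mathit{TransfCls}$ are disjoint from the predicates of $\mathit{Ds}$.

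The main obstacle is verifying that Condition~(U) of Theorem~\ref{thm:unsat-preserv} holds along the constructed sequence: every folding step using a definition $D$: $H\leftarrow c,B$ must be accompanied elsewhere in the sequence by an unfolding of $D$ with respect to some atom $A$ satisfying $\ell(H)=\ell(A)$. This is precisely the role of Phase~1 of the $\mathit{Unfold}$ procedure, which for each newly introduced definition marks as unfoldable, and then unfolds, an atom $A$ in its body whose level coincides with the head's level; such an atom exists because rule R1, as applied inside $\mathit{Diff\mbox{-}Define\mbox{-}Fold}$, assigns to every new head predicate the maximum level among its body atoms. A routine bookkeeping argument then checks that all remaining side conditions of R1--R7 (the variable conditions on $W$ and the entailments in R7, the entailment in R3, the level conditions in R1, R3, R7, and the satisfiability/functionality/totality preconditions of R4--R6) are all ensured by the corresponding explicit tests in the three procedures. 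With Condition~(U) in hand, Theorem~\ref{thm:unsat-preserv} delivers $P_n$ satisfiable $\Rightarrow$ $P_0=\mathit{Cls}$ satisfiable, which chained with the model-extension observation completes the proof.
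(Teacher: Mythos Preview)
Your proposal is correct and follows essentially the same line as the paper: view the run of Algorithm~\Diff\ as a transformation sequence built from rules R1--R7, observe that Phase~1 of the \textit{Unfold} procedure guarantees Condition~(U) for every definition introduced by R1, and invoke Theorem~\ref{thm:unsat-preserv}. The paper's own proof is much terser---it neither spells out the basic-types inspection nor the bridge from satisfiability of $\mathit{TransfCls}$ to satisfiability of $P_n=\mathit{Ds}\cup\mathit{TransfCls}$ that you supply via the model-combination observation.
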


\begin{proof}
Each procedure used in {Algorithm}~\Diff~consists of a sequence of 
applications of rules R1--R7.
Moreover, the Unfold procedure ensures that
each clause $H \leftarrow c,B$  introduced by rule R1 is unfolded with respect 
to an atom $A$ in $B$ such that $\ell(H)\!=\!\ell(A)$. 
Thus, Condition~(U) of the hypothesis of Theorem~\ref{thm:unsat-preserv} holds
for any transformation sequence generated by {Algorithm}~\Diff, and hence
the thesis follows from Theorem~\ref{thm:unsat-preserv}. 
\hfill~$\Box$
\end{proof}

\section{Preserving Completeness}
\label{sec:Completeness}
In the previous Sections~\ref{sec:TransfRules} and~\ref{sec:Strategy}, 
we have shown the soundness of the transformation rules, and hence the soundness of 
Algorithm~\Diff.

However, the use of rules R1--R7, with the restrictions mentioned in
Theorem~\ref{thm:unsat-preserv}, does {\em not} preserve completeness, in the sense that, 
we may construct a transformation sequence $P_0 \Rightarrow P_1 \Rightarrow \ldots \Rightarrow P_n$, 
where $P_0$  is a 
satisfiable set of clauses and $P_n$ is a set of unsatisfiable clauses.
Thus, the hypotheses of Theorem~\ref{thm:unsat-preserv} do not guarantee that
 {Algorithm}~\Diff~preserves completeness.
In this section we will introduce some sufficient conditions that guarantee the 
preservation of completeness.

\subsection{Completeness of the transformation rules}
\label{subsec:completeness}

Completeness may be affected by the 
use of rule~R3 or rule~R7, as shown by the following two  examples.
In these examples the variables 
range over the integers~${\mathbb Z}$
or the lists  $\mathbb{L}$ of integers, according to their type.

\begin{example}\label{ex:false-pos-fold}
Let us consider the following set of clauses:

\vspace{1mm}
\noindent
\begin{minipage}[t]{8mm}
$P_{0}$:
\end{minipage}
\begin{minipage}[t]{50mm}
{\small
\begin{verbatim}
1. false :- Y>0, a([],Y).         
2. a([],Y) :- Y=0.       
3. a([H|T],Y) :- Y=1.    
\end{verbatim}
}
\end{minipage}

\smallskip

\noindent
We introduce the following clause defining a new predicate by rule R1:

\vspace{1mm}
\hspace{4mm}{\tts{4. newp(Z) :- a(X,Z).}} 
\vspace{1mm}

\noindent
and we get $P_1\!=\!\{ ${\tts 1,2,3,4}\}. Now, we unfold clause {\tts 4} 
and we derive the clauses:

\vspace{1mm}
\hspace{4mm}{\tts{5. newp(Y) :- Y=0.}}

\hspace{4mm}{\tts{6. newp(Y) :- Y=1. }} 
\vspace{1mm}

\noindent
We get $P_2\!=\!\{${\tts 1,2,3,5,6}\}. Finally, we fold clause {\tts 1} using clause {\tts 4},
which belongs to $\textit{Defs}_2$, and we derive the set of clauses:

\vspace{1mm}
\noindent
\begin{minipage}[t]{8mm}
$P_{3}$:
\end{minipage}
\begin{minipage}[t]{50mm}
{\small
\begin{verbatim}
1f. false :- Y>0, newp(Y).       
2.  a([],Y) :- Y=0.       
3.  a([H|T],Y) :- Y=1. 
5.  newp(Y) :- Y=0.
6.  newp(Y) :- Y=1.    
\end{verbatim}
}
\end{minipage}
\vspace{1mm}

\noindent
Now, we have that $P_{0}$ is satisfiable (because {\tts a([],Y)} holds for
{\tts Y=0} only), while $P_{3}$ is unsatisfiable (because {\tts newp(Y)} holds for {\tts Y=1}). 

Let us explain why, in this example, folding affects completeness.
By applying the Folding Rule~R3 to clause~{\tts 1},
we have replaced atom {\tts a([],Y)}
by atom {\tts newp(Y)}, which, by clause~{\tts 4},
is equivalent to
{\tts $\mathtt{\exists}$X.a(X,Y)} (because~{\tts X} does not occur in the 
head of clause~{\tts 4}).
Thus, folding is based on a substitution for the existentially quantified variable
{\tts X} which is not the identity.
 Now in
$M(P_{1})$, which is \{{\tts a([],0)}, {\tts a([h|t],1)}, 
{\tts newp(0)}, {\tts newp(1)}
$\mid$  {\tts h}\,$\in\!{\mathbb Z}$,  {\tts t}\,$\in\!{\mathbb L}$\},
atoms {\tts a([],Y)} and {\tts newp(Y)} are {\it not equivalent}.
Indeed, $M(P_{1})\models$ {\tts a([],Y)} $\rightarrow$ {\tts newp(Y)}, 
while $M(P_{1})\not\models$ {\tts newp(1)} 
$\rightarrow$ {\tts a([],1)}.
\hfill $\Box$
\end{example}

\begin{example}\label{ex:false-pos-diff}
Let us consider the following set of clauses:

\vspace{1mm}
\noindent
\begin{minipage}[t]{8mm}
$P_{0}$:
\end{minipage}
\begin{minipage}[t]{50mm}
{\small
\begin{verbatim}
1. false :- Y>0, a(X), f(X,Y). 
2. a([]).       
3. f([],Y) :- Y=0.       
4. f([H|T],Y) :- Y=1.    
5. r(X,W) :- W=1.
\end{verbatim}
}
\end{minipage}

\smallskip
\noindent
We introduce the following clause defining the new predicate {\tts diff} by 
rule R1:

\vspace{1mm}
\hspace{4mm}{\tts{6. diff(W,Y) :- f(X,Y), r(X,W). }} 
\vspace{1mm}

\noindent 
where: (i)~{\tts f(X,Y)} is a total, functional atom from 
{\small{\tt{X}}} to {\small{\tt{Y}}}, and 
(ii)~{\small{\tt{r(X,W)}}}  is a total, functional atom from 
{\small{\tt{X}}} to {\small{\tt{W}}}. Thus, we get 
$P_{1}\!=\!\{${\tts{1,2,3,4,5,6}}$\}$ and $\mathit{Defs}_{1}\!=\!\{${\tts{6}}$\}$.
By applying the Differential Replacement Rule~R7, from $P_{1}$ we  derive the following set of clauses:

\vspace{1mm}
\noindent
\begin{minipage}[t]{8mm}
$P_{2}$:
\end{minipage}
\begin{minipage}[t]{90mm}
{\small
\begin{verbatim}
1r. false :- Y>0, a(X), r(X,W), diff(W,Y). 
2.  a([]).        
3.  f([],Y) :- Y=0.       
4.  f([H|T],Y) :- Y=1.    
5.  r(X,W) :- W=1.
6.  diff(W,Y) :- f(X,Y), r(X,W). 
\end{verbatim}
}
\end{minipage}
\smallskip

\noindent 
Now, we have that $P_{0}$ is satisfiable 
(because {\tts a(X)} holds for {\tts X=[]} only, and {\tts f([],Y)} 
holds for {\tts Y=0} only),
while $P_{2}$ is unsatisfiable
(because the body of clause~{\tts 1r} holds for {\tts X=[]} and {\tts W=Y=1}). 

Let us now explain why, in this example, the application of rule~R7 
affects completeness. 
By applying rule~R7, we have replaced atom {\tts f(X,Y)}
by the conjunction `{\tts r(X,W),\;diff(W,Y)}', 
which by clause~{\tts 6} and the totality of {\tts r(X,W)} 
from~{\tts X} to {\tts W}, is implied by {\tts f(X,Y)}.
However, in 
$M(P_{1})$, which is \{{\tts a([])}, {\tts f([],0)}, {\tts f([h|t],1)}, 
{\tts r(u,1)}, {\tts diff(1,0)}, {\tts diff(1,1)} $\mid$ 
{\tts h}\,$\in\!\!{\mathbb Z}$,
{\tts t},\,{\tts u\,}$\in\!\!{\mathbb L}$\}, atom
{\tts f(X,Y)} and the conjunction `{\tts r(X,W),\;diff(W,Y)}' are 
{\it not equivalent}. Indeed, we have that
$M(P_{1}) \not\models$ ({\tts r([],1)}$\,\wedge\,${\tts diff(1,1)})  
$\rightarrow$ {\tts f([],1)}. 

\indent
{In particular, note that when applying rule~R7, 
we have replaced 
{\tts f(X,Y)}, which is functional from~{\tts X} to~{\tts Y}, by 
`{\tts r(X,W),\;diff(W,Y)}', which is not functional from 
{\tts X} to~{\tts (W,Y)} (indeed,  for all {\tts u}$\,\in\!{\mathbb L}$, 
we have that both `{\tts r(u,1),\;diff(1,0)}' and `{\tts r(u,1),\;diff(1,1)}'
do hold).}
This is due to the fact that {\tts diff(W,Y)} is {\em not functional\/} from {\tts W} to {\tts Y}.
\hfill $\Box$
\end{example}

Now, we will give some sufficient conditions that
guarantee that the transformation rules presented in Section~\ref{sec:TransfRules}
are complete, in the sense that
if a  set $P_0$ of CHCs is transformed into a new set $P_n$ by $n$ 
applications of the rules and $P_0$ is satisfiable,
then also $P_n$ is satisfiable.
Thus, when those conditions hold, the converse of the Soundness 
Theorem~\ref{thm:unsat-preserv} holds.

We consider the following Conditions~(E) and~(F) on 
the application of the Folding Rule~R3~\cite{EtG96,TaS84} 
and the Differential Replacement Rule~R7, respectively.

\begin{definition}[Condition E] \label{def:condR}
{\rm Let us assume that: (i)~we apply the Folding Rule~R3 for folding clause $C$: $H\leftarrow c, G_L,Q,G_R$ in $P_i$
using the definition~$D$: $K \leftarrow d, B$, and (ii)~$\vartheta$ is  
a substitution such that~\mbox{$Q\!=\! B\vartheta $} and
$\mathbb D\models \forall(c \rightarrow d\vartheta)$. 
We say that this application of rule~R3 \textit{fulfills Condition} (E) 
if the following  holds:

\vspace{1mm}\noindent
(E)~for every variable \( X\!\in\!\textit{vars}(\{d,B\})\!\setminus\!\textit{vars}(K)\), 

\noindent
\hspace{6mm}E1.~\(
X\vartheta \) is a variable not occurring in \( \{H,c,G_{L},G_{R}\}
\)~ and 

\noindent
\hspace{6mm}E2.~$X\vartheta$ does not occur in the term 
$Y\vartheta$, for any variable~$Y$ occurring\newline
\hspace*{13mm}in $(d, B)$ and different from $X$.
}
\end{definition}

In Condition~(F) below, we consider the particular case, which
is of our interest in this paper,
when rule~R7 is applied within the \textit{Diff-Define-Fold} procedure.

\begin{definition}[Condition F]\label{def:condF}{\rm
Let us assume that we apply the Differential Replacement
Rule R7 to clause $C$: $H\leftarrow c,  G_L, F(X;Y), G_R$ in~$P_i$ using 
the definition $\widehat{D}$: $\mathit{diff}(T_b,W_b,Y_b) \leftarrow d, F(X;Y), R(V;W)$
in $\mathit{Defs}_i$, where $T_b\!=\!\textit{bvars}(X\!\cup V)$,
$W_b\!=\!\textit{bvars}(W)$, and $Y_b\!=\!\textit{bvars}(Y)$.}
{\rm{We say that this application of rule~R7 \textit{fulfills Condition} (F) 
if the following holds:

\smallskip\noindent
\makebox[13mm][r]{(F)~F1.~}atom $\mathit{diff}(T_b,W_b,Y_b)$ is functional 
from $(T_b,W_b)$ to $Y_b$ with respect

\noindent\hspace{13mm}to $\textit{Definite}(P_0) \cup \mathit{Defs}_i$,

\noindent
\makebox[13mm][r]{F2.~}$Y \cap (V \cup \textit{vars}(d)) = \emptyset$, and

\noindent
\makebox[13mm][r]{F3.~}$\mathit{adt}\mbox{-}\mathit{vars}(Y) \cap 
\mathit{adt}\mbox{-}\mathit{vars}(\{H,c, G_L, G_R\}) = \emptyset$.

}}
\end{definition}

The following theorem
guarantees that, if Conditions~(E) and~(F) hold, then
the transformation rules R1--R7 are complete. 

\begin{theorem}[Completeness of the Transformation Rules]
\label{thm:sat-preserv}
Let $P_0 \Rightarrow P_1 \Rightarrow \ldots \Rightarrow P_n$ be a transformation sequence
using rules~{\rm R1--R7}.
Suppose that, for every application of~\,{\rm R3}, Condition {\rm (E)} holds,
and for every application of~\,{\rm \/R7}, Condition {\rm (F)} holds. 
If $P_0$ is satisfiable, then $P_n$
is satisfiable.
\end{theorem}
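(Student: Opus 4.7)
The plan is to mirror the structure of the proof of the Soundness Theorem~\ref{thm:unsat-preserv}, but in the opposite direction, exploiting Theorem~\ref{thm:pc} in place of Theorem~\ref{thm:cons}. In particular, I would first establish a lemma dual to Lemma~\ref{lemma:weakening}: under the hypotheses of the statement, every application of rules R4--R7 is a \emph{body strengthening} in the sense of Section~\ref{subsec:soundness}. Rules R4, R5, and R6 are then handled by essentially the same bookkeeping used in the soundness argument, with the two directions of implication swapped: for R4, replacing an unsatisfiable body by \textit{false} trivially satisfies the body-strengthening condition since \textit{false} entails anything; for R5, the conjunction $Y\!=\!Z,\, F(X,Y)$ immediately entails $F(X,Y),\, F(X,Z)$; and for R6, the required implication $\forall(\mathit{true}\rightarrow \exists Y.\,F(X;Y))$ is precisely the totality property $(\mathit{Total})$ that R6 already requires.

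The main obstacle is rule R7, and Condition~(F) is tailored exactly to make it a body strengthening. Let $\mathcal{M}=M(\textit{Definite}(P_0)\cup\mathit{Defs}_i)$ and suppose R7 replaces the subgoal $F(X;Y)$ of a clause $C$ by $R(V;W),\,\mathit{diff}(T_b,W_b,Y_b)$, using the definition $\widehat{D}$: $\mathit{diff}(T_b,W_b,Y_b)\leftarrow d,\,F(X;Y),\,R(V;W)$. The key claim to prove is
\[
\mathcal{M} \models \forall\bigl(\,R(V;W) \wedge \mathit{diff}(T_b,W_b,Y_b) \rightarrow \exists\mathit{adt\mbox{-}vars}(Y).\,F(X;Y)\,\bigr).
\]
Fix a valuation of the free variables satisfying the antecedent. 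By totality of $F(X;Y)$ from~$X$ to~$Y$, there exist values $(y^{b*},y^{a*})$ such that $F(X;(y^{b*},y^{a*}))$ holds in~$\mathcal{M}$. Condition~F2 ensures that $d$ does not mention $Y$, so Condition~(3) of R7 together with $c$ holding in $C$ yields $d$; combined with $R(V;W)$, the whole body of $\widehat{D}$ is then satisfied, so $\mathit{diff}(T_b,W_b,y^{b*})$ belongs to~$\mathcal{M}$. Condition~F1, the functionality of $\mathit{diff}$ from $(T_b,W_b)$ to $Y_b$, then forces $y^{b*}=Y_b$, so $F(X;(Y_b,y^{a*}))$ witnesses the existential. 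Condition~F3 guarantees that $\mathit{adt\mbox{-}vars}(Y)$ does not appear in the surroundings $H,c,G_L,G_R$, so the witness is consistent with the set $T_1$ of the body-strengthening condition of rule~R8.

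Once body strengthening is proved for R4--R7, the proof concludes along the same template as Theorem~\ref{thm:unsat}. Replace every head occurrence of \textit{false} in $P_0\Rightarrow\ldots\Rightarrow P_n$ by a fresh predicate symbol $f$ to obtain a transformation sequence $P'_0\Rightarrow\ldots\Rightarrow P'_n$ of definite clauses satisfying the hypotheses of Theorem~\ref{thm:pc}, and hence $M(P'_n)\subseteq M(P'_0\cup\mathit{Defs}_n)$. Using the equivalence between satisfiability of $P_0$ and of $P_0\cup\mathit{Defs}_n$ already noted in the proof of Theorem~\ref{thm:unsat}, satisfiability of $P_0$ is equivalent to $f\notin M(P'_0\cup\mathit{Defs}_n)$, which by the above containment implies $f\notin M(P'_n)$, which in turn is equivalent to the satisfiability of $P_n$. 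The delicate step throughout is the body-strengthening verification of R7, since R1--R6 reduce to the argument for the soundness theorem with the roles of the two inclusions exchanged.
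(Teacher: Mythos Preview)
Your proposal is correct and follows essentially the same route as the paper: show that under Condition~(F) every application of R4--R7 is a body strengthening (the paper records this as Lemma~\ref{lemma:R7sat} for R7 and Lemma~\ref{lemma:strengthen} overall), and then conclude via the ``replace \textit{false} by a fresh predicate $f$'' construction and Theorem~\ref{thm:pc}, which the paper packages as Theorem~\ref{thm:sat}. Your handling of R7 matches the paper's Lemma~\ref{lemma:R7sat} almost step for step (totality to get a witness, $c\rightarrow d$ plus F2 to instantiate the body of~$\widehat{D}$, F1 to force agreement on $Y_b$, F3 to reconcile with the $T_1$ variable set); the only cosmetic slip is that your displayed ``key claim'' omits $c$ from the antecedent while your argument uses it, so when you write it up include $c$ on the left as the paper does.
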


Note that the applications of R3 and R7 in Examples~\ref{ex:false-pos-fold} 
and~\ref{ex:false-pos-diff} violate Conditions~(E) and~(F),
respectively, and these facts explain why they affect completeness.
Indeed, in Example~\ref{ex:false-pos-fold}, atom {\tts a([],Y)} in 
the body of clause~{\tts 1} is an instance of the body of clause~{\tts4}
via the substitution $\vartheta = \mbox{\tts {\{X/[],Z/Y\}}}$, and $\vartheta$
does not satisfy Condition~(E1).
In Example~\ref{ex:false-pos-diff}, as mentioned above, 
{\tts diff(W,Y)} is not functional from {\tts W} to {\tts Y}, and hence
Condition~(F1) is not fulfilled.

For the proof of Theorem~\ref{thm:sat-preserv} we need some preliminary results.
First we prove the following theorem, which is the converse of Theorem~\ref{thm:unsat}
and is a consequence of Theorem~\ref{thm:pc} reported in Section~\ref{subsec:soundness}.

\begin{theorem}
\label{thm:sat} Let $P_0 \repl \ldots \repl P_n$
be a transformation sequence constructed using rules
{\rm R1} $($Definition$)$, {\rm R2} $($\!Unfolding$)$, {\rm R3} $($Folding$)$, 
and 
\,{\rm R8} $($\!Goal Replacement$)$. Suppose that, for all applications of\,
{\rm R3},
Condition {\rm (E)} 
holds and all goal replacements are 
body strengthenings $($that is, they are applications of rule~{\rm R8} for which 
Condition~$(S)$ of Section~{\rm{\ref{subsec:soundness}}} holds$)$.
If $P_0$ is satisfiable, then $P_n$ is satisfiable.
\end{theorem}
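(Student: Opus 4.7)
The plan is to mirror the proof of Theorem~\ref{thm:unsat}, replacing the use of Theorem~\ref{thm:cons} by Theorem~\ref{thm:pc}. As a preliminary step I would re-establish that $P_0$ is satisfiable if and only if $P_0 \cup \textit{Defs}_n$ is satisfiable, exactly as done in the proof of Theorem~\ref{thm:unsat}: extending any $\mathbb D$-model of $P_0$ by interpreting every head predicate of $\textit{Defs}_n$ as the full Herbrand base yields a $\mathbb D$-model of $P_0 \cup \textit{Defs}_n$, and the converse direction is immediate.

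Next, I would obtain a transformation sequence $P'_0 \repl \ldots \repl P'_n$ of sets of definite CHCs by replacing every occurrence of \textit{false} in the head of a clause by a fresh $0$-ary predicate symbol $f$. I would then argue that this syntactic substitution preserves all the hypotheses of Theorem~\ref{thm:pc}: Condition~(E) depends only on the body of the folded clause and on the chosen folding substitution $\vartheta$, so it is inherited from the original applications of rule~R3; and the body-strengthening condition~(S) of rule~R8 depends only on the replaced subgoals and on the least $\mathbb D$-model $M(\textit{Definite}(P_0)\cup \mathit{Defs}_i)$, which agrees with $M(\textit{Definite}(P'_0)\cup \mathit{Defs}_i)$ on every non-$f$ predicate (the only new clauses in $P'_0$ are the $f$-defining ones and $f$ never occurs in any body). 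Applying Theorem~\ref{thm:pc} then yields $M(P'_0 \cup \mathit{Defs}_n) \supseteq M(P'_n)$.

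Finally, I would chain the implications. Assume $P_0$ is satisfiable; by the preliminary step $P_0 \cup \textit{Defs}_n$ has a $\mathbb D$-model $\mathcal M$. Extend $\mathcal M$ to a structure $\mathcal M'$ on the enlarged signature by interpreting $f$ as false. Each goal $\textit{false}\leftarrow c,B$ of $P_0$ is true in $\mathcal M$ precisely when no $\mathbb D$-ground instance of $c\wedge B$ is true in $\mathcal M$; consequently the corresponding definite clause $f\leftarrow c,B$ of $P'_0$ holds vacuously in $\mathcal M'$. Hence $\mathcal M'$ is a $\mathbb D$-model of $P'_0 \cup \textit{Defs}_n$ in which $f$ is false, so by leastness $f \notin M(P'_0 \cup \textit{Defs}_n)$, and therefore $f \notin M(P'_n)$ by the inclusion obtained above. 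Thus $M(P'_n)$ is a $\mathbb D$-model of $P'_n \cup \{\neg f\}$, and mapping $f$ back to \textit{false} in the heads shows that it is also a $\mathbb D$-model of $P_n$, which proves that $P_n$ is satisfiable.

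The main obstacle I expect is the careful verification in the second paragraph that the head-rewriting preserves Condition~(S): one must observe that the fresh symbol $f$ does not appear in any body of $P'_i$ nor in the formulas $c_1, G_1, c_2, G_2$ exchanged by rule~R8, so shifting the reference interpretation from $M(\textit{Definite}(P_0)\cup \mathit{Defs}_i)$ to $M(\textit{Definite}(P'_0)\cup \mathit{Defs}_i)$ does not affect the truth value of~(S). Once this bookkeeping is settled, the proof reduces to a direct dualization of that of Theorem~\ref{thm:unsat}.
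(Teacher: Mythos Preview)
Your proposal is correct and follows essentially the same route as the paper's own proof: replace \textit{false} by a fresh predicate $f$ to obtain a sequence of definite-clause sets, invoke Theorem~\ref{thm:pc} to get $M(P'_0\cup\mathit{Defs}_n)\supseteq M(P'_n)$, and then chain the same implications. Your write-up is in fact more explicit than the paper's at the bookkeeping step (preservation of Conditions~(E) and~(S) under the head rewriting), which the paper leaves implicit.
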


\begin{proof}
As shown in the proof of Theorem~\ref{thm:unsat}, $P_0$ is satisfiable iff 
$P_0 \cup \textit{Defs}_n$ is satisfiable. 
Also in this proof, we consider the transformation sequence  $P'_0\Rightarrow \ldots \Rightarrow P'_n$
obtained from the  sequence $P_0\Rightarrow \ldots \Rightarrow P_n$
by replacing each occurrence of \textit{false} in the head of
a clause by a new predicate symbol $f$.
$P'_0,\ldots,P'_n$ are sets of definite clauses, and thus for $i=0,\ldots,n,$ $\textit{Definite}(P'_i)= P'_i$.
The sequence $P'_0\Rightarrow \ldots \Rightarrow~P'_n$ satisfies the 
hypotheses of Theorem~\ref{thm:pc}, and hence 
\(M(P'_0\cup \mathit{Defs}_n)\supseteq M(P'_n) \). Thus, we have that: 

\noindent
\makebox[12mm][l]{}$P_0$ is satisfiable

\noindent
{implies $P_0 \cup \textit{Defs}_n$ is satisfiable}

\noindent
implies $P'_0 \cup \textit{Defs}_n\cup \{\neg f\}$ is satisfiable

\noindent
{implies $f\not\in M(P'_0 \cup \textit{Defs}_n)$}

\noindent
implies, by Theorem~\ref{thm:pc}, $f\not\in M(P'_n)$

\noindent
\makebox[49mm][l]{implies $P'_n\cup \{\neg f\}$ is satisfiable}

\noindent
implies $P_n$ is satisfiable. \hfill$\Box$
\end{proof}

Now, in order to prove Theorem~\ref{thm:sat-preserv} of 
Section~\ref{sec:TransfRules},  we show that rules~R4--R7 are all body strengthenings.

\medskip

\noindent
Rule R4 (Clause Deletion) is a body strengthening, as 

\smallskip
$M(\textit{Definite}(P_0)\cup \mathit{Defs}_i) \models \forall\, (  {\it false} \rightarrow c \wedge G)$

\smallskip
\noindent
trivially holds.

\medskip

Now let us consider rule~R5 (Functionality). 
Let $F(X,Y)$ be a conjunction of atoms 
that defines a functional relation from~$X$ to~$Y$.
When rule R5 is applied whereby the conjunction $F(X,Y),\ F(X,Z)$
is replaced by the conjunction $Y\!=\!Z,$ $F(X,Y)$, it is the case that 

\smallskip
$M(\mathit{Definite}(P_0)\cup \mathit{Defs}_i) \models \forall\,(Y\!=\!Z \wedge F(X,Y) \rightarrow F(X,Y) \wedge F(X,Z))$

\smallskip
\noindent
Hence, Condition (S) holds and rule~R5 is a body strengthening. 

\medskip

An application of rule R6 (Totality) replaces a conjunction $F(X,Y)$
by {\it true} (that is, the empty conjunction). 
When rule~R6 is applied, it is the case that, by Property~(\textit{Total\/}) of Section~\ref{sec:CHCs},

\smallskip
$M(\mathit{Definite}(P_0)\cup \mathit{Defs}_i) \models \forall\,({\it true} \rightarrow \exists Y.\, F(X,Y))$

\smallskip
\noindent
Hence, rule~R6 is a body strengthening.

\medskip

Now we prove that, when Condition (F) of Definition~\ref{def:condF} holds, rule~R7 is a body strengthening.

\begin{lemma}\label{lemma:R7sat}

Let us consider the following clauses $C$, $\widehat{D}$, and $E$ used when applying
rule~{\rm R7}\,$:$

\smallskip

$C$: $H\leftarrow c,\,  G_L,\, F(X;Y),\, G_R$

$\widehat{D}$: $\mathit{diff}(T_b,W_b,Y_b) \leftarrow d,\, F(X;Y),\, R(V;W)$ 

$E$: $H\leftarrow c,\,  G_L,\, R(V;W),\, \mathit{diff}(T_b,W_b,Y_b),\, G_R$

\smallskip

\noindent
where $Y=(Y_a,Y_b)$, $Y_a=\mathit{adt}\mbox{-}\mathit{vars}(Y)$, and $Y_b=\textit{bvars}(Y)$.
Let us assume that Conditions~{\rm (F1)} and~{\rm (F2)} 
of Definition~$\ref{def:condF}$ hold.
Then, 

\smallskip
\noindent
~~$M(\!\textit{Definite}(P_0) \cup \mathit{Defs}_i)\models \forall\,(c\, \wedge R(V;\!W) 
\wedge \mathit{diff}(T_b,\!W_b,\!Y_b)\rightarrow \exists Y_a.\, c\, \wedge\, F(X;\!Y)).$
\end{lemma}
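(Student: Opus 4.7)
The plan is to fix an arbitrary ground instantiation of the universally quantified variables making the hypothesis $c \wedge R(V;W) \wedge \mathit{diff}(T_b,W_b,Y_b)$ true in $\mathcal{M} := M(\textit{Definite}(P_0) \cup \mathit{Defs}_i)$, and to produce a witness for the existentially quantified tuple $Y_a$ in the conclusion.

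First, I would invoke the totality of $F(X;Y)$ from $X$ to $Y$ with respect to $\mathcal{M}$ to obtain a tuple $\widehat{Y} = (\widehat{Y}_a, \widehat{Y}_b)$ such that $\mathcal{M} \models F(X; \widehat{Y})$. Next, I would argue that the constraint $d$ of the defining clause $\widehat{D}$ holds under the current instantiation: Condition (D2) for the applicability of rule R7 gives $\mathbb{D} \models \forall(c \rightarrow d)$, and the assumed $c$ therefore forces $d$. Condition (F2) ensures $\vars(d) \cap Y = \emptyset$, so the truth of $d$ is not affected by whatever values we later assign to $Y_a$; the same disjointness $V \cap Y = \emptyset$ ensures that the truth of $R(V;W)$ is likewise insensitive to $Y_a$.

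Having established $d$, $F(X;\widehat{Y})$, and $R(V;W)$, I would apply the clause $\widehat{D} \in \mathit{Defs}_i$ (whose model is contained in $\mathcal{M}$) to conclude $\mathcal{M} \models \mathit{diff}(T_b, W_b, \widehat{Y}_b)$, where $\widehat{Y}_b = \textit{bvars}(\widehat{Y})$. Together with the hypothesized $\mathit{diff}(T_b, W_b, Y_b)$, Condition (F1) — functionality of $\mathit{diff}$ from $(T_b, W_b)$ to $Y_b$ — yields $Y_b = \widehat{Y}_b$. Setting $Y_a := \widehat{Y}_a$, the relation $F(X;(Y_a, Y_b)) = F(X;\widehat{Y})$ holds, and $c$ remains true because, being a quantifier-free formula of $\textit{LIA} \cup \textit{Bool}$, it involves only variables of basic type and thus cannot mention the ADT tuple $Y_a$. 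Hence $\mathcal{M} \models c \wedge F(X;Y)$ for this choice of $Y_a$, witnessing the existential.

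The main subtlety I expect concerns the bookkeeping of variable scopes: one must check that the witness $\widehat{Y}_a$ does not break the truth of $c$, $d$, or $R(V;W)$ that has been established prior to fixing $Y_a$. This is exactly the role of Condition (F2) — it decouples $Y$ from both $\vars(d)$ and $V$ — together with the structural fact that constraints are LIA+Bool formulas and so cannot contain ADT variables. Beyond this bookkeeping, the argument is a straightforward chain of totality, a single application of the defining clause $\widehat{D}$, and functionality of $\mathit{diff}$.
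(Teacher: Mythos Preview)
Your proposal is correct and follows essentially the same route as the paper's proof: invoke totality of $F$ to obtain a candidate output $\widehat{Y}$, use $c\rightarrow d$ together with the disjointness supplied by Condition~(F2) (and the R7 requirement $W\cap\vars(C)=\emptyset$) to fire the defining clause~$\widehat{D}$ and obtain $\mathit{diff}(T_b,W_b,\widehat{Y}_b)$, then apply functionality~(F1) to identify $\widehat{Y}_b$ with $Y_b$, and finally observe that $Y_a\cap\vars(c)=\emptyset$ because $c$ is a \textit{LIA}$\cup$\textit{Bool} constraint. The only cosmetic difference is that you argue pointwise on a fixed ground instantiation whereas the paper chains universally quantified implications, and you might make explicit that $W\cap Y=\emptyset$ (from $W\cap\vars(C)=\emptyset$) is also needed when instantiating~$\widehat{D}$, not just $V\cap Y=\emptyset$.
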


\begin{proof} %

Let $\mathcal M$ denote $M(\textit{Definite}(P_0) \cup \mathit{Defs}_i)$.
Let~$Y'=(Y_a,Y_b')$, where $Y_b'$ is obtained by renaming the
variables in $Y_b$ with new variables of basic type.
By the totality of $F(X;Y')$, we have: 

\smallskip

$\mathcal M\models \forall\,(c \wedge R(V;W) \wedge \mathit{diff}(T_b,W_b,Y_b)\rightarrow  \exists Y'.\, F(X;Y'))$

\smallskip
\noindent
By the definition of rule R7, $\mathbb{D}\models\forall (c\rightarrow d)$ holds, and we get:

\smallskip

$\mathcal M\models \forall\,(c \wedge R(V;W) \wedge \mathit{diff}(T_b,W_b,Y_b)\rightarrow  \exists Y'.\, d \wedge F(X;Y')\wedge R(V;W))$

\smallskip
\noindent
Now, we have that $Y \cap (X \cup V\cup W \cup \vars(d)) = \emptyset$.
Indeed, (i)~by the definition of rule R7, $W \cap \textit{vars}(C) =\emptyset$,
(ii)~by the notation `$F(X;Y)$', we have that $\textit{vars}(X) \cap \textit{vars} (Y) =\emptyset $,
and (iii) by Condition (F2), $Y \cap (V\cup \vars(d)) = \emptyset$. Then,
${d} \wedge F(X;Y')  \wedge R(V;W)$ is a variant of the body of clause~$\widehat{D}$, 
and since $Y'=(Y_a,Y_b')$, we get:  

\smallskip
$\mathcal M\models \forall\,(c \wedge R(V;W) \wedge \mathit{diff}(T_b,W_b,Y_b)  $

\hspace{2cm}$\rightarrow \exists Y_a,Y'_b.\, F(X;(Y_a,Y_b'))\wedge \mathit{diff}(T_b,W_b,Y'_b))$

\smallskip

\noindent
By Condition (F1), $\mathit{diff}(T_b,W_b,Y_b)$ is functional from $(T_b,W_b)$ to $Y_b$, and we have:

\smallskip

$\mathcal M\models \forall\,(c \wedge R(V;W) \wedge \mathit{diff}(T_b,W_b,Y_b)\rightarrow  \exists Y_a,Y'_b.\, F(X;(Y_a,Y_b'))\wedge  Y_b\!=\!Y'_b)$

\smallskip

\noindent
Thus,

\smallskip
$\mathcal M\models \forall\,(c \wedge R(V;W) \wedge \mathit{diff}(T_b,W_b,Y_b)\rightarrow  \exists Y_a.\, F(X;(Y_a,Y_b)))$

\smallskip
\noindent
and, observing that $Y_a\cap \textit{vars}(c)=\emptyset$, we get the thesis. \hfill $\Box$ \end{proof}

Now, in order to show that an application of rule R7 according to the hypotheses 
of Lemma~\ref{lemma:R7sat} (see also Definition~\ref{def:condF}) is an instance of 
a body strengthening, where in 
clauses~$C$ and~$D$ of rule R8 we consider $c\!=\!\mathit{true}$ 
and $c_{1}\!=\!c_{2}\!=\!c$, we have to show:

\smallskip
\makebox[12mm][c]{(S$_{c}$)}  $M(\textit{Definite}(P_0)\cup \mathit{Defs}_i) \models \forall \, (c \wedge {G}_{2}
\rightarrow \exists T_1.\, c \wedge {G}_{1})$ 

\smallskip
\noindent
where:  

$T_{1} = \mathit{vars}(c \wedge F(X;\!(Y_{a},Y_{b}))) \setminus \mathit{vars}(\{{H}, \mathit{true}, 
{G}_{L},{G}_{R}\})$

$G_{1}=F(X;(Y_{a},Y_{b}))$, and

$G_{2}= R(V;W),\, \mathit{diff}(T_b,W_b,Y_b)$.

\smallskip
\noindent
Now, by Lemma~\ref{lemma:R7sat}, we have:

\smallskip
\makebox[12mm][c]{(L)}~~$M(\!\textit{Definite}(P_0) \cup \mathit{Defs}_i)\models 
\forall\,(c\, \wedge G_{2} \rightarrow \exists Y_a.\, c\, \wedge\,
 F(X;\!(Y_{a},Y_{b})))$ 

\smallskip
\noindent
Since:
(i)~by Condition~(F3) and the fact that the variables of $c$ are all of basic type, we have that:
$Y_a \subseteq T_{1}\!= \!\mathit{vars\/}(c \cup X \cup Y_{a} \cup Y_{b}) 
\setminus \mathit{vars\/}(\{H,\mathit{true},G_L,G_R\})$, and 
(ii)~the variables in $T_{1}\setminus Y_{a}$ are universal quantified in~(L),
we have that~(L) implies~(S$_{c}$). This completes the proof that if Condition (F) 
of Definition~\ref{def:condF} holds, then rule R7 is a body strengthening.

Thus, by taking into account also the facts we have proved above about rules~R4, R5, and R6,
we have the following lemma. 

\begin{lemma}\label{lemma:strengthen}
All the applications of rules {\rm R4, R5, R6}, and the applications of rule~{\rm R7}
where Condition~{\rm (F)} holds $($see Definition~$\ref{def:condF})$, are body strengthenings.
\end{lemma}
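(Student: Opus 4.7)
The plan is to verify Condition (S) of Section~\ref{subsec:soundness} for each of the four rules in turn, relying on the work carried out in the paragraphs immediately preceding the lemma statement. For R4, R5, R6 the arguments are essentially immediate once the definitions are unpacked; the substantive case is R7, which is handled by invoking Lemma~\ref{lemma:R7sat} together with a careful variable-tracking argument.

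For R4, the replacement of a clause body by \emph{false} is a body strengthening vacuously, since \emph{false} implies every formula. For R5, replacing $F(X,Y),F(X,Z)$ by $Y\!=\!Z, F(X,Y)$ gives (S) by pure equality reasoning: $Y\!=\!Z \wedge F(X,Y)$ logically entails $F(X,Y) \wedge F(X,Z)$, irrespective of any functionality assumption. For R6, replacing a total $F(X,Y)$ by the empty conjunction matches (S) because, by Property~(\textit{Total\/}), $\mathit{true}$ implies $\exists Y.\,F(X,Y)$ in $M(\mathit{Definite}(P_0)\cup \mathit{Defs}_i)$.

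The main case is R7, for which I would appeal to Lemma~\ref{lemma:R7sat}. That lemma, whose hypotheses coincide with Conditions~(F1) and~(F2), yields
\[
M(\mathit{Definite}(P_0)\cup \mathit{Defs}_i)\models \forall\,(c \wedge R(V;W)\wedge \mathit{diff}(T_b,W_b,Y_b)\rightarrow \exists Y_a.\,c \wedge F(X;Y)).
\]
To cast this as an instance of Condition~(S), I would identify the R8 parameters as $c_1 = c_2 = c$ (taking the outer ambient constraint of R8 to be $\mathit{true}$), $G_1 = F(X;Y)$, and $G_2 = R(V;W),\mathit{diff}(T_b,W_b,Y_b)$. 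What remains is to check that the variables existentially quantified in the lemma, namely $Y_a$, are contained in the tuple $T_1$ of variables of $c \wedge G_1$ that do not occur in $\{H,\mathit{true},G_L,G_R\}$. This is precisely where Condition~(F3) comes in: since $Y_a = \mathit{adt}\mbox{-}\mathit{vars}(Y)$ is disjoint from the ADT variables of $\{H,c,G_L,G_R\}$ and the variables of the constraint $c$ are all of basic type, one concludes $Y_a \subseteq T_1$; the extra variables of $T_1 \setminus Y_a$ may then be existentially closed harmlessly.

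The only real obstacle lies in the R7 step, specifically the bookkeeping associated with the split $Y=(Y_a,Y_b)$ of output variables and the verification that Condition~(F3) supplies exactly the disjointness needed to promote the $\exists Y_a$ delivered by Lemma~\ref{lemma:R7sat} into the $\exists T_1$ required by~(S). The three cases R4, R5, R6 are routine once the definitions are unwound, and the lemma follows by combining them with this R7 argument.
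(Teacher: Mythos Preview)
Your proposal is correct and mirrors the paper's own argument essentially step for step: the paper treats R4, R5, R6 by the same one-line verifications of Condition~(S), and handles R7 by invoking Lemma~\ref{lemma:R7sat} under (F1)--(F2) and then using (F3) to show $Y_a\subseteq T_1$ so that the $\exists Y_a$ can be upgraded to $\exists T_1$. The lemma in the paper is explicitly stated as a summary of the preceding paragraphs, which are exactly the four cases you outline.
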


\medskip

Now we can present the proof of Theorem~\ref{thm:sat-preserv}.

\smallskip
\noindent
{\it Proof of Theorem~$\ref{thm:sat-preserv}$}.
Let $P_0 \repl \ldots \repl P_n$ 
be a transformation sequence using rules~{\rm{R1}}--{\rm{R7}}.
Suppose that, 
for every application of {\rm R3}, Condition {\rm (E)} holds,
and for every application of {\rm R7}, Condition {\rm (F)} holds. 
Thus, $P_0 \repl \ldots \repl P_n$ can also be constructed by 
applications of rules R1--R3 and applications of rule~R8 which,
by Lemma~\ref{lemma:strengthen}, are all
body strengthenings.
Then, the thesis follows from Theorem~\ref{thm:sat}.
\hfill $\Box$

\subsection{Completeness of Algorithm~\Diff}
\label{subsec:completeAlgo}

We have the following straightforward consequence of Theorem~\ref{thm:sat-preserv}.

\begin{theorem}[Completeness of {Algorithm}~\Diff] 
\label{thm:completeness-AlgorithmR}
Suppose that {Algorithm}~\Diff\ terminates 
for the input set $\mathit{Cls}$ of clauses, and let $\mathit{TransfCls}$
be the output set of clauses.
Suppose also that all applications of rules {\rm R3} and \,{\rm R7} during the execution of Algorithm~\,\Diff~fulfill Conditions~{\rm (E)} and~{\rm (F)}, respectively.
If $\mathit{Cls}$ is satisfiable, then $\mathit{TransfCls}$ is satisfiable.
\end{theorem}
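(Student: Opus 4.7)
The plan is to observe that this completeness result is a direct consequence of Theorem~\ref{thm:sat-preserv}, mirroring the way Theorem~\ref{thm:soundness-AlgorithmR} was derived from Theorem~\ref{thm:unsat-preserv}. First I would note that any execution of Algorithm~\Diff~produces a transformation sequence $P_0 \Rightarrow P_1 \Rightarrow \ldots \Rightarrow P_n$, where $P_0 = \mathit{Cls}$ and $P_n$ is $\mathit{Ds} \cup \mathit{TransfCls}$ together with any auxiliary definitions in $\mathit{Defs}$ that survived transformation. This is because each of the three procedures \textit{Diff-Define-Fold}, \textit{Unfold}, and \textit{Replace} is, by inspection, a finite sequence of applications of the rules R1--R7: \textit{Diff-Define-Fold} uses R1, R3, and R7; \textit{Unfold} uses R2; \textit{Replace} uses R4, R5, R6.

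Next I would invoke the hypotheses: by assumption, every application of R3 during the execution fulfills Condition~(E), and every application of R7 fulfills Condition~(F). The remaining rules R1, R2, R4, R5, R6 carry no additional side conditions for Theorem~\ref{thm:sat-preserv}. Thus the transformation sequence generated by Algorithm~\Diff~satisfies exactly the hypotheses of Theorem~\ref{thm:sat-preserv}, and so the satisfiability of $P_0 = \mathit{Cls}$ implies the satisfiability of $P_n$.

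Finally, I would argue that satisfiability of $P_n$ implies satisfiability of $\mathit{TransfCls}$. Since $\mathit{TransfCls} \subseteq P_n$, any $\mathbb{D}$-model of $P_n$ is also a $\mathbb{D}$-model of $\mathit{TransfCls}$, completing the proof.

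The main obstacle I anticipate is purely bookkeeping: one must be careful to verify that the composite transformation actually decomposes into a proper sequence $P_0 \Rightarrow \ldots \Rightarrow P_n$ of rule applications in the sense of Section~\ref{subsec:Rules}, so that Theorem~\ref{thm:sat-preserv} applies as stated. In particular, one should check that the set $\mathit{Defs}$ maintained by the algorithm coincides with the $\mathit{Defs}_i$ built up along this sequence, and that the scope of the conditions (E) and (F)---which depend on $\mathit{Defs}_i$ and on the original $\mathit{Definite}(P_0)$---is respected throughout. This is a routine verification analogous to the one implicit in the proof of Theorem~\ref{thm:soundness-AlgorithmR}, so no genuinely new argument is required beyond citing Theorem~\ref{thm:sat-preserv}.
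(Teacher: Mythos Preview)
Your proposal is correct and follows essentially the same approach as the paper: the paper simply states that the result is a ``straightforward consequence of Theorem~\ref{thm:sat-preserv}'' without further elaboration, and your argument spells out exactly that derivation (decomposing the execution of~\Diff~into rule applications, invoking the hypotheses on (E) and (F), and noting $\mathit{TransfCls}\subseteq P_n$). The additional bookkeeping you flag is indeed routine and mirrors what is implicit in the proof of Theorem~\ref{thm:soundness-AlgorithmR}.
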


In practice, having constructed the transformation sequence 
$P_0 \Rightarrow P_1 \Rightarrow \ldots \Rightarrow P_n$, where
 rule~R7 has been applied to the set $P_{i}$, with $0\!<\!i\!<\!n$,
it is often more convenient to 
check the validity of Condition~(F) with respect to 
$\textit{Definite}(P_n)$, 
instead of $\textit{Definite}(P_0)\cup \mathit{Defs}_i$, as required by the
hypotheses of Theorem~\ref{thm:sat-preserv}. 
Indeed, in the set $P_{n}$,  predicate {\it diff\/}
is defined by a set of clauses whose variables have all integer or boolean 
type, and hence in checking Condition~(F) we need not reason about
predicates defined over ADTs.
The following Proposition~\ref{prop:fun-preserv}
guarantees that 
Theorem~\ref{thm:sat-preserv} holds even if we 
check Condition (F) with respect to $\textit{Definite}(P_n)$, instead of 
$\textit{Definite}(P_0)\cup \mathit{Defs}_i$.

\begin{proposition}[Preservation of Functionality]
\label{prop:fun-preserv} 
Let $P_0 \Rightarrow P_1 \Rightarrow \ldots \Rightarrow P_n$ be a transformation sequence
using rules~{\rm R1--R7}.
Suppose that Condition~{\rm (U)} of Theorem~$\ref{thm:unsat-preserv}$ holds.
For $i\!=\!0,\ldots,n$,
if an atom $A(X,Y)$ is functional from $X$ to $Y$ with respect to~$\textit{Definite}(P_n)$
and the predicate symbol of $A(X,Y)$ occurs in $\textit{Definite}(P_0)\cup \mathit{Defs}_i$, 
then $A(X,Y)$ is functional from~$X$ to~$Y$ with respect to~$\textit{Definite}(P_0)\cup \mathit{Defs}_i$.
\end{proposition}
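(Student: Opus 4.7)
The plan is to prove the statement by contraposition: I will show that any ground witness $A(a,b)$ of $A(X,Y)$ in $M(\textit{Definite}(P_0)\cup \mathit{Defs}_i)$ also belongs to $M(\textit{Definite}(P_n))$, so a counterexample pair $A(a,b), A(a,c)$ with $b\neq c$ violating functionality in the former survives into the latter, contradicting the hypothesis. In other words, the whole proof reduces to establishing the model inclusion
\[
M(\textit{Definite}(P_0)\cup \mathit{Defs}_i) \;\subseteq\; M(\textit{Definite}(P_n)).
\]

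First, I would exploit monotonicity of least $\mathbb D$-models on sets of definite clauses: since $i\leq n$ implies $\mathit{Defs}_i \subseteq \mathit{Defs}_n$, adding clauses can only enlarge the least model, so $M(\textit{Definite}(P_0)\cup \mathit{Defs}_i) \subseteq M(\textit{Definite}(P_0)\cup \mathit{Defs}_n)$. Second, I would apply Theorem~\ref{thm:cons} to obtain $M(\textit{Definite}(P_0)\cup \mathit{Defs}_n) \subseteq M(\textit{Definite}(P_n))$, restricted to atoms in the common language. By Lemma~\ref{lemma:weakening}, every application of R4--R7 in the sequence $P_0\Rightarrow\ldots\Rightarrow P_n$ is a body weakening, so the sequence can be seen as using only R1, R2, R3, and body-weakening instances of R8. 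Together with the assumed Condition~(U), this matches the hypotheses of Theorem~\ref{thm:cons}, modulo the fact that the $P_i$'s may contain goals. I would handle this exactly as in the proof of Theorem~\ref{thm:unsat}: replace each $\mathit{false}$ in clause heads by a fresh predicate~$f$, obtaining a sequence $P'_0\Rightarrow\ldots\Rightarrow P'_n$ of all-definite clauses, apply Theorem~\ref{thm:cons} to get $M(P'_0\cup \mathit{Defs}_n) \subseteq M(P'_n)$, and then restrict to non-$f$ atoms. Since $f$ cannot appear in the body of any clause, the non-$f$ portions of $M(P'_0\cup \mathit{Defs}_n)$ and $M(P'_n)$ coincide with $M(\textit{Definite}(P_0)\cup \mathit{Defs}_n)$ and $M(\textit{Definite}(P_n))$, yielding the desired inclusion.

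Chaining the two inclusions produces $M(\textit{Definite}(P_0)\cup \mathit{Defs}_i) \subseteq M(\textit{Definite}(P_n))$ on atoms whose predicate symbol occurs in $\textit{Definite}(P_0)\cup \mathit{Defs}_i$, and the contrapositive argument sketched at the outset concludes the proof. The hypothesis that the predicate symbol of $A(X,Y)$ occurs in $\textit{Definite}(P_0)\cup \mathit{Defs}_i$ is needed only to guarantee that the functionality statement is meaningful in that program. I do not anticipate a substantial obstacle: every ingredient is already in place (monotonicity of least models, Lemma~\ref{lemma:weakening}, Theorem~\ref{thm:cons}, and the $f$-replacement device from the proof of Theorem~\ref{thm:unsat}), and the proof essentially repackages them. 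The one point requiring care is the bookkeeping of the $f$-replacement, which must be applied uniformly across the whole sequence so that both inclusions are obtained on the same set of non-$f$ ground atoms; this is routine because the fresh symbol $f$ never occurs in any body.
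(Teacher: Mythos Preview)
Your proposal is correct and follows essentially the same approach as the paper: both argue by establishing the model inclusion $M(\textit{Definite}(P_0)\cup \mathit{Defs}_i)\subseteq M(\textit{Definite}(P_n))$ via monotonicity in the $\mathit{Defs}$ component together with Theorem~\ref{thm:cons}, and then conclude by contraposition on a putative non-functionality witness. If anything, you are more explicit than the paper, which invokes Theorem~\ref{thm:cons} directly without spelling out the $f$-replacement device or the appeal to Lemma~\ref{lemma:weakening}.
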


\begin{proof}
Let us suppose that $A(X,Y)$ is functional from $X$ to $Y$ 
with respect to $\textit{Definite}(P_n)$,
that is,

\smallskip

$M(\textit{Definite}(P_n))\models \forall\,(A(X,Y) \wedge A(X,Z) \rightarrow Y\eq Z)$.

\smallskip
\noindent
Then, for all (tuples of) ground terms $u$, $v$, and $w$, with $v\!\neq\! w,$ 

\smallskip

$\{A(u,v),A(u,w)\} \not\subseteq M(\textit{Definite}(P_n))$

\smallskip
\noindent
Since $P_0 \Rightarrow P_1 \Rightarrow \ldots \Rightarrow P_n$ is a transformation sequence
using rules R1--R7 such that Condition (U) of Theorem~\ref{thm:unsat-preserv} holds, 
then by Theorem~\ref{thm:cons},

\smallskip

$\{A(u,v),A(u,w)\} \not\subseteq M(\textit{Definite}(P_0) \cup \mathit{Defs}_n)$

\smallskip
\noindent
Hence, for $i\!=\!0,\ldots,n$,

\smallskip

$\{A(u,v),A(u,w)\} \not\subseteq M(\textit{Definite}(P_0) \cup \mathit{Defs}_i)$

\smallskip
\noindent
Thus, 

\smallskip

$M(\textit{Definite}(P_0) \cup \mathit{Defs}_i)\models \forall\,(A(X,Y) \wedge A(X,Z) \rightarrow Y\eq Z)$.
\hfill $\Box$
\end{proof}

\section{A Method for Checking the Satisfiability of CHCs through ADT Removal}
\label{subsec:solvingCHCs}

In this section we put together the results presented in Sections~\ref{sec:TransfRules},
\ref{sec:Strategy}, and~\ref{sec:Completeness} and we define a method 
for checking whether or not a set $P_0$ of CHCs is satisfiable. We 
proceed as follows: (i)~first, we
construct a transformation sequence
$P_0 \Rightarrow P_1 \Rightarrow \ldots \Rightarrow P_n$ using Algorithm~\Diff,
and (ii)~then we apply a CHC solver to $P_n$.
If the solver is able to prove the satisfiability of~$P_n$, 
then, by Theorem~\ref{thm:soundness-AlgorithmR}, $P_0$ is satisfiable.
If 
the solver proves the unsatisfiability of $P_n$ and 
Conditions~(E) and~(F) are both fulfilled during the execution of~\Diff, 
then, by Theorem~\ref{thm:completeness-AlgorithmR}, $P_0$ is unsatisfiable.

Now, (i)~Condition~(E) can be checked by simply
inspecting the substitution computed when applying the Folding Rule~R3
during the {\it Diff-Define-Fold} procedure.
(ii)~Condition~(F1), by 
Proposition~\ref{prop:fun-preserv}, 
can be checked by proving,
for every difference predicate \textit{diff} that has been used in applying
the Differential Replacement Rule~R7,
the satisfiability of the following set of clauses

\vspace{1mm}

\noindent
\hspace{6mm} $D_n ~\cup~ \{\textit{false}\leftarrow Y_1\!\neq \!Y_2, 
\ \textit{diff\/}(T,W,Y_1),\ \textit{diff\/}(T,W,Y_2)\}$,

\vspace{1mm}\noindent
where $D_n$ is the set of clauses defining \textit{diff\/} in $\textit{Definite}(P_n)$.
(iii)~Finally, Conditions~(F2) and (F3)
can be checked by inspecting, for every application of
rule~R7, the clauses $C$ and $\widehat D$ involved in that application
(see Definition~\ref{def:condF}).

\smallskip
{In previous sections we have seen in action our method for proving the 
satisfiability of a set of CHCs. In the following example we show an application of our method to prove unsatisfiability of a set of CHCs.}

\begin{example}\label{ex:func}
\newcommand{\oneast}{{\tts 1$^{\textstyle *}$}}
\newcommand{\oneastf}{{\tts 10$^{\textstyle *}$}}
Let us consider again our introductory example of 
Section~\ref{sec:IntroExample} where we started 
from the initial set
{\it RevCls} made out of clauses~{\tts 1}--{\tts 9}.
Let us suppose that we want to 
check the satisfiability of the set \textit{RevCls}$^{*}$ of clauses
that includes clauses~{\tts 2}--{\tts 9}
and the following clause \oneast, instead of clause~{\tts 1}: 

\vspace{1mm}
\noindent
\oneast. ~{\tts false :- N2=\textbackslash=N0-N1, append(Xs,Ys,Zs), reverse(Zs,Rs), 

\hspace{16mm}len(Xs,N0), len(Ys,N1), len(Rs,N2).}

\vspace{1mm}

\noindent
Clause~\oneast~differs from clause~{\tts 1} because of the constraint 
`{\tts N2=\textbackslash=N0-N1}', instead of `{\tts N2=\textbackslash=N0+N1}'.
The set \textit{RevCls}$^{*}$ %
is unsatisfiable because 
the body of clause~{\tts \oneast} holds, in particular, 
for {\tts Xs=[]} and {\tts Ys=[Y]},
where {\tts Y} is any integer.

Algorithm~\Diff~works for the input set $\textit{RevCls}^{*} =$ \{{\tts \oneast\!\!,2,...,9}\}
exactly as described in Section~\ref{sec:Strategy} for the set
$\textit{RevCls} = \{\mbox{\tts 1,2,...,9}\}$,
except that clause {\tts \oneast}, instead of clause {\tts 1}, is folded
by using the definition:

\vspace{-2mm}

{\small
\begin{verbatim}
D1. new1(N0,N1,N2) :- append(Xs,Ys,Zs), reverse(Zs,Rs), len(Xs,N0), 
                      len(Ys,N1), len(Rs,N2).\end{verbatim}
}
\vspace{-2mm}

\noindent
thereby deriving the following clause, instead of clause {\tts 10}:

\vspace{1mm}
\noindent
\oneastf. {\tts  false :- N2=\textbackslash=N0-N1, new1(N0,N1,N2).}

\vspace{1mm}
\noindent
Thus, the output of Algorithm~\Diff~is \textit{TransfRevCls}$^{*}$ = 
\{{\tts \oneastf\!\!,}{\tts 15,17,18,19,20,21}\}
(for clauses {\tts 15,17,18,19,20,21} see Section~\ref{sec:IntroExample}).
The CHC solver Eldarica proves that \textit{TransfRevCls}$^{*}$ is an unsatisfiable set of clauses.

Now, in order to conclude that also the input set \textit{RevCls}$^{*}$ is unsatisfiable,
we apply Theorem~\ref{thm:completeness-AlgorithmR} and Proposition~\ref{prop:fun-preserv}.
We look at the transformation sequence constructed by Algorithm~\Diff~and we 
check that both Conditions~(E) and (F) are fulfilled.

Condition~(E) is fulfilled because each time we apply the folding rule,
the substitution $\vartheta$ is the identity (see, in particular, the folding
step for deriving clause~\oneastf\ above, and also the folding steps in Example~\ref{ex:rev-continued2}).

Now, let us check Condition~(F). %
We have that clauses~$C$ and~$\widehat{D}$ occurring in 
Definition~\ref{def:condF} are clauses~{\tts 14} 
and~{\tts D3},
respectively.  For the reader's convenience, we list them here:

\vspace{-2mm}

{\small
\begin{verbatim}
14. new1(N01,N1,N21) :- N01=N0+1, append(Xs,Ys,Zs), reverse(Zs,Rs), 
                        len(Xs,N0), len(Ys,N1), snoc(Rs,X,R1s), 
                        len(R1s,N21).
D3. diff(X,N2,N21) :- snoc(Rs,X,R1s), len(R1s,N21), len(Rs,N2).
\end{verbatim}
}

\vspace{-2mm}

\noindent
With reference to Definition~\ref{def:condF} (see Example~\ref{ex:rev-continued2}), we have:

\smallskip
\noindent
$F(X;Y)$\! =\! ({\tts  snoc(Rs,X,R1s)\!,\;len(R1s,N21)}), \ \ 
$X$={\tts  (Rs,X)}, \ \ $Y$={\tts  (R1s,N21)},

\noindent
$R(V;W)$\! =\! {\tts len(Rs,N2)}, \ \ $V$=\,{\tts (Rs)}, \ \ $W$=\,{\tts (N2)}.

\smallskip

\noindent 
Condition~(F1) requires that atom 
{\tts diff(X,N2,N21)}
be functional from {\tts (X,N2)} to~{\tts N21}
 with respect to
${\mathit{Definite}}(\mathit{RevCls}^{*})$.
In order to check this functionality, by Proposition~\ref{prop:fun-preserv},
it suffices to check the satisfiability of the set consisting of following clause:

\vspace{1mm}
\noindent
{\tts 22. false :- N21=\textbackslash=N22, diff(X,N2,N21), diff(X,N2,N22).}
\vspace{1mm}

\noindent
together with clauses {\tts 20} and {\tts 21}, which define {\tts diff} in 
${\mathit{Definite}}(\mathit{TransfRevCls}^{*})$. We recall them here:

\vspace{1mm}
\noindent
{\tts 20. diff(X,N0,N1) :- N0=0, N1=1.}

\noindent	
{\tts 21. diff(X,N0,N1) :- N0=N+1, N1=M+1, diff(X,N,M).}

\vspace{1mm}

\noindent
The CHC solver Eldarica proves the satisfiability of the set \{{\tts 20,21,22}\} 
of clauses by computing the following model:

\vspace{1mm}

{\tts diff(X,N2,N21) :- N21=N2+1, N2>=0.}

\vspace*{1mm}

\noindent
Thus, Condition (F1) is fulfilled. 
Condition~(F2) requires that 
\{{\tts R1s,N21}\} $\cap$ \{{\tts Rs}\}$\,=\! \emptyset$, and
Condition~(F3) requires that \{{\tts R1s}\} $\cap$ \{{\tts Xs,Ys,Zs,Rs}\}$\,=\! \emptyset$.
They are both fulfilled. 
Therefore, by Theorem~\ref{thm:completeness-AlgorithmR}, we conclude, as desired,
that the input
set \textit{RevCls}$^{*}\!$ of clauses is unsatisfiable.\hfill $\Box$

\vspace*{-3mm}

\noindent
\end{example}

%
%
%
%
%
%
%
%
%
%
%
%
%
%
%
%
%
%
%


\section{Experimental Evaluation}
\label{sec:Experiments}
\newcommand{\downsp}{$_{_{_{_{~}}}}$}
\newcommand{\upsp}{\rule{0mm}{3.5mm}}
\newcommand{\updownsp}{\rule{0mm}{3.5mm}$_{_{_{_{~}}}}$}

In this section we present the experimental evaluation we have performed
for assessing the effectiveness of our transformation-based CHC %
{satisfiability checking} method.

We have implemented our method %
in a tool called {\sc AdtRem}
and we have compared the results obtained by running our tool with those
obtained by running:
(i) the CVC4 SMT solver~\cite{ReK15} extended with inductive reasoning and lemma
generation,
(ii) the AdtInd solver~\cite{Ya&19}, which makes use of a syntax-guided
synthesis strategy~\cite{Al&13} for lemma generation, and
(iii) the Eldarica CHC solver~\cite{HoR18}, which combines
predicate abstraction~\cite{GrS97} with counterexample-guided
abstraction refinement~\cite{Cl&03}.
{\sc AdtRem} is available at {\small{\url{https://fmlab.unich.it/adtrem/}}}.

\subsection{The workflow of the {\sc AdtRem} tool}
\label{subsec:AdtRem-workflow}

Our {\sc AdtRem} tool implements the satisfiability checking method %
presented in Section~\ref{subsec:solvingCHCs} as follows.
First, {\sc AdtRem}
 makes use of the VeriMAP system~\cite{De&14b} to perform the steps
specified by Algorithm~\Diff. It takes as input a
set~$P_0$ of CHCs and, if it terminates, it
 produces as output a set~$P_n$ of CHCs that 
have basic types.
Then, in order to show the satisfiability of $P_{0}$, 
{\sc AdtRem} invokes the Eldarica CHC solver to
show the satisfiability of~$P_n$.

If Eldarica proves that $P_n$ is satisfiable,
then, by the soundness of
the transformation Algoritm~\Diff~(see Theorem~\ref{thm:soundness-AlgorithmR}),
$P_0$ is satisfiable and
{\sc AdtRem} returns the answer `\textit{sat}'.
In particular, the implementation of the \textit{Unfold} procedure
enforces Condition~(U) of
Theorem~\ref{thm:unsat-preserv}, which indeed ensures the 
soundness of Algoritm~\Diff.

If Eldarica proves that $P_n$ is unsatisfiable 
by constructing a counterexample CEX,
{\sc AdtRem} proceeds by checking whether or not Conditions (E)
and (F), which guarantee the completeness of Algoritm~\Diff, hold (see Theorem~\ref{thm:completeness-AlgorithmR}).
In particular, during the execution of Algoritm~\Diff,
{\sc AdtRem} checks Condition (E) when applying the
Folding Rule R3, and Conditions (F2) and (F3) when applying the Differential Replacement Rule~R7.
{\sc AdtRem} marks all clauses in $P_n$ derived by a sequence of transformation 
steps
where one of the Conditions (E), (F2), and (F3) is not satisfied.

Then, {\sc AdtRem} looks at the counterexample CEX constructed by Eldarica
to verify whether or not any instance of the marked clauses is used
for the construction of CEX.
If this is the case, it is not possible to establish the unsatisfiability of
$P_0$ from the unsatisfiability of~$P_n$ (as completeness of 
Algoritm~\Diff~may not hold) and hence {\sc AdtRem} returns
the answer `\textit{unknown}' as the result of the satisfiability 
check for $P_{0}$.

Otherwise, if {no instance of a marked clause}
is used in CEX, then {\sc AdtRem} proceeds by checking
Condition~(F1) of Definition~\ref{def:condF}. 
{Recall that, by Proposition~\ref{prop:fun-preserv}, Condition~(F1) can be checked 
by inspecting the clauses defining the differential 
predicates in $\textit{Definite}(P_n)$.}
To perform this check,
for each differential predicate $\textit{diff}_k$ introduced by 
Algorithm~\Diff~and occurring in CEX, {\sc AdtRem}
produces {a clause, call it $\textit{fun-diff}_k$,} of the
form:

\smallskip

$~\textit{false}\leftarrow$ $O_1\!\neq\!O_2,$ 
$\textit{diff}_k(I,O_1), \textit{diff}_k(I,O_2)$

\smallskip

\noindent
where $I$ and $O_i$, for $i=1,2,$ are the tuples of input and output
variables, respectively, of $\textit{diff}_k(I,O_i)$.
Then, {\sc AdtRem} runs Eldarica to check the satisfiability
of $\bigcup_k\{\textit{fun-diff}_k\}\cup D_n$,
where $D_n$ is the set of clauses defining $\textit{diff}_k$ in
$\textit{Definite}(P_n)$ (see Example~\ref{ex:func}).
Now, this satisfiability check may succeed (Case 1) or may not succeed (Case 2).
In Case 1, 
Condition~(F1) holds and, by the completeness of 
Algorithm~\Diff~(see Theorem~\ref{thm:completeness-AlgorithmR}), 
we have that $P_0$ is %
unsatisfiable and {\sc AdtRem} returns the answer `\textit{unsat}'. 
In Case 2, Condition~(F1) cannot be shown,
and {\sc AdtRem} returns the answer `\textit{unknown}' as the result of the
satisfiability of~$P_{0}$.
\subsection{Benchmark suite}
Our benchmark suite consists of 251 {verification} %
problems over inductively
defined data structures, such as lists, queues, heaps, and trees.
Out of these 251 problems, 168 of them
 refer to properties that hold (\textit{valid}
properties) and the remaining 83 refer to properties that do not hold
(\textit{invalid} properties).

The 168 problems specifying valid properties have been adapted from the
benchmark suite considered by Reynolds and Kuncak~\cite{ReK15}, and
originate from benchmarks used by various theorem provers, such as
CLAM~\cite{IrB96}, HipSpec~\cite{Cl&13}, IsaPlanner~\cite{DiF03,Jo&10},
and Leon~\cite{Su&11}.
In particular, we have considered Reynolds and Kuncak's `dtt'
encoding where natural numbers are represented using the built-in
SMT type {\it Int}.
From those problems we have discarded:
(i)~the ones that do not use ADTs, and
(ii)~the ones that cannot be directly represented in Horn clause format.
In order to make a comparison between our approach and Reynolds-Kuncak's
one on a level playing field, since {\sc AdtRem} supports neither higher
order functions nor user-provided lemmas,
(i)~we have replaced higher order functions by suitable first order instances,
and
(ii)~we have removed all auxiliary lemmas from the formalization of the  
problems.
We have also used LIA constraints, instead of the basic functions recursively
defined over natural numbers, such as the {\it plus} function and
{\it less-or-equal} relation, so that the solver can deal with them by using
the LIA theory.

The 83 problems specifying invalid properties have been obtained from those
specifying valid properties by either negating the properties or
modifying {the definitions} of 
the predicates on which the properties depend.

The benchmark suite is available at {\small{\url{https://fmlab.unich.it/adtrem/}}}.

\subsection{Experiments}

We have performed the following experiments.

\smallskip

\noindent\hangindent=4mm
1. We have run the `cvc4+ig' configuration of the CVC4 solver extended
with inductive reasoning and also the AdtInd solver on the 251 
{verification} %
problems in SMT-LIB format.

\noindent\hangindent=4mm
2. Then, we have translated each {verification} %
problem into
a set, call it $P_0$, of CHCs in the Prolog-like syntax supported by {\sc AdtRem} by using a
modified version of the SMT-LIB parser of the ProB system~\cite{LeB03}.
We have run Eldarica {v2.0.5}, which uses no induction-based
mechanism for handling ADTs, to check the satisfiability of the
SMT-LIB translation of $P_0$\footnote{We have also performed analogous experiments on the set of valid
properties by using Z3-SPACER~\cite{Ko&13}, instead of Eldarica,
as reported on an earlier version of this paper~\cite{De&20a}.
The results of those experiments, which we do not report here,
are very similar to those shown in	Table~\ref{tab:Exper-Results}.}\!.

\noindent\hangindent=4mm
3. Finally, we have run {\sc AdtRem} to check the satisfiability %
of $P_0$.
If {\sc AdtRem} returns `\textit{sat}', the property specified by $P_0$ 
is reported to be valid.
If {\sc AdtRem} returns `\textit{unsat}', the property is reported to be invalid.

\smallskip
\noindent
Experiments have been performed 
on an Intel Xeon CPU E5-2640 2.00GHz with 64GB RAM under CentOS
and for each problem we have set a timeout limit of 300 seconds.

\subsection{Evaluation of Results}

The results of our experiments 
are summarized in the following four tables.

\noindent\hangindent=3mm 
-- 
In Table~\ref{tab:Exper-Results} ({\it Solved problems})
we report the number of problems solved by each tool, also %
classified by the type of property (valid or invalid).
Columns 3--6 are labeled by the name of the tool
and the last column reports the results of the `Virtual Best' tool,
that is, the number of problems solved by at least one of the  
tools we have considered.

\noindent\hangindent=3mm 
--
In Table~\ref{tab:Exper-Results-unique} ({\it Uniquely solved problems}) we report,  for each tool, the number of
uniquely solved problems, that is, the number of problems solved by that tool and not solved 
by any of the other tools. By definition, there are no  problems uniquely solved
by the `Virtual Best' tool. %

\noindent\hangindent=3mm 
--
In order to assess the difficulty of the benchmark problems,
we have computed, for each problem, the number of tools that are able to solve it.
The results are reported in Table~\ref{tab:Exper-Results-howmany} 
({\it Benchmark difficulty}). 
\noindent\hangindent=3mm 
--
In Table~\ref{tab:Exper-Results-adtrem} ({\it  Termination 
of Algorithm~\Diff~and effectiveness of {\sc AdtRem}}) we report the number of problems
for which the ADT removal algorithm \Diff~terminates 
and the percentage of those problems solved by {\sc AdtRem}.
\vspace*{-4mm}
\begin{table}[!ht]
\begin{center}
\begin{tabular}{|@{\hspace{-2mm}}l@{\hspace{-3mm}}|r@{\hspace{1mm}}||r@{\hspace{2mm}}|r@{\hspace{2mm}}|r@{\hspace{2mm}}|r@{\hspace{2mm}}||r@{\hspace{1mm}}|}
	\hline
       {\parbox[center]{22mm}{\begin{center}Type of\\ properties\end{center}}} &
       {\parbox[top]{17mm}{\begin{center}Number of\\ problems\end{center}}} &
    {\parbox[top]{13mm}{\vspace*{-2mm}\center CVC4\!\\[-.5mm]
	 with\!\\[-.5mm] induction\!\\[2mm]}}			&
	  {~AdtInd} 								&
	   {~Eldarica} 								&
	   {{ {\textsc{ AdtRem}}} } & 
	  {~Virtual Best}
	  	   \\ \hline \hline
			{~~~~Valid} & 168 & 75 & 62 & 12 &  115 & 127\,\\
			{~~~~Invalid} &  83 & 0 &  1   & 75 &  61  & 83\,\\
			\hline
			{~~~~Total} & 251 & 75 &  63 & 87 & 176 & 210\,\\
			\hline
\end{tabular}
\vspace{2mm}
		\caption{{\small {\it Solved problems.}
		Number of problems solved by each tool.
		}}\label{tab:Exper-Results}
\end{center}
\vspace*{-8mm}
\end{table}
\vspace*{-2mm}

Table~\ref{tab:Exper-Results} shows that, on our benchmark,
{\sc AdtRem} compares favorably to all other tools we have considered.

On problems with valid properties, {\sc AdtRem} performs better
than solvers extended with inductive reasoning, such as CVC4 
and AdtInd.
{\sc AdtRem} performs better than those two tools also on problems
with invalid properties on which CVC4 and AdtInd show %
poor results.
This poor outcome may be due to the fact that those tools were designed
with the aim of proving theorems rather than finding counterexamples 
to non-theorems.

Table~\ref{tab:Exper-Results} also shows that the ADT removal performed by 
Algorithm~\Diff, implemented by {\sc AdtRem}, considerably increases
the  overall
effectiveness of the CHC solver Eldarica,
without the need for any inductive reasoning support.
In particular, Eldarica is able to solve {87} problems
out of  {251} {\em before} the application of Algorithm~\Diff~(see 
Column `Eldarica'),
while {\sc AdtRem} solves {176} problems by using Eldarica {\em after} the application of 
Algorithm~\Diff~(see 
Column `{\sc AdtRem}').

The gain in effectiveness is very high on problems with valid properties,
where Eldarica solves
{12} problems out of {168},
while {\sc AdtRem} solves {115} problems by applying Eldarica after the removal of ADTs.

On problems with invalid properties
Eldarica is already
very effective before the removal of ADTs
and is able to solve  {75} %
problems out of  {83},
whereas the number of problems solved by
{\sc AdtRem}
is only {61},
which are all the problems with invalid properties for which
 Algorithm~\Diff~terminates
(see Table~\ref{tab:Exper-Results-adtrem}).
Note, however, that by inspecting the detailed results of our experiments
(see {\small{\url{https://fmlab.unich.it/adtrem/}}}),
we have found 8 problems with invalid properties solved by {\sc AdtRem},
which are not solved by Eldarica before ADT removal.

\smallskip

\begin{table}[!ht]
\vspace*{-4mm}
\begin{center}
\begin{tabular}{|@{\hspace{-2mm}}l@{\hspace{-3mm}}|r@{\hspace{2mm}}||r@{\hspace{2mm}}|r@{\hspace{2mm}}|r@{\hspace{2mm}}|r@{\hspace{2mm}}||r@{\hspace{1mm}}|}
			\hline
    {\parbox[center]{22mm}{\begin{center}Type of\\ properties\end{center}}} &
    {\parbox[top]{16mm}{\begin{center}Number of\!\!\\ problems\!\!\end{center}}} &
	    {\parbox[top]{13mm}{\vspace*{-2mm}\center CVC4\\[-.5mm]
			with\\[-.5mm] induction\!\\[2mm]}}					 &
	  {~AdtInd} 											 &
	   {~Eldarica} 											&
	   {{ {\textsc{ AdtRem}}} } & 
	  {~Virtual Best}
	  	   \\ \hline \hline
			{~~~~Valid} & 50     & 3 &  2 &     1 &  44 & -\,\\
			{~~~~Invalid} &  29 &  0 &  0   &  22 &  7  & -\,\\
			\hline
			{~~~~Total} &  79 &    3 &   2 &    23 &  51 & -\,\\
			\hline
\end{tabular}
\vspace{2mm}
		\caption{{\small {\it Uniquely solved problems.}
		Number of uniquely solved problems by each tool.
		}}\label{tab:Exper-Results-unique}
\end{center}
\vspace*{-8mm}
\end{table}

Table~\ref{tab:Exper-Results-unique} shows further evidence that 
the overall performance of {\sc AdtRem} is higher than that of the other tools.
Indeed, the number of problems solved by {\sc AdtRem} only is larger
than the number of problems uniquely solved by any other tool.
In particular, {\sc AdtRem} uniquely solves 51 problems (44 with valid properties, 
7 with invalid properties) and Eldarica uniquely solves 23 problems (1  
with valid property, 22 with invalid properties).

\begin{table}[!ht]
\vspace*{-4mm}
\begin{center}
\begin{tabular}{|@{\hspace{-2mm}}l@{\hspace{-3mm}}|r@{\hspace{2mm}}||r@{\hspace{2mm}}|r@{\hspace{2mm}}|r@{\hspace{2mm}}|r@{\hspace{2mm}}|r@{\hspace{1mm}}|}
			\hline
       {\parbox[center]{21mm}{\begin{center}Type of\\ properties\end{center}}} &
       {\parbox[top]{16mm}{\begin{center}Number of\\ problems\end{center}}} &
	     {~Unsolved}  & 
	     {~\parbox[top]{13mm}{\begin{center}Uniquely \\ solved\end{center}}}    & 
	     {~\parbox[top]{14mm}{\begin{center}Solved by \\ two tools\end{center}}}    & 
	     {~\parbox[top]{15mm}{\begin{center}Solved by \\ three tools\end{center}}}    & 
	     {~\parbox[top]{14mm}{\begin{center}Solved by \\ all tools\end{center}}}     
	  	   \\ \hline \hline
			{~~~~Valid} & 168  & 41& 50&25&44&8\,\\
			{~~~~Invalid} & 83 &  0 & 29&54&0&0\,\\
			\hline
			{~~~~Total} &  251 &  41&79&79&44 & 8\,\\
			\hline
\end{tabular}
\vspace{2mm}
		\caption{{\small {\it Benchmark difficulty.}
		Number of problems grouped by the number of tools that were able to solve them.
		}}\label{tab:Exper-Results-howmany}

\end{center}
\vspace*{-8mm}
\end{table}

Table~\ref{tab:Exper-Results-howmany} illustrates the degree of
difficulty of the problems of the benchmark for the tools we have considered.
Indeed, out of the 251 problems in the benchmark, 41 of them are solved by no tool,
and only 8 problems are solved by all tools.

\begin{table}[!ht]
\begin{center}
\begin{tabular}{|@{\hspace{-2mm}}l@{\hspace{-3mm}}|r@{\hspace{2mm}}||r@{\hspace{2mm}}|r@{\hspace{2mm}}|}
			\hline
       {\parbox[center]{24mm}{\begin{center}Type of\\ properties\end{center}}} &
       {\parbox[top]{18mm}{\begin{center}Number of\\ problems\end{center}}} &
         {\parbox[top]{20mm}{\begin{center}   
             {Algorithm~\Diff \\~terminates}    
	                       \end{center}}}
	                       				 &
	   {\parbox[top]{26mm}{\begin{center}Percentage solved\\ by {\sc AdtRem}    
	                       \end{center}}}

	  	   \\ \hline \hline
			{~~~~Valid} & 168     & 117 &  $98\%$ \,\\
			{~~~~Invalid} &  83 &  61 &       $100\%$ \,\\
			\hline
			{~~~~Total} &  251 &    178 & $99\%$~\,\\
			\hline
\end{tabular}
\vspace{2mm}
		\caption{{\small {\it  Termination 
					of Algorithm~\Diff~and effectiveness of {\sc AdtRem}.}
		Number of problems for which the ADT removal algorithm terminates 
and the percentage of those problems solved by {\sc AdtRem}.
		}}\label{tab:Exper-Results-adtrem}

\end{center}
\vspace*{-8mm}
\end{table}

Table~\ref{tab:Exper-Results-adtrem} shows that Algorithm~\Diff~terminates
quite often and, whenever it terminates
Eldarica is able to check the satisfiability of the derived set of clauses 
in almost all cases.
Indeed, Algorithm~\Diff~terminates on {178} 
problems out of 251, and {\sc AdtRem}
solves 176 problems out of those {178}.

Note that the use of the Differential Replacement Rule R7 
{(which is a
novel rule we have used in this paper)}
has a positive effect on the termination of Algorithm~\Diff.
In order to assess this effect,
we have implemented
a modified version of the ADT removal algorithm~\Diff, 
called~{$\mathcal{R}^{\circ}$}\!, which
\textit{does not} introduce difference predicates.
Indeed, in {$\mathcal{R}^{\circ}$} the  {\it Diff-Introduce} case of the {\it Diff-Define-Fold}
Procedure of Figure~\ref{fig:Diff} 
is never executed.
We have applied Algorithm~{$\mathcal{R}^{\circ}$} to the 168 problems with 
valid properties
and it terminated only on {94} of them, while  
\Diff~terminates on 117 (see Table~\ref{tab:Exper-Results-adtrem}).
Details are given in {\small{\url{https://fmlab.unich.it/adtrem/}}}.

\smallskip

The effectiveness of the solvers 
that use induction we have considered, namely, CVC4 and AdtInd,
may depend on the supply of suitable lemmas
to be used for proving the main conjecture and also on the 
representation of the natural numbers.
Indeed, further experiments we have performed (see {\small{\url{https://fmlab.unich.it/adtrem/}}}) show that,
on problems with valid properties,
CVC4 and AdtInd solve 102 (instead of 75) and 64 (instead of 62) problems, respectively,
when auxiliary lemmas are added as extra axioms.
If, in addition, we consider the
`dti' encoding of the natural numbers~\footnote{In the `dti' encoding, natural numbers are represented using both the  built-in type  {\it Int}
	and the ADT inductive definition with the zero and successor constructors~\cite{ReK15}.}\!,
CVC4 and AdtInd solve 139 and 59 problems, respectively.
Our results show (see Table~\ref{tab:Exper-Results}) that in most cases {\sc AdtRem} needs neither
those extra axioms nor
that sophisticated encoding.

\smallskip

Finally, in Table~\ref{tab:examples} we report some problems with valid properties
solved by {\sc AdtRem}
that are not solved by  CVC4 with induction, nor by AdtInd, nor by  Eldarica.
CVC4 with induction and AdtInd are not able to solve those problems even 
if we take their formalizations with auxiliary lemmas and different encodings 
of the natural numbers.
In Table~\ref{tab:examples2} we report problems with valid properties
solved by CVC4 with induction, or by AdtInd, or by  Eldarica, 
that are not solved  by {\sc AdtRem}.

\begin{table}[!ht]
\vspace{-3mm}
\begin{center}
\begin{tabular}{|@{\hspace{1mm}}l@{\hspace{0mm}}|@{\hspace{2mm}}l@{\hspace{10mm}}|}

\hline

{\it Problem} & {\it Property}  \\ \hline\hline
CLAM goal4\hspace*{1.3cm} \updownsp &  $\forall x.\ \mathit{len (append(x,\!x)) =} ~2 \  \mathit{len(x)}$ \\ \hline
CLAM goal6\hspace*{1.3cm}\updownsp &  $\forall x,y.\ \mathit{len (rev(append(x,\!y))) = len(x) + len(y)}$ \\ \hline
IsaPlanner goal52 \updownsp &  $\forall n,l.\ \mathit{count(n,\!l) =  count(n, rev(l))}$  \\ \hline
IsaPlanner goal80 \updownsp &   $\forall l.\ \mathit{sorted (sort(l))}$   \\ \hline
\hline
\end{tabular}\vspace*{3mm}

\caption{{\small  Problems solved by {\sc AdtRem} and solved by neither {\rm CVC4  with induction nor AdtInd nor Eldarica.}}}
\label{tab:examples}
\end{center}
\vspace*{-6mm}
\end{table}
\vspace*{-8mm}

\begin{table}[!ht] %
\begin{center}

\begin{tabular}{|@{\hspace{1mm}}l@{\hspace{1mm}}|@{\hspace{1mm}}l@{\hspace{1mm}}|@{\hspace{1mm}}c@{\hspace{1mm}}|}
 \hline
{\it Problem} & {\it Property} & {\it Solved by}  \\ \hline\hline

CLAM goal18 &  $\forall x,\!y.\ \mathit{rev(append(rev(x),\! y))\!=\! append(rev(y),\!x)}$ &  {\parbox[top]{18mm}{\vspace*{-2mm}\center CVC4
	with\\[-.5mm] induction\!\\[1.2mm]}}	  \\ \hline
	
CLAM goal76 \upsp\downsp&   $\forall x,\!y.\ \mathit{append(revflat(x),\! y)\!=\! qrevaflat(x,\!y)}$  & AdtInd  \\ \hline \\[-4mm]

{\parbox[top]{22mm}{\upsp Leon amortize-\\queue-goal3\downsp}} &  $\forall x.\ \mathit{len(qrev(x))\!=\! len(x)}$ & Eldarica  \\ \hline

\end{tabular}\vspace*{3mm}

\caption{{\small Problems solved by  {\rm CVC4 with induction or AdtInd or Eldarica} and not solved by {\sc AdtRem}.
}}
\label{tab:examples2}
\end{center}
\vspace*{-6mm}
\end{table}
\vspace*{-5mm}
%

%

\section{Related Work and Conclusions}  
\label{sec:RelConcl}

This paper is an improved, extended version of a paper that appears in the Proceedings of
the 10th International Joint Conference on Automated Reasoning (\mbox{IJCAR}~2020)~\cite{De&20a}.  
The paper was also presented at the 35th Italian Conference 
on Computational Logic (CILC 2020).
Besides detailed proofs and examples, the main, new contribution
consists in addressing the problem of the {\em completeness}
of the transformations. In the IJCAR paper, we proved only a {\em soundness} property, 
ensuring that if the transformed clauses are satisfiable,
then so are the original clauses.
In this paper, we identify some sufficient conditions, 
related to the functionality of
the difference predicates introduced by the transformation algorithm~\Diff,
{which guarantee completeness (that is, the converse of soundness) 
stating that if the transformed clauses are unsatisfiable,
then so are the original clauses.} We have also extended our benchmark
and our implementation, and we have shown that those sufficient conditions 
are indeed satisfied
in many interesting, non trivial examples.
Finally, we have extended the comparison of our experimental results with
the ones obtained by the {AdtInd} solver that extends CHC solving 
with induction on the ADT structure and lemma generation~\cite{Ya&19}.

Inductive reasoning is supported, with different degrees of human
intervention, by many theorem provers, such as
ACL2~\cite{ACL2},
CLAM~\cite{IrB96},
Isabelle~\cite{IsaHOLBook02},
HipSpec~\cite{Cl&13},
Zeno~\cite{So&12}, and
PVS~\cite{Ow&92}.
The combination of inductive reasoning and SMT solving techniques
has been exploited  by many tools for program
verification~\cite{Lei12,PhW16,ReK15,Su&11,Un&17,Ya&19}.

Leino~\cite{Lei12} integrates inductive reasoning into the Dafny program verifier
by implementing a simple strategy that rewrites user-defined
properties that may benefit from induction
into proof obligations to be discharged by Z3~\cite{DeB08}.
The advantage of this technique is that it fully decouples inductive
reasoning from SMT solving.
Hence, no extensions to the SMT solver
are required.

In order to extend CVC4 with induction, Reynolds and Kuncak~\cite{ReK15}
also consider the rewriting of formulas that may take advantage from inductive
reasoning, but this is done dynamically, during the proof search.
This approach allows CVC4 to perform the rewritings lazily,
whenever new formulas are generated during the proof search,
and to use the partially solved
conjecture {for generating}
lemmas that may help in the proof of the initial conjecture.

The issue of generating suitable lemmas during inductive proofs has been also
addressed by Yang et al.~\cite{Ya&19} and implemented in~{AdtInd}.
In order to conjecture new lemmas, their algorithm makes use of a
syntax-guided synthesis strategy driven by a grammar, which is
dynamically generated from user-provided templates and the function
and predicate symbols encountered during the proof search.
The derived lemma~conjectures are then checked by the  SMT solver Z3.
In our approach, the introduction of
difference predicates can be viewed
as the transformational counterpart of lemma generation. 
When we prove the satisfiability of the transformed CHCs, we 
also compute models of the difference predicates, which indeed correspond to 
valid properties. These properties could also be added to the background theory and
used as axioms by solvers or theorem provers
(see the example in Section~\ref{sec:IntroExample}).
The experimental evaluation of Section~\ref{sec:Experiments},
shows that, on a small but non-trivial benchmark, our tool {\sc AdtRem} performs 
better than {AdtInd}.

In order to take full advantage of the efficiency of SMT solvers
in checking satisfiability of quantifier-free formulas over LIA,
ADTs, and finite sets, the Leon verification system~\cite{Su&11} implements an
SMT-based solving algorithm to check the satisfiability of formulas
involving recursively defined first-order functions.
The algorithm interleaves the unrolling of recursive functions
and the SMT solving of the formulas generated by the unrolling.
Leon can be used to prove properties of Scala programs with ADTs and
integrates with the Scala compiler and the SMT solver Z3.
A refined version of that algorithm, restricted to
{\it catamorphisms}, %
has been implemented into a solver-agnostic tool, called RADA~\cite{PhW16}.

In the context of CHCs, Unno et al.~\cite{Un&17} have
proposed a proof system that combines inductive theorem proving with
SMT solving. This approach  %
uses \mbox{Z3} with the PDR engine~\cite{HoB12}
 to discharge proof obligations generated by the proof system,
and has been applied %
to prove relational properties %
of OCaml  programs.

Recent work by Kostyukov et al.~\cite{KostyukovMF21}
proposes a method for proving the satisfiability of CHCs over ADTs
by computing 
models represented by finite tree automata. 
The tool based on this approach, called RegInv, is applied
to the problem of computing invariants of programs that manipulate ADTs
and it is shown to be more practical, in some cases, than
state-of-the-art CHC solvers that compute invariants
represented by first-order logic formulas.
{In our approach we do not provide an explicit representation
of the model of the initial, non-transformed CHCs, 
while the transformed clauses have basic types only,
and thus we need not extend the usual notion of a model.}

The distinctive feature of the
technique presented in this paper is that it does not make
use of any explicit inductive reasoning, but it follows a transformational
approach.
First, the problem of verifying the validity of a universally
quantified formula over ADTs is reduced to the problem of checking the
satisfiability of a set of CHCs.
Then, this set of CHCs is transformed with the aim of
deriving a set of CHCs  {over basic types (such as integers and booleans)
only}, whose satisfiability
implies the satisfiability of the original set.
In this way, the reasoning on ADTs is separated from
the reasoning on satisfiability, which can be performed
by specialized engines for CHCs on basic types
(e.g., Eldarica~\cite{HoR18} and Z3-SPACER~\cite{Ko&13}).
Some of the ideas presented here have been explored in
{previous work}~\cite{De&19b,De&19c},  %
but there
neither formal results nor an automated strategy were presented.

A key success factor of our technique is the introduction of
difference predicates, which, as already mentioned, can be viewed
as a form of {automatic} lemma generation.
Indeed, as shown in Section~\ref{sec:Experiments}, the use of difference predicates
greatly increases the power of CHC solving with respect to previous techniques based on
the transformational approach, which do not use difference predicates~\cite{De&18a}. 

As future work, we plan to apply our transformation-based verification technique
to more complex program properties, such as relational properties~\cite{De&16c,De&17c}.
Another important problem to study is the termination of the ADT removal algorithm~\Diff.
As already mentioned, due to the undecidability of the satisfiability problem for CHCs, there is no sound
and complete algorithm that always terminates and removes all ADTs.
Thus, we can tackle this problem in two ways: (1)~by identifying restricted classes of CHCs
for which a sound and complete ADT removal algorithm terminates, or 
(2)~by retaining soundness only and designing a suitable generalization strategy that guarantees 
the introduction of a finite set of new definitions during ADT removal. 
For Point~(2), it would be interesting to explore 
how to combine the introduction
of difference predicates with the most specific generalization techniques
used in automated theorem proving~\cite{Bun01} and logic program transformation~\cite{De&99,SoG95}.

\section*{Acknowledgements}
\label{sec:Acknowledgements}
We warmly thank Francesco Calimeri, Simona Perri, and Ester Zumpano 
for inviting us to submit this improved, extended version of the paper 
we presented at the 35th Italian Conference 
on Computational Logic (CILC 2020) held in Rende, Italy, 13--15 October 2020.
We also thank the anonymous reviewers 
for their helpful comments and suggestions.

\end{document}